\documentclass[11pt]{amsart}
\usepackage[T1]{fontenc}
\usepackage[utf8]{inputenc}
\usepackage{lmodern}
\usepackage{amsmath,amssymb,amsthm,mathtools}
\usepackage[margin=1in]{geometry}
\usepackage{microtype}
\usepackage{enumitem}
\usepackage{hyperref}
\usepackage{comment}
\usepackage{amsthm}   

\hypersetup{colorlinks=true,linkcolor=blue,citecolor=blue,urlcolor=blue}

\newcommand{\pa}{\partial}
\newcommand{\del}{\delta}

\newcommand{\mn}{\mu}

\newtheorem{theorem}{Theorem}[section]
\newtheorem{proposition}[theorem]{Proposition}
\newtheorem{lemma}[theorem]{Lemma}
\newtheorem{corollary}[theorem]{Corollary}
\theoremstyle{remark}
\newtheorem{remark}[theorem]{Remark}
\newtheorem{definition}[theorem]{Definition}

\begin{document}

\title[Gibbons--Tsarev type systems and Eventual identities]{Gibbons--Tsarev type systems and Eventual identities}
\author{Alessandro Arsie}
\address{A.~Arsie:\newline Department of Mathematics and Statistics, The University of Toledo,\newline 2801 W. Bancroft St., 43606 Toledo, OH, USA}
\email{alessandro.arsie@utoledo.edu}
\author{Paolo Lorenzoni}
\address{P.~Lorenzoni:\newline Dipartimento di Matematica e Applicazioni, Universit\`a di Milano-Bicocca, \newline
Via Roberto Cozzi 55, I-20125 Milano, Italy and INFN sezione di Milano-Bicocca}
\email{paolo.lorenzoni@unimib.it}
\author{Sara Perletti}
\address{S.~Perletti:\newline Scuola Internazionale Superiore di Studi Avanzati (SISSA), \newline
Via Bonomea 265, 34136 Trieste, Italy and INFN sezione di Trieste}
\email{sperlett@sissa.it}
\author{Karoline van Gemst}
\address{K.~van Gemst:\newline Dipartimento di Matematica e Applicazioni, Universit\`a di Milano-Bicocca, \newline
Via Roberto Cozzi 55, I-20125 Milano, Italy and INFN sezione di Milano-Bicocca}
\email{karoline.vangemst@unimib.it}
\date{}

\begin{abstract}
 We show that non-diagonalisable reductions of the dKP equation associated with regular non-semisimple $F$-manifolds cannot exist. The proof is based on the derivation and study of a generalised Gibbons--Tsarev system (gGT system) in the non-semisimple/non-diagonalisable setting. Remarkably, a class of solutions of the gGT system is defined by eventual identities of the underlying regular $F$-manifold structure. Furthermore, we use these vector fields to construct integrable reductions of Pavlov's  hydrodynamic chain. In this case, the  corresponding solutions are defined for any choice of Jordan block structure of the operator of multiplication by an eventual  identity.
\end{abstract}
\maketitle
\tableofcontents

\section{Introduction}
The existence of large families of reductions of the dispersionless KP hierarchy, or equivalently of the Benney momentum chain, governed by integrable diagonal systems of hydrodynamic type (or semihamiltonian systems \cite{ts91}) was discovered by Gibbons and Tsarev in their seminal paper \cite{GT1}. These \emph{diagonal reductions} play a central role in the modern theory of integrable dispersionless equations. In particular, the Gibbons--Tsarev mechanism became one of the fundamental tools in the programme, developed for instance in \cite{FK}, of characterising $(2+1)$-dimensional dispersionless integrable PDEs and integrable hydrodynamic chains by means of their hydrodynamic reductions.

In the Gibbons--Tsarev setting, an $n$-component diagonal reduction is described by functions
\[
\mu^1(r^1,\dots,r^n),\dots,\mu^n(r^1,\dots,r^n), \qquad
V(r^1,\dots,r^n),
\]
satisfying the system
\begin{subequations}\label{GT}
\begin{align}
\partial_i\mu^j&=\frac{\partial_i V}{\mu^i-\mu^j}, \label{gt1}\\
\partial_i\partial_jV&=\frac{2\partial_iV\partial_j V}{(\mu^i-\mu^j)^2}, \label{gt2}
\end{align}
\end{subequations}
for $i\neq j$. This is the classical \emph{Gibbons--Tsarev system}. Its solutions determine the characteristic velocities of the reduction together with a density of conservation law $V$. The same function $V$ is also the Egorov potential of the diagonal metric defining a Hamiltonian structure of the reduction; the existence of such Hamiltonian structures for generic reductions was established in \cite{GLR}. Since the system \eqref{GT} is Darboux compatible, its solutions depend on $n$ arbitrary functions of one variable.

A second remarkable discovery of Gibbons and Tsarev is that diagonal reductions admit an equivalent description in terms of Landau--Ginzburg superpotentials. These arise from a system of $n$ chordal Loewner equations, whose compatibility is exactly the Gibbons--Tsarev system, governing families of univalent conformal maps from the upper half-plane to slit domains in the upper half-plane \cite{GT2}. This picture links the hydrodynamic reduction method with complex-analytic and Frobenius-type structures.

The purpose of the present paper is to investigate the non-diagonalisable case. In several special examples it is easy to see that genuinely non-diagonalisable reductions of dKP do not occur, but, to the best of our knowledge, a general argument in the regular non-semisimple setting has not been written down explicitly. Our strategy is to translate the problem into the geometry of $F$-manifolds.

Indeed, reductions of dKP are naturally related to special semisimple $F$-manifolds with compatible connection \cite{LPR,LPVG1}. From this point of view, the semisimple case is precisely the diagonal one. It is therefore natural to ask whether regular non-semisimple $F$-manifolds may produce non-diagonalisable reductions. Starting from this geometric ansatz, one is led to a system generalising \eqref{GT}, namely the \emph{generalised Gibbons--Tsarev system}:
\begin{subequations}\label{GTgen}
\begin{align}
(\mathcal{L}_{\mu}c^i_{hj}-c^i_{js}c^s_{ht}[e,\mu]^t)\mu^h&=e(V)\delta^i_j-e^i\partial_jV, \label{gtg1}\\
(c^s_{ih}\partial_s\partial_j V-c^s_{jh}\partial_s\partial_i V)\mu^h&=
(c^h_{sj}\partial_i\mu^s-c^h_{si}\partial_j\mu^s)\partial_hV, \label{gtg2}
\end{align}
\end{subequations}
where $c^i_{jk}$ are the structure functions of the product, and $e^i$ are the components of the unit field of the $F$-manifold. In the semisimple case \eqref{GTgen} reduces to the standard Gibbons--Tsarev system.

It turns out that, in the regular non-semisimple case, \eqref{GTgen} is too rigid to admit non-trivial solutions of the type relevant for dKP. More precisely, let the local regular structure be described by Jordan blocks of sizes $m_1,\dots,m_r$. We prove that any solution necessarily satisfies
\[
\partial_{p(\alpha)}V=0, \qquad \text{for every } p\ge 2,
\]
inside each block. Hence $V$ depends only on the leading variable of each block. Once all nilpotent directions are eliminated, the surviving equations are exactly the classical Gibbons--Tsarev system on the first variables. In this precise sense there are no genuinely regular non-diagonalisable reductions of dKP.

Remarkably, the generalised Gibbons--Tsarev system admits a class of solutions related to the  geometry of the underlying $F$-manifold. Indeed, when $V$ is constant, we obtain solutions $(\mu^1,...,\mu^n)$ satisfying the condition
 \begin{equation}\label{EvId}
 \mathcal{L}_{\mu}c^i_{hj}-c^i_{js}c^s_{ht}[e,\mu]^t=0.
 \end{equation}
 In the theory of $F$-manifolds, vector fields satisfying this condition are called \emph{eventual identities}; they play an important role in the duality theory of $F$-manifolds. Thus, although the non-semisimple regular setting does not produce new dKP reductions, it still singles out a geometrically meaningful class of solutions of the generalised Gibbons--Tsarev equations.

In the final part of the paper we study a variant of the system \eqref{GTgen}  where
 the equations \eqref{gtg1} are replaced by \eqref{EvId}. We show that any  solution
   $(\mu^1,...,\mu^n,V)$ of the  new system defines a reduction of the form
   \[C^\alpha=C^\alpha(r^1,...,r^n),\qquad r^i_t=c^i_{jk}(\mu^j-Ve^j)r^k_x\]
of \emph{Pavlov's hydrodynamic chain} (see \cite{pavlov})
\begin{equation*}
C^{\alpha-1}_t=C^{\alpha}_x-C^0C^{\alpha-1}_x,\qquad \alpha\in\mathbb{Z},
\end{equation*}
for any choice of product structure. Integrable systems
 of hydrodynamic type of this form were introduced and studied in \cite{LM} from  the viewpoint of Nijenhuis geometry and in \cite{LPVG1}  from the viewpoint of the theory
  of  $F$-manifolds. The proof relies  on the fact that the operator of multiplication by an eventual identity  has vanishing  Nijenhuis torsion (see \cite{ALimrn}). Using this fact it is possible to define recursively all functions 
\[C^k(r^1,...,r^n),\qquad k=\pm 1, \pm 2, \dots,\]
starting from $C^0=V(r^1,...,r^n)$.

The paper is organised as follows. In Section 2.1 we recall basic definitions about regular $F$-manifolds and their canonical coordinates. In Section 2.2 we recall the definition of an eventual identity $\mathcal{E}$ and prove two alternative characterisations in the regular case: one in terms of a connection $\nabla$ uniquely defined by the eventual identity (Proposition \ref{NablaEvId}) and one in terms of the vanishing of Nijenhuis torsion of $\mathcal{E}\circ$  (Proposition \ref{NiEvId}). In Section 2.3, following \cite{LM}, we briefly introduce a class of integrable PDEs associated to a tensor field with vanishing Nijenhuis torsion or, in the context of regular $F$-manifolds, to eventual identities (see \cite{LPVG1}). In Section 3 we derive a system of PDEs (Proposition \ref{Prop_dKP_red}), generalising  the usual Gibbons--Tsarev system, whose solutions describe regular reductions of dKP. Section  4 is devoted to the analysis of such a  system. In particular, we prove that non-trivial non-diagonalisable reductions of dKP cannot exist, first in the case of a single Jordan block of arbitrary size (Section 4.1, Proposition \ref{non_ex_one_block}) and then in the general case (Section 4.2, Proposition \ref{non_ex_gen}). In Section 5 we show that reductions of Pavlov's hydrodynamic chain coincide  
 with the systems of PDEs associated to a tensor field $L$ with vanishing Nijenhuis torsion  described in Section 2.3 (Proposition \ref{Th_red_PHC}). In particular, due to Proposition \ref{NiEvId}, the special case where $L=\mathcal{E}\circ$ corresponds to eventual identities. Unlike the case of dKP, in the case of reductions of Pavlov's hydrodynamic chain eventual identities define non-trivial reductions for any choice of the Jordan block structure of $L$. In Section 5.1, we give some examples coming from \cite{LPVG1} and in the brief
  final section we draw some conclusions.
\newline
\newline
\noindent{\bf Acknowledgements}. The authors thank Antonio Moro for useful discussions. Paolo Lorenzoni, Sara Perletti  and Karoline van Gemst are supported by funds of  INFN   (Istituto Nazionale di Fisica Nucleare) by IS-CSN4 Mathematical Methods of Nonlinear Physics. The authors are thankful to GNFM (Gruppo Nazionale di Fisica Matematica) for supporting activities that contributed to the research reported in this paper.  

\section{$F$-manifolds and eventual identities}
\subsection{Regular $F$-manifolds}
$F$-manifolds were introduced by Hertling and Manin in \cite{HM} and are defined as follows.
\begin{definition}\label{defFmani}
An \emph{$F$-manifold} is a manifold $M$ equipped with
\begin{itemize}
\item[(i)] a commutative associative bilinear product  $\circ$  on the module of (local) vector fields, satisfying the \textit{Hertling--Manin condition}:
\begin{equation}\label{HM}
\mathcal{L}_{X \circ Y} \circ=X \circ (\mathcal{L}_Y \circ) +Y \circ (\mathcal{L}_X \circ ),
\end{equation}
for all local vector fields $X, Y$;
\newline
\item[(ii)] a distinguished vector field $e$ on $M$ such that 
\[e\circ X=X\] 
for all local vector fields $X$.
\end{itemize}
\end{definition}

$F$-manifolds are often equipped with additional structures. A particularly useful class of $F$-manifolds, for studying  systems of hydrodynamic type, is that of 
$F$-manifolds with compatible connection and flat unit.
\begin{definition}[\cite{LPR}]\label{Fmnfwcompatconn_flatunit}
An \emph{$F$-manifold with compatible connection and flat unit} is a manifold $M$ equipped with a commutative, associative, and unital product 
\[\circ : TM \times TM \rightarrow TM,\] 
with a torsionless connection $\nabla$ such that
\begin{itemize}
\item[(i)] For $R$ denoting the Riemann tensor of $\nabla$,
\begin{equation}\label{rc-intri}
Z\circ R(W,Y)(X)+W\circ R(Y,Z)(X)+ Y \circ R(Z,W)(X)=0,
\end{equation}
for all local vector fields $X$, $Y$, $Z$, $W$;
\item[(ii)] Let $c^k_{ij}$ denote the structure coefficients of the product $\circ$. Then
\begin{equation*}
     \nabla_j c^i_{hk} = \nabla_h c^{i}_{jk}.
\end{equation*}
\item[(iii)] The identity of the product $e$ is flat: $\nabla e=0$.
\end{itemize}
\end{definition}
$F$-manifolds with compatible connection naturally encode systems of hydrodynamic type:
\begin{equation*}
u^i_t=W^i_j(u)u^j_x,\qquad i\in \{1, \cdots, n\}.
\end{equation*}
In the diagonalisable case, any such system can be written in the form
\begin{equation}
u^i_t= (\mu \circ u_x)^i = c^i_{jk}\mu^j u^k_x,\qquad i\in \{1, \cdots, n\},
\label{eq:hydroprod}
\end{equation}
where $\mu$ is a local vector field  and $\circ$ is the product of a semisimple $F$-manifold. Thus, Riemann invariants coincide with Dubrovin's canonical coordinates in which the structure coefficients are given by
\[
c^i_{jk}=\delta^i_j\delta^i_k.
\]
Moreover, there is a unique compatible torsionless connection satisfying 
$d_{\nabla}(\mu \circ) = 0$ (and $\nabla e =0$), giving the structure of an $F$-manifold with compatible connection, where condition \eqref{rc-intri} encodes the integrability of the hydrodynamic-type system.

To pass from the semisimple to the regular non-semisimple setting, one needs the notion of an $F$-manifold with Euler vector field.
\begin{definition}\label{defFwithE}
An \emph{$F$-manifold with Euler vector field} is an $F$-manifold $M$ equipped with a vector field $E$ satisfying
\begin{equation}
	\mathcal{L}_E \circ=\circ.\label{Euler}
\end{equation}
\end{definition}

\begin{definition}[\cite{DH}]\label{DavidHertlingdef}
An $F$-manifold with Euler vector field $(M,\circ, e,E)$ is called \emph{regular} if for each $p\in M$ the matrix representing the endomorphism
$$L_p := E_p\circ : T_pM \to T_pM$$
has exactly one Jordan block for each distinct eigenvalue.
\end{definition}
If the Jordan normal form of $L=E\circ$ consists of $r$ blocks of sizes
$m_1,\dots,m_r$, we label coordinates by $j(\alpha)$, where
\[
\alpha\in\{1,\dots,r\}, \qquad j\in\{1,\dots,m_\alpha\}.
\]
Thus $u^{j(\alpha)}$ denotes the $j^{\text{th}}$ coordinate in the $\alpha^{\text{th}}$ Jordan block.
In other words, we may identify $u^{j(\alpha)}$ with a single index as
\[
j(\alpha)=m_1+\cdots+m_{\alpha-1}+j.
\]
The regular setting reduces to the semisimple one when $r=n$, in which case one retrieves systems of hydrodynamic type admitting a diagonal representation.

\begin{theorem}[\cite{DH}]\label{DavidHertlingth}
Let $(M, \circ, e, E)$ be a regular $F$-manifold of dimension $n \geq 2$ with an Euler vector field $E$. Furthermore, assume that locally around a point $p\in M$, the Jordan canonical form of the operator $L$ has $r$ Jordan blocks of sizes $m_1 , \dots, m_r$ with distinct eigenvalues. Then
there exists, locally around $p$, a distinguished system of coordinates $\{u^1, \dots, u^{m_1+\dots+m_r}\}$ such that
 \begin{align}
	e^{i(\alpha)}&=\delta^i_1,\qquad
	E^{i(\alpha)}=u^{i(\alpha)},\qquad
	c^{i(\alpha)}_{j(\beta)k(\gamma)}=\delta^\alpha_\beta\delta^\alpha_\gamma\delta^i_{j+k-1},\notag
\end{align}
for all $\alpha,\beta,\gamma\in\{1,\dots,r\}$ and  $i\in\{1,\dots,m_\alpha\}$, $j\in\{1,\dots,m_\beta\}$, ${k\in\{1,\dots,m_\gamma\}}$.
\end{theorem}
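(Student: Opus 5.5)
The plan is to split $TM$ according to the eigenvalues of $L=E\circ$, reduce to a single Jordan block, and then build the coordinates inside each block from the eigenvalue and nilpotent part of $L$.

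\textbf{Step 1: block decomposition.} Since $L$ has $r$ distinct eigenvalues near $p$, I decompose $T_pM=\bigoplus_\alpha \ker(L-\lambda_\alpha)^{m_\alpha}$. Because $L$ is regular, its commutant consists of polynomials in $L$, so the algebra $(T_pM,\circ)$ is $\bigoplus_\alpha \mathbb{C}[x]/(x^{m_\alpha})$, a product of local algebras. This gives idempotent vector fields $\pi_\alpha$ with $\pi_\alpha\circ\pi_\beta=\delta_{\alpha\beta}\pi_\alpha$ and $\sum\pi_\alpha=e$. Hertling--Manin's decomposition theorem (or Hertling's argument applied directly with \eqref{HM}) shows that the distributions $\pi_\alpha\circ TM$ are integrable and that $M$ splits locally as a product of $F$-manifolds $M_\alpha$, each with an irreducible, hence local, algebra. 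The Euler field splits accordingly, since $\mathcal{L}_E\circ=\circ$ forces $[E,\pi_\alpha]=0$, as $\mathcal{L}_E$ of an idempotent equation preserves the idempotents. So it suffices to treat a single Jordan block of size $m$.

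\textbf{Step 2: one block.} Here $L=\lambda e+N$, with $N$ nilpotent of rank $m-1$ and regular. By regularity, $e,E,E^2,\dots$ (powers taken in $\circ$) span the tangent space, so $TM$ is cyclic generated by $e$ and the product is $X\circ Y$ computed through polynomials in $N$. Let $u^1=\lambda$, the eigenvalue, which is $\frac1m\mathrm{tr}L$.

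\begin{itemize}
\item From \eqref{HM} and \eqref{Euler}, the vector fields $e$, $N$, $N^2,\dots,N^{m-1}$ (with $N^k:=(E-\lambda e)^{\circ k}$) commute pairwise.
\item They are linearly independent, so they give coordinates $u^1,\dots,u^m$ with $\partial_{u^k}=N^{k-1}$, up to relabelling, and hence $c^i_{jk}=\delta^i_{j+k-1}$.
\item I then choose the coordinates so that $E=u^k\partial_k$.
\end{itemize}

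The key computation for the commutation is Hertling's identity $[X\circ Y]$-type formula: $\mathcal{L}_{X\circ Y}\circ=X\circ\mathcal{L}_Y\circ+Y\circ\mathcal{L}_X\circ$, applied with $X,Y$ polynomials in $E$, together with $\mathcal{L}_e\circ=0$ and $\mathcal{L}_E\circ=\circ$. By induction on $k$, this gives $\mathcal{L}_{E^k}\circ=kE^{k-1}\circ$. From this one deduces $[E^a,E^b]=(b-a)E^{a+b-1}$-type relations, in the Witt algebra form. I then pass to the fields $\partial_k$ built from $N$ and $e$ by the binomial change of basis, checking that they commute. The main obstacle I expect is that $\lambda$ is not constant, so $N=E-\lambda e$ involves the function $\lambda$. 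One must show that $e(\lambda)=1$ and $N^k(\lambda)=0$ for $k\ge1$, equivalently that $d\lambda$ annihilates the nilpotent directions. I would prove this from the trace of $\mathcal{L}_X\circ$ identities: $\mathrm{tr}(\mathcal{L}_X L)$ computed two ways via \eqref{HM}, where nilpotency kills the traces.

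\textbf{Step 3: normalisation.} Once commuting fields $\partial_1=e,\partial_2,\dots$ exist, I write $E$ in these coordinates. $[e,E]=e$ gives $E^1=u^1+\mathrm{const}$, and the remaining components are fixed by the requirement $E\circ=L$ with $L\partial_k=u^1\partial_k+\partial_{k+1}$. By translating and rescaling the $u^k$ for $k\ge2$ using the residual freedom, I obtain $E^{k}=u^k$. Reassembling the blocks yields the stated formulas.
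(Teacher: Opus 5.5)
\textbf{There is a genuine gap in Step 2: the frame $e,N,N^{\circ2},\dots,N^{\circ(m-1)}$ does not commute once a block has size $m\ge 3$.} Since $\mathcal{L}_e\circ=0$, one has $(\mathcal{L}_{\lambda e}\circ)(X,Y)=X(\lambda)Y+Y(\lambda)X-(X\circ Y)(\lambda)\,e$. Hence $\mathcal{L}_N\circ=\circ-\mathcal{L}_{\lambda e}\circ$ and
\[
[N,N^{\circ2}]=(\mathcal{L}_N\circ)(N,N)=N^{\circ2}+N^{\circ2}(\lambda)\,e-2N(\lambda)\,N .
\]
This cannot vanish, because $e,N,N^{\circ2}$ are pointwise independent. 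In the normal form itself ($m=3$, $N=u^2\partial_2+u^3\partial_3$, $N^{\circ2}=(u^2)^2\partial_3$) a direct check gives $[N,N^{\circ2}]=N^{\circ2}$.

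This is just your own Witt relation transported to $N$. With $e(\lambda)=1$ and $N^{\circ k}(\lambda)=0$ for $k\ge1$ (a trace argument of the kind you sketch does give these, inductively in $k$), one finds $[e,N^{\circ a}]=0$ and $[N^{\circ a},N^{\circ b}]=(b-a)N^{\circ(a+b-1)}$. The underlying reason is that $N$ behaves like an Euler field on the nilpotent ideal: there $\mathcal{L}_N\circ=\circ$. By contrast, the target frame $\partial_k=\partial_2^{\circ(k-1)}$ requires a generator with $\mathcal{L}_{\partial_2}\circ=0$. Producing such a field means solving the PDE $\mathcal{L}_X\circ=0$, and your proposal has no step that does this; the binomial recombination only returns the $N^{\circ k}$.

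\textbf{Step 3 inherits the error.} In your putative coordinates $E=u^1\partial_1+\partial_2$. For $m\ge3$ this is not even an Euler field of the constant product. For $m=2$ (where $[e,N]=0$ does hold), reaching $E^2=u^2$ needs $u^2\mapsto e^{u^2}$, not a translation or rescaling.

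\textbf{A smaller error in Step 1.} Applying $\mathcal{L}_E$ to $\pi_\alpha\circ\pi_\alpha=\pi_\alpha$ gives $[E,\pi_\alpha]=-\pi_\alpha$, not $0$. The splitting of $E$ is still true, by Hertling's decomposition theorem, but not for the reason you state.

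\textbf{How to repair it with your own ingredients.} Replace ``commuting'' by ``constant structure constants''. The relations above say that $e,N,\dots,N^{\circ(m-1)}$ is a frame with constant, non-abelian structure constants. In this frame $\circ$ has the constant table $N^{\circ a}\circ N^{\circ b}=N^{\circ(a+b)}$ (zero for $a+b\ge m$), and $E=\lambda e+N$, with $\lambda$ fixed up to an additive constant by $e(\lambda)=1$ and $N^{\circ k}(\lambda)=0$. The model $e=\partial_1$, $\partial_i\circ\partial_j=\partial_{i+j-1}$, $E=\sum_i u^i\partial_i$, taken near a point with $u^2\neq0$, carries a frame with exactly the same relations. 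Cartan's local equivalence for frames with equal constant structure constants then gives a local diffeomorphism matching the two frames. This matches $e$ and $\circ$, and, after matching the value of $\lambda$ at one point, also $\lambda$ and $E$.

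The paper gives no proof of its own to compare with; it quotes \cite{DH}. So this argument, or an explicit construction of a field $X$ with $[e,X]=0$, $\mathcal{L}_X\circ=0$ generating the nilpotent ideal, is what your Steps 2--3 need to become.
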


In the regular non-diagonalisable case, we consider systems of hydrodynamic type of the form \eqref{eq:hydroprod}
where $c^i_{jk}$ now denote the structure coefficients of a regular $F$-manifold in the David--Hertling coordinates of Theorem \ref{DavidHertlingth}. In David--Hertling coordinates the defining matrix $W=\mu \circ$ takes a block-diagonal form, $W=\text{diag}(W_{(1)} , \dots, W_{(r)})$, $r\leq n$, whose generic $\alpha^{\text{th}}$ block, of size $m_\alpha$, is of the lower-triangular Toeplitz form
\begin{equation}\label{toeplitz}
	W_{(\alpha)}=
	\begin{bmatrix}
		\mu^{1(\alpha)} & 0 & \dots & 0\cr
		\mu^{2(\alpha)} & \mu^{1(\alpha)} & \dots & 0\cr
		\vdots & \ddots & \ddots & \vdots\cr
		\mu^{m_\alpha(\alpha)} & \dots & \mu^{2(\alpha)} & \mu^{1(\alpha)}
	\end{bmatrix}.
\end{equation}

 In \cite{LPVG1} it was shown that, under the mild technical assumption that ${\mu^{1(\alpha)}\neq \mu^{1(\beta)}}$ for distinct blocks $\alpha, \beta$,  and that $\mu^{2(\alpha)}\neq0$  for every $\alpha$ such that  $m_\alpha\geq2$, this determines a unique torsionless connection $\nabla$ satisfying
\[
\nabla e = 0, \qquad d_\nabla (\mu \circ)=0,
\]
thus endowing the manifold with the structure of an $F$-manifold with compatible connection and flat unit. More precisely, we have the following theorem.

\begin{theorem}[\cite{LPVG1}]\label{LPVG_mainTh}
	Let $\{\mu_{(0)},\dots,\mu_{(n-1)}\}$ be a set of linearly independent local vector fields on an $n$-dimensional regular $F$-manifold $(M,\circ,e)$. Let us assume that the corresponding flows
	\begin{align}
		{\bf  u}_{t_i}=\mu_{(i)}\circ {\bf u}_x,\qquad i\in\{0,\dots,n-1\},
		\notag
	\end{align}
	pairwise commute, and that there exists a local vector field $\mu\in\{\mu_{(0)},\dots,\mu_{(n-1)}\}$ such that ${\mu^{1(\alpha)}\neq \mu^{1(\beta)}}$ for ${\alpha\neq\beta}$, $\alpha,\beta\in\{1,\dots,r\}$, and $\mu^{2(\alpha)}\neq0$  for each ${\alpha\in\{1,\dots,r\}}$ with $m_\alpha\geq2$. Then an $F$-manifold $(M,\circ,e,\nabla)$ with compatible connection and flat unit is defined by the unique torsionless connection $\nabla$ realising $\nabla e=0$ and $d_\nabla(\mu\circ)=0$.
\end{theorem}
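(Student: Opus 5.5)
We plan to prove Theorem \ref{LPVG_mainTh} by constructing the connection explicitly in David--Hertling coordinates (Theorem \ref{DavidHertlingth}) and then checking the axioms of Definition \ref{Fmnfwcompatconn_flatunit}. In these coordinates $e=\partial_{1(\alpha)}$ summed over blocks and $c$ is constant, so $\nabla e=0$ reads $\Gamma^i_{j,1(\alpha)}=0$ for every $\alpha$. The condition $d_\nabla(\mu\circ)=0$, with $\nabla$ torsionless, reads
\[
\partial_j(c^i_{hk}\mu^h)-\partial_k(c^i_{hj}\mu^h)+\Gamma^i_{js}c^s_{hk}\mu^h-\Gamma^i_{ks}c^s_{hj}\mu^h=0 .
\]
Since $\mu\circ$ is block lower-triangular Toeplitz, this is a linear algebraic system for the Christoffel symbols.

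\textbf{Step 1: existence and uniqueness of $\nabla$.} We split the system by block pairs.

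For indices in two distinct blocks $\alpha\neq\beta$, the coefficients contain $\mu^{1(\alpha)}-\mu^{1(\beta)}$. This factor is invertible by hypothesis, so the mixed Christoffel symbols are determined, exactly as in the classical diagonal formula $\Gamma^i_{ij}=\partial_j\mu^i/(\mu^i-\mu^j)$.

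Within a single block of size $m\geq 2$, the symbols can be solved recursively in the Jordan index. The nilpotent part of $\mu\circ$ is multiplication by $\mu^{2(\alpha)}$ plus higher terms, so each step requires dividing by $\mu^{2(\alpha)}\neq0$. Symmetry $\Gamma^i_{jk}=\Gamma^i_{kj}$ together with $\Gamma^i_{j,1(\alpha)}=0$ fixes the remaining freedom.

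We expect this to yield a unique solution. Counting equations against unknowns should show the system is generically overdetermined. Consistency of the overdetermined part would follow from commutativity of the flows of $\mu$ with those of $\mu_{(0)},\dots,\mu_{(n-1)}$: the commuting condition gives Tsarev-type relations linking $\partial\mu$ to $\partial\mu_{(i)}$ through the same $\Gamma$.

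\textbf{Step 2: the compatibility axiom $\nabla_jc^i_{hk}=\nabla_hc^i_{jk}$.} This is equivalent to $d_\nabla(X\circ)=0$ for every $X$ in a frame. The plan is to show that $d_\nabla(\mu_{(i)}\circ)=0$ for all $i$, using the commutativity equations with the $\nabla$ just found. Since the $\mu_{(i)}$ are linearly independent, they span $TM$ pointwise. Writing $c^i_{hk}$ against this frame, the relation $d_\nabla(\mu_{(l)}\circ)=0$ for all $l$ gives symmetry of $\nabla c$, because the terms involving derivatives of the frame coefficients cancel by commutativity and associativity of $c$.

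\textbf{Step 3: the curvature condition \eqref{rc-intri}.} We obtain it from the integrability of the commuting flows. The compatibility of $d_\nabla(\mu_{(l)}\circ)=0$ applied twice gives $R(\cdot,\cdot)\circ\mu_{(l)}$-type identities. Cyclically summing and using that the $\mu_{(l)}$ span $TM$ then yields \eqref{rc-intri}, in analogy with the semisimple proof of \cite{LPR}.

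\textbf{Main obstacle.} The hardest step is the within-block analysis in Step 1 and its interplay with Step 2. One must show that the recursion in the Jordan index, driven by dividing by $\mu^{2(\alpha)}$, is consistent for blocks of arbitrary size. My first attempt would be induction on $m_\alpha$, treating the Toeplitz structure as a polynomial algebra $\mathbb{R}[\epsilon]/(\epsilon^{m})$ and writing $\mu\circ=\mu^{1}+\epsilon\,\mu^{2}(1+\dots)$. This reduces the equations to those of an algebra-valued diagonal case.
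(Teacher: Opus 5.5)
The paper does not prove this statement. It is quoted from \cite{LPVG1}, and the paper only records the ingredients: uniqueness of $\nabla$ from the genericity assumptions alone, the symmetry \eqref{symnablaprod}, and the fact that commuting flows satisfy $d_\nabla(\lambda\circ)=0$. Your overall architecture matches this, and your Step~3 is the right argument, but Steps~1 and~2 contain errors.

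\textbf{Step 1.} In David--Hertling coordinates, $\nabla e=0$ reads $\sum_{\alpha=1}^r\Gamma^i_{j\,1(\alpha)}=0$. It does not say $\Gamma^i_{j\,1(\alpha)}=0$ for each $\alpha$; the two coincide only when $r=1$. With your version Step~1 fails as soon as there are two blocks. In the semisimple case it forces $\Gamma\equiv0$, and then $d_\nabla(\mu\circ)=0$ gives $\partial_j\mu^i=0$ for $i\neq j$, which contradicts the diagonal formula for $\Gamma^i_{ij}$ that you yourself invoke.

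The linear system is also not overdetermined. After imposing torsion-freeness and $\nabla e=0$ there remain $n^2(n-1)/2$ unknowns, and $d_\nabla(\mu\circ)=0$ gives exactly $n^2(n-1)/2$ equations. The assumptions $\mu^{1(\alpha)}\neq\mu^{1(\beta)}$ and $\mu^{2(\alpha)}\neq0$ make this square system uniquely solvable. So existence and uniqueness of $\nabla$ use nothing about the other flows, and commutativity has no consistency to supply.

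\textbf{Step 2: the genuine gap.} It is false that $\nabla_jc^i_{hk}=\nabla_hc^i_{jk}$ is equivalent to $d_\nabla(X\circ)=0$ on a frame. Expanding, $d_\nabla(X\circ)=0$ reads $(\nabla_jc^i_{kl}-\nabla_kc^i_{jl})X^l=c^i_{jl}\nabla_kX^l-c^i_{kl}\nabla_jX^l$. Without independent control of $\nabla X$ the right-hand side need not vanish, so nothing about $\nabla c$ follows.

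The logic must run the other way.
\begin{enumerate}
\item The symmetry \eqref{symnablaprod} is a property of the unique connection itself, independent of any flow, and it is checked on the explicit Christoffel symbols. For a single $3\times3$ block it amounts to $\Gamma^1_{33}=0$, $\Gamma^2_{33}=\Gamma^1_{23}$ and $\Gamma^3_{33}+\Gamma^1_{22}=\Gamma^2_{23}$, all of which follow from $d_\nabla(\mu\circ)=0$.
\item Only with \eqref{symnablaprod} in hand does $d_\nabla(\lambda\circ)=0$ become the linear system $\nabla\lambda=(\nabla_e\lambda)\circ$, i.e.\ \eqref{sym3}.
\item The central lemma is then that every $\lambda$ whose flow commutes with that of $\mu$ solves \eqref{sym3}. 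This is the regular analogue of Tsarev's relation, and it is exactly where both the commutativity hypothesis and the genericity of $\mu$ enter. You only announce it (``the plan is to show''), with no mechanism.
\end{enumerate}

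\textbf{Step 3.} Once these facts are established, your Step~3 goes through. Compatibility of \eqref{sym3} gives $R(\partial_k,\partial_j)\lambda=\partial_j\circ\nabla_kA-\partial_k\circ\nabla_jA$ with $A=\nabla_e\lambda$; this step again needs \eqref{symnablaprod}. Multiplying by $\circ$ and summing cyclically kills the right-hand side by associativity, which gives \eqref{rc-intri} on the span of the $\mu_{(l)}$, hence everywhere. The within-block recursion you single out as the main obstacle is the routine part; the substance lies in items 1 and 3 above.
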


In particular, fixing the connection by the condition $d_\nabla(\mu\circ)=0$ implies that $d_\nabla(\lambda \circ)=0$ for any other local vector field $\lambda$ whose flow commutes with that defined by $\mu$. This fact will be important in the subsequent sections. Furthermore, in \cite{LPVG3} it was shown that, in the setting of Theorem \ref{LPVG_mainTh},  the system for densities of conservation laws is compatible and $V$ is a density of conservation law if and only if $V$ satisfies
\begin{equation*}
    (\mu\circ)^s_i \nabla_s (d V)_j = (\mu \circ)^s_j \nabla_s(dV)_i.
\end{equation*}
\subsection{Eventual  identities}
The Euler vector field is a special case of a more general object --- an \emph{eventual identity}. 
\begin{definition}[\cite{M}]
    An eventual identity $\mathcal{E}$ for an $F$-manifold $(M, \circ, e)$ is an invertible vector field $\mathcal{E}$ such that the product
  \begin{equation}
        X \star Y  = \mathcal{E}^{-1} \circ X \circ Y,
        \label{eq:evId}
    \end{equation}
    defines an $F$-manifold $(M, \star, \mathcal{E})$. 
\end{definition}

\begin{remark}
\label{rmk:evID}
    It was proved in \cite{DS} that an invertible vector field $\mathcal{E}$ is an eventual identity if and only if 
\begin{equation}
     \mathcal{L}_{\mathcal{E}}(\circ)(X,Y) = [e, \mathcal{E}] \circ X \circ Y.
        \label{eq:evIdG}
\end{equation}
Note that \eqref{eq:evIdG} is precisely the coordinate-free formulation of the condition \eqref{EvId}, associated to constant densities $V$. 
\end{remark}
Eventual identities were studied in the setting of regular $F$-manifolds in \cite{PS}. 

In the regular case, the condition \eqref{eq:evIdG} can also be written in an equivalent  form in terms of the torsionless connection $\nabla$ defined by the conditions
\[
\nabla e = 0, \qquad d_\nabla (\mathcal{E} \circ)=0,
\]
\begin{proposition}\label{NablaEvId}
Let $(M, \circ,e,E)$ be a regular $F$-manifold.  A vector field $\mathcal{E}$ satisfying the conditions
\[\mathcal{E}^{1(\alpha)}\neq\mathcal{E}^{1(\beta)},\qquad\alpha\neq\beta, \alpha,\beta\in\{1,\dots,r\},\]
and 
\[\mathcal{E}^{2(\alpha)}\neq0,\qquad {\alpha\in\{1,\dots,r\}, \, m_{\alpha}\geq 2},\]
is an eventual identity if and only if
\begin{equation*}
\nabla_{\mathcal{E}}\circ=0.
\end{equation*}
\end{proposition}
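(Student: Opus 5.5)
We work in the David--Hertling coordinates of Theorem \ref{DavidHertlingth} and compare the two conditions through the connection $\nabla$ fixed by $\nabla e=0$ and $d_\nabla(\mathcal{E}\circ)=0$. By Theorem \ref{LPVG_mainTh}, or rather by its construction in \cite{LPVG1}, this connection exists and is unique under the stated genericity assumptions on $\mathcal{E}^{1(\alpha)}$ and $\mathcal{E}^{2(\alpha)}$. The key step is to rewrite the Lie derivative in \eqref{eq:evIdG} using a torsionless connection. For any torsionless $\nabla$,
\[
(\mathcal{L}_{\mathcal{E}}\circ)(X,Y)=(\nabla_{\mathcal{E}}\circ)(X,Y)-\nabla_{X\circ Y}\mathcal{E}+X\circ\nabla_Y\mathcal{E}+Y\circ\nabla_X\mathcal{E},
\]
and likewise $[e,\mathcal{E}]=\nabla_e\mathcal{E}-\nabla_{\mathcal{E}}e=\nabla_e\mathcal{E}$, since $\nabla e=0$.

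Next we exploit $d_\nabla(\mathcal{E}\circ)=0$, which reads
\[
\nabla_X(\mathcal{E}\circ Y)-\nabla_Y(\mathcal{E}\circ X)-\mathcal{E}\circ[X,Y]=0,
\]
or in components $\nabla_j(c^i_{hk}\mathcal{E}^h)=\nabla_k(c^i_{hj}\mathcal{E}^h)$. Setting $Y=e$ and using $\nabla e=0$ gives
\[
\nabla_X\mathcal{E}=\nabla_e(\mathcal{E}\circ X)-\mathcal{E}\circ\nabla_e X+\ldots,
\]
and more usefully
\[
\nabla_X\mathcal{E}=(\nabla_e\circ)(\mathcal{E},X)+\nabla_e\mathcal{E}\circ X,
\]
possibly up to a term in $\nabla_X\circ$. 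We first prove the auxiliary facts $\nabla_e\circ=0$ and the symmetry $\nabla_X\circ(Y,Z)$ is symmetric in $X,Y$ for the connection of \cite{LPVG1}. These should follow, as in the semisimple case, from the explicit form of the Christoffel symbols in \cite{LPVG1}, or from the Hertling--Manin condition combined with $d_\nabla(\mathcal{E}\circ)=0$. Granting them, we get $\nabla_X\mathcal{E}=X\circ\nabla_e\mathcal{E}=X\circ[e,\mathcal{E}]$. Substituting this into the formula above gives
\[
(\mathcal{L}_{\mathcal{E}}\circ)(X,Y)-[e,\mathcal{E}]\circ X\circ Y=(\nabla_{\mathcal{E}}\circ)(X,Y)+(\text{terms in }\nabla\circ)\,,
\]
where the three $[e,\mathcal{E}]$ terms combine to leave exactly one $[e,\mathcal{E}]\circ X\circ Y$. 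Under the symmetry of $\nabla\circ$ the extra terms cancel, so the left side vanishes iff $\nabla_{\mathcal{E}}\circ=0$. By Remark \ref{rmk:evID} this is the claimed equivalence.

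The main obstacle is the identity $\nabla_X\mathcal{E}=X\circ[e,\mathcal{E}]$. It requires knowing that $\nabla\circ$ is totally symmetric and $\nabla_e\circ=0$ for this particular connection, and in the regular non-semisimple case this is not automatic: condition (ii) of Definition \ref{Fmnfwcompatconn_flatunit} is only guaranteed after integrability. If the abstract route fails, I would fall back on direct computation in David--Hertling coordinates. Block by block, the Toeplitz form \eqref{toeplitz} lets one write $\mathcal{L}_{\mathcal{E}}c-[e,\mathcal{E}]\circ c$ explicitly in terms of $\partial_{j(\beta)}\mathcal{E}^{i(\alpha)}$. One then compares this with $\mathcal{E}^s\Gamma$-terms using the Christoffel symbols of \cite{LPVG1}, treating off-block components (where $\mathcal{E}^{1(\alpha)}\neq\mathcal{E}^{1(\beta)}$ is used to divide) and in-block components (where $\mathcal{E}^{2(\alpha)}\neq0$ is used) separately.
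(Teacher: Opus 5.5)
Your route is the paper's: rewrite $\mathcal{L}_{\mathcal{E}}\circ$ through the torsionless $\nabla$, insert $\nabla_X\mathcal{E}=X\circ[e,\mathcal{E}]$ (the paper's step $\partial_j\mathcal{E}^i=-\Gamma^i_{js}\mathcal{E}^s+c^i_{js}[e,\mathcal{E}]^s$), and let associativity and commutativity collapse the three $[e,\mathcal{E}]$-terms. This gives exactly $(\mathcal{L}_{\mathcal{E}}\circ)(X,Y)-[e,\mathcal{E}]\circ X\circ Y=(\nabla_{\mathcal{E}}\circ)(X,Y)$, with no leftover ``terms in $\nabla\circ$''.

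The only weakness is that you leave the step you call the main obstacle ``granted'', and you misjudge what it requires.

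\begin{itemize}
\item Your formula $\nabla_X\mathcal{E}=(\nabla_e\circ)(\mathcal{E},X)+X\circ\nabla_e\mathcal{E}$ is already exact, with no extra $\nabla_X\circ$ term.
\item $\nabla_e\circ=0$ follows from $\nabla e=0$ and torsion-freeness alone. Your own Lie-derivative formula with $\mathcal{E}$ replaced by $e$ gives $\nabla_e\circ=\mathcal{L}_e\circ$, and this vanishes on every $F$-manifold (put $X=Y=e$ in \eqref{HM} to get $\mathcal{L}_e\circ=2\,\mathcal{L}_e\circ$).
\end{itemize}

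So you need no symmetry of $\nabla\circ$, no integrability, and no coordinate fallback. Since $\nabla_e\mathcal{E}=\nabla_{\mathcal{E}}e+[e,\mathcal{E}]=[e,\mathcal{E}]$, the identity is complete, and Remark \ref{rmk:evID} gives the equivalence.
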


\begin{proof}
This result follows from the identity
\begin{equation}\label{identity}
\nabla_{\mathcal{E}}c^i_{jh}=\mathcal{L}_{\mathcal{E}}c^i_{hj}-c^i_{js}c^s_{ht}[e,\mathcal{E}]^t.
\end{equation}  

Indeed, taking into account that, by definition of $\nabla$, we have
\begin{equation*}
\partial_j\mathcal{E}^i=-\Gamma^i_{js}\mathcal{E}^s+c^i_{js}\nabla_e\mathcal{E}^s=-\Gamma^i_{js}\mathcal{E}^s+c^i_{js}[e,\mathcal{E}]^s,
\end{equation*}
we obtain
\begin{eqnarray*}
\mathcal{L}_{\mathcal{E}}c^i_{hj}&=&\mathcal{E}^s\partial_sc^i_{hj}-c^s_{hj}\partial_s\mathcal{E}^i
+c^i_{sj}\partial_h\mathcal{E}^s+c^i_{hs}\partial_j\mathcal{E}^s\\
&=&\mathcal{E}^s\partial_sc^i_{hj}-c^s_{hj}(-\Gamma^i_{st}\mathcal{E}^t+c^i_{st}[e,\mathcal{E}]^t)
+c^i_{sj}(-\Gamma^s_{ht}\mathcal{E}^t+c^s_{ht}[e,\mathcal{E}]^t)+c^i_{hs}(-\Gamma^s_{jt}\mathcal{E}^t+c^s_{jt}[e,\mathcal{E}]^t)\\
&=&\nabla_{\mathcal{E}}c^i_{jh}+c^i_{js}c^s_{ht}[e,\mathcal{E}]^t.
\end{eqnarray*}
\end{proof}

The Nijenhuis torsion of a $(1,1)$ tensor field $L$ is given by
\begin{equation}
    N_L(X,Y) = [LX, LY] - L[LX, Y] - L[X, LY] + L^2[X, Y]
    \label{eq:Nijenhuis}
\end{equation}
where $X,Y$ are vector fields. 

\begin{theorem}
    Let $\mathcal{E}$ be a vector field on a regular $F$-manifold $(M, \circ, e, E)$ and let the Jordan canonical form of $\mathcal{E}\circ$ consists of $r$ Jordan blocks. Assume that  $\mathcal{E}^{1(\alpha)} \neq \mathcal{E}^{1(\beta)}$, for all $\alpha \neq \beta$, $\alpha, \beta \in \{1, \cdots, r\}$, and that $\mathcal{E}^{2(\alpha)} \neq 0$, for all $\alpha \in \{1, \cdots, r\}$ such that $m_{\alpha} \geq 2$.  Then, $\mathcal{E}$ is an eventual identity if $N_{\mathcal{E} \circ} = 0$. 
    \label{thm:NTevID}\end{theorem}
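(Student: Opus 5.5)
We work in the David--Hertling coordinates of Theorem \ref{DavidHertlingth}. Since the structure constants are block-diagonal and constant, everything reduces to a statement about each block separately. Write $L=\mathcal{E}\circ$; in these coordinates $L$ is the block lower-triangular Toeplitz matrix \eqref{toeplitz} with entries $\mathcal{E}^{j(\alpha)}$. Because the $c^i_{jk}$ are constant and $e=\partial_{1(\alpha)}$ summed over blocks, condition \eqref{eq:evIdG} becomes a concrete first-order PDE system for the components of $\mathcal{E}$:
\[
c^i_{sj}\partial_h\mathcal{E}^s+c^i_{hs}\partial_j\mathcal{E}^s-c^s_{hj}\partial_s\mathcal{E}^i=c^i_{js}c^s_{ht}(e(\mathcal{E}))^t .
\]
In matrix form this reads: $\partial_h L - (\text{multiplication by }\partial_h\mathcal{E})$ expressed through the product, i.e.\ $d\mathcal{E}\circ$-type identities. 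So the plan is to compute $N_L$ in coordinates,
\[
(N_L)^i_{jk}=L^s_j\partial_sL^i_k-L^s_k\partial_sL^i_j-L^i_s(\partial_jL^s_k-\partial_kL^s_j),
\]
and compare its vanishing with the displayed system.

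\textbf{Off-diagonal blocks.} Take $j$ in block $\beta$ and $k$ in block $\gamma\ne\beta$. Block-diagonality of $L$ gives a formula for the component of $N_L$ along block $\alpha$. The terms reduce to $(L_{(\beta)}\partial_{(\beta)}L_{(\alpha)}-L_{(\alpha)}\partial_{(\beta)}L_{(\alpha)})$-type expressions, where $\partial_{(\beta)}$ denotes derivatives in the $\beta$-block variables. Using $\mathcal{E}^{1(\alpha)}\neq\mathcal{E}^{1(\beta)}$, the difference of the two lower-triangular Toeplitz operators is invertible. This yields the following:
\begin{itemize}[label={}]
\item $\mathcal{E}^{(\alpha)}$ depends only on variables of block $\alpha$. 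This is the analogue of $\partial_j\lambda^i=0$ in the diagonal case.
\end{itemize}
This already matches the $\alpha\ne\beta$ part of \eqref{eq:evIdG}, which forces $\partial_{j(\beta)}\mathcal{E}^{i(\alpha)}=0$. In this step one needs some care with the order of the triangular induction: induct on the row index $i$ inside the block.

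\textbf{Single block.} Now fix one block of size $m$, with variables $u^1,\dots,u^m$. Here $L$ is the Toeplitz matrix of $\mathcal{E}=(\mathcal{E}^1,\dots,\mathcal{E}^m)$, and the product is truncated polynomial multiplication in a nilpotent $\epsilon$ with $\epsilon^m=0$. In this language the eventual identity condition \eqref{eq:evIdG} becomes the set of equations $\partial_{j}\mathcal{E}^{i}-\partial_{j+1}\mathcal{E}^{i+1}=$ (terms involving $\partial_1\mathcal{E}$), or equivalently a structured form of ``$d\mathcal{E}$ is compatible with the product up to $e(\mathcal{E})$''. I expect this is a triangular system, and I will derive it explicitly from \eqref{identity}-type expansion by using the constancy of $c$.

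The component $(N_L)^i_{jk}$ is then a polynomial expression in $\mathcal{E}^1,\dots,\mathcal{E}^m$ and their first derivatives. The plan is to show, by induction on $i$ (lower-triangularity), that the vanishing of $N_L$ implies the above equations row by row. The leading coefficients in this induction are powers of $\mathcal{E}^2$, which is why the hypothesis $\mathcal{E}^{2(\alpha)}\neq 0$ is needed: it lets us divide at each step, exactly as in Proposition \ref{NablaEvId} and Theorem \ref{LPVG_mainTh}.

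\textbf{Main obstacle and fallback.} The main obstacle is the bookkeeping in the single-block case: extracting from $N_L=0$ precisely the $m$-block equations of \eqref{eq:evIdG}, while controlling the Toeplitz convolution sums. If this direct route gets messy, I would try the following instead. First use $N_L=0$ and the nondegeneracy hypotheses to produce local coordinates in which $L$ has a Toeplitz normal form, namely the classical normal form for gl-regular Nijenhuis operators. Then check that in those coordinates $\mathcal{E}$ satisfies $\nabla_{\mathcal{E}}\circ=0$, and conclude by Proposition \ref{NablaEvId}.
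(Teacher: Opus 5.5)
\textbf{Gap: the single-block case.} Your off-diagonal step is sound. The $\alpha$-component of $N_L(\partial_{i(\alpha)},\partial_{j(\beta)})$ is a Sylvester-type equation for the matrix $D=(\partial_{j(\beta)}\mathcal{E}^{p(\alpha)})$, between two triangular Toeplitz operators with distinct eigenvalues $\mathcal{E}^{1(\alpha)}\neq\mathcal{E}^{1(\beta)}$, so $D=0$. This is a compact version of the double induction in Lemma \ref{lemma:evidsplit}.

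The single-block case is the real content of the theorem, and there you only assert that $N_L=0$ should yield the eventual-identity equations ``row by row'', with pivots that are powers of $\mathcal{E}^2$. You give no mechanism for this. The missing idea is to introduce the defect tensor $K(X,Y)=\mathcal{L}_{\mathcal{E}}(\circ)(X,Y)-[e,\mathcal{E}]\circ X\circ Y$. A direct computation on $\partial_i,\partial_j$ in David--Hertling coordinates then gives the identity
\[
N_L(X,Y)=K(LX,Y)-K(X,LY).
\]
With this, the argument closes as follows.
\begin{enumerate}
\item $N_L=0$ gives $K(LX,Y)=K(X,LY)$.
\item $K(e,\cdot)=0$ holds identically.
\item Hence $K(L^pe,L^qe)=K(e,L^{p+q}e)=0$ for all $p,q$.
\item $L^pe=\mathcal{E}^{\circ p}$, and $\{\mathcal{E}^{\circ p}\}_{0\le p<m}$ span the block when $\mathcal{E}^2\neq0$ (Lemma \ref{lemma:basis}). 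The change of basis to $\partial_1,\dots,\partial_m$ is triangular with diagonal $1,\mathcal{E}^2,\dots,(\mathcal{E}^2)^{m-1}$; this is where your ``powers of $\mathcal{E}^2$'' actually appear.
\item By $C^\infty$-bilinearity, $K\equiv0$.
\end{enumerate}

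Equivalently, in components,
\[
N_L(\partial_i,\partial_j)=\sum_{a\ge2}\mathcal{E}^a\bigl(K(\partial_{a+i-1},\partial_j)-K(\partial_i,\partial_{a+j-1})\bigr).
\]
So a descending induction on $i+j$ with pivot $\mathcal{E}^2$ also works. This shows the triangularity lies in the lower indices, not in the row index you proposed to induct on: the $\partial_t$-component of $N_L$ involves only $\partial_t$-components of $K$. Without this identity, the single-block step is not proved.

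\textbf{The fallback does not close the gap.} By \eqref{identity}, $\nabla_{\mathcal{E}}c^i_{jh}$ coincides with the components of $K$. So ``checking $\nabla_{\mathcal{E}}\circ=0$'' and invoking Proposition \ref{NablaEvId} restates the same problem rather than simplifying it.

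Moreover, coordinates putting the Nijenhuis operator $L$ into a Toeplitz normal form are adapted to $L$ alone, not to the unit $e$. Since $\circ$ is recovered from $L$ and $e$ via $p(L)e\circ q(L)e=p(L)q(L)e$, the product need not be constant in those coordinates. You would lose exactly the David--Hertling structure that your computations rely on. You would also have to check that the normal-form theorem you invoke applies under the present hypotheses.
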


 By regularity, we can work in David--Hertling canonical coordinates, in which we know, from the form of the $\circ$ in this frame,  that $\mathcal{E} \circ$ is block-diagonal with each block having a lower-triangular Toeplitz form. We will be using the convention that out-of-range indices give zero contribution throughout.  To prove Theorem \ref{thm:NTevID}, we rely on the following lemmas. 

\begin{lemma}
   \label{lemma:evidsplit}
   Let $\circ$ be the product associated to a regular $F$-manifold, and let $\mathcal{E}$ be a vector field for which the Jordan canonical form of $\mathcal{E} \circ $ consists of $r$ blocks. Let $\mathcal{E}^{1(\alpha)} \neq \mathcal{E}^{1(\beta)}$ for any distinct pair $\alpha, \beta \in \{1, \cdots, r\}$.   If $N_{\mathcal{E} \circ} = 0$, then in David--Hertling canonical coordinates, $\mathcal{E}$ has the form
   \begin{equation*}
       \mathcal{E} = \sum_{\alpha =1}^{r}\mathcal{E}_{(\alpha)}, \quad \text{where } \, \mathcal{E}_{(\alpha)} = \sum_{i=1}^{m_{\alpha}}\mathcal{E}^{i(\alpha)}(u^{1(\alpha)}, \cdots, u^{m_{\alpha}(\alpha)})\partial_{i(\alpha)}
   \end{equation*}
   where $m_{\alpha}$ is the size of block $\alpha$. 
\end{lemma}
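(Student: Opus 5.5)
We plan to use the standard fact that a $(1,1)$ tensor $L$ with $N_L=0$ preserves its own generalised eigenspace decomposition integrably. The sharper statement we need is this: whenever $L$ has distinct eigenvalues on different blocks, the eigenvalues $\lambda_\alpha$ satisfy $d\lambda_\alpha \circ L = \lambda_\alpha\, d\lambda_\alpha$ (Nijenhuis' lemma). Here $L=\mathcal{E}\circ$ is block-diagonal in David--Hertling coordinates. Its $\alpha$-th block is the lower-triangular Toeplitz matrix \eqref{toeplitz} with entries $\mathcal{E}^{i(\alpha)}$, so its eigenvalue on block $\alpha$ is $\lambda_\alpha=\mathcal{E}^{1(\alpha)}$. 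Rather than quote general theory, we will compute $N_L$ directly on coordinate fields. This gives the block-separation statement.

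First, write $L^{i}_{j}$ in DH coordinates. We have $L^{i(\alpha)}_{j(\beta)}=\delta^\alpha_\beta\,\mathcal{E}^{(i-j+1)(\alpha)}$, with the out-of-range convention. We then use the coordinate formula
\[
N^i_{jk}=L^s_j\partial_s L^i_k-L^s_k\partial_s L^i_j-L^i_s(\partial_j L^s_k-\partial_k L^s_j).
\]
We evaluate it with $j=j(\beta)$ and $k=k(\gamma)$, where $\gamma\neq\beta$, and with upper index $i=i(\gamma)$. Since $L^{s}_{j(\beta)}$ is nonzero only for $s$ in block $\beta$, and $L^{i(\gamma)}_{j(\beta)}=0$, the second term vanishes. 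The term $\partial_{k}L^{s}_{j(\beta)}$ contracted with $L^{i(\gamma)}_s$ also vanishes, because it requires $s$ in block $\gamma$ and in block $\beta$ simultaneously. What remains is
\[
0=N^{i(\gamma)}_{j(\beta)k(\gamma)}=\sum_{s}L^{s(\beta)}_{j(\beta)}\,\partial_{s(\beta)}\mathcal{E}^{(i-k+1)(\gamma)}-\sum_{s}\mathcal{E}^{(i-s+1)(\gamma)}\,\partial_{j(\beta)}\mathcal{E}^{(s-k+1)(\gamma)}.
\]
Fix $k=1$ and set $F_a:=\mathcal{E}^{a(\gamma)}$. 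For each fixed $i$, this says that the row vector $(\partial_{s(\beta)}F_i)_s$ multiplied by the Toeplitz block $L_{(\beta)}$ equals $\sum_{a}F_{i-a+1}\,\partial_{s(\beta)}F_a$. In other words, the linear map $D:=(\partial_{s(\beta)})$, applied to the vector $(F_1,\dots,F_{m_\gamma})$, intertwines right multiplication by $L_{(\beta)}$ with left multiplication by $L_{(\gamma)}$:
\[
L_{(\gamma)}\,X = X\,L_{(\beta)},\qquad X^{a}_{s}:=\partial_{s(\beta)}\mathcal{E}^{a(\gamma)}.
\]

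Next we need to see that the remaining $N$-components, those with general $k$, give exactly this Sylvester equation. The contributions with $k>1$ are shifts of the $k=1$ case by the Toeplitz structure, so they should be consistent with it. At minimum, the $k=1$ component already yields the equation above, and that suffices. Now $L_{(\gamma)}$ and $L_{(\beta)}$ have single eigenvalues $\mathcal{E}^{1(\gamma)}\neq\mathcal{E}^{1(\beta)}$, so their spectra are disjoint. By Sylvester's theorem the equation $L_{(\gamma)}X-XL_{(\beta)}=0$ then forces $X=0$, that is, $\partial_{s(\beta)}\mathcal{E}^{a(\gamma)}=0$ for all $s,a$ and all $\beta\neq\gamma$. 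This is precisely the claimed splitting $\mathcal{E}=\sum_\alpha\mathcal{E}_{(\alpha)}$ with $\mathcal{E}_{(\alpha)}$ depending only on $u^{1(\alpha)},\dots,u^{m_\alpha(\alpha)}$.

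The main obstacle is the index bookkeeping in the second step. We must check that the chosen components of $N$ reduce cleanly to $L_{(\gamma)}X=XL_{(\beta)}$. This needs care with the signs and with which index of $L$ is contracted, since the contraction is row versus column in the Toeplitz matrices. If that direct route is messy, the fallback is to first derive $d\lambda_\gamma\circ L=\lambda_\gamma d\lambda_\gamma$ for $\lambda_\gamma=\mathcal{E}^{1(\gamma)}$, using the lowest-index component. This gives $\partial_{s(\beta)}\mathcal{E}^{1(\gamma)}=0$ by invertibility of $L_{(\beta)}-\lambda_\gamma$. We would then proceed inductively in $a$, using triangularity to treat the higher $\mathcal{E}^{a(\gamma)}$ one at a time. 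At each stage the same invertibility of $L_{(\beta)}-\mathcal{E}^{1(\gamma)}\mathrm{Id}$ kills the new derivative.
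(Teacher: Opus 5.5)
Your argument is correct, and the index bookkeeping you flagged as the main obstacle does check out. For $k=1$ the two surviving terms are exactly $(XL_{(\beta)})^{i}_{j}$ and $(L_{(\gamma)}X)^{i}_{j}$, with $X^{a}_{s}=\partial_{s(\beta)}\mathcal{E}^{a(\gamma)}$. Letting $i$ and $j$ run over the two blocks already gives the full $m_\gamma\times m_\beta$ equation $L_{(\gamma)}X=XL_{(\beta)}$. So you need neither the components with $k>1$ nor the fallback through $d\lambda_\gamma\circ L=\lambda_\gamma\,d\lambda_\gamma$.

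The paper starts from the same equations. It takes the $\partial_{s(\alpha)}$-coefficient of $N_L(\partial_{i(\alpha)},\partial_{j(\beta)})$, which depends on $s,i$ only through $p=s-i+1$. The resulting identity for $(\mathcal{E}^{1(\alpha)}-\mathcal{E}^{1(\beta)})\partial_{j(\beta)}\mathcal{E}^{p(\alpha)}$ is, up to sign and relabelling, your $(p,j)$-component.

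The difference is in how the system is solved. Instead of quoting Sylvester's theorem, the paper runs a double induction, ascending in $p$ and descending in $j$, dividing by $\mathcal{E}^{1(\alpha)}-\mathcal{E}^{1(\beta)}$ at each step. This is precisely the triangular back-substitution proof of Sylvester's theorem for two lower-triangular Toeplitz blocks with distinct diagonals.

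The paper's version is self-contained and explicit. Yours is shorter and isolates what is really used. Repeating your computation with arbitrary $k$ gives $L_{(\gamma)}Y=YL_{(\beta)}$ for $Y^{i}_{j}=\partial_{j(\beta)}L^{i(\gamma)}_{k(\gamma)}$. Hence the splitting holds for any Nijenhuis operator that is block-diagonal in given coordinates with pairwise disjoint block spectra. The Toeplitz form serves only to read off the spectra and to identify the first column of each block with the components of $\mathcal{E}$.
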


\begin{proof}
    Consider \eqref{eq:Nijenhuis}, with $X = \partial_{i(\alpha)}$, $Y=\partial_{j(\beta)}$, and $L=\mathcal{E}\circ$, where $\mathcal{E} = \sum_{\alpha=1}^{r}\sum_{i=1}^{m_{\alpha}}\mathcal{E}^{i(\alpha)}\partial_{i(\alpha)}$,
    for $\alpha, \beta \in \{1, \cdots, r\}$ with $\alpha \neq \beta$. Note that, by the form of the structure coefficients in Theorem \ref{DavidHertlingth},
    \begin{equation*}
        LX = \mathcal{E} \circ X = \sum_{\gamma=1}^{r}\sum_{k=1}^{m_{\gamma}}\mathcal{E}^{k(\gamma)}\partial_{k(\gamma)} \circ \partial_{i(\alpha)}  = \sum_{k=1}^{m_{\alpha}-i+1}\mathcal{E}^{k(\alpha)}\partial_{(k+i-1)(\alpha)}. 
    \end{equation*}
    Thus, 
    \begin{align*}
        N_L(X,Y) = \, & \, \left[\sum_k \mathcal{E}^{k(\alpha)}\partial_{(k+i-1)(\alpha)}, \sum_{l}\mathcal{E}^{l(\beta)}\partial_{(l+j-1)(\beta)}\right] - L\left[\sum_{k}\mathcal{E}^{k(\alpha)}\partial_{(k+i-1)(\alpha)}, \partial_{j(\beta)}\right] - \\  \, & \, - L\left[\partial_{i(\alpha)}, \sum_{l}\mathcal{E}^{l(\beta)}\partial_{(l+j-1)(\beta)}\right]  = T_{\alpha}+T_{\beta}, 
    \end{align*}
    where
    \begin{equation*}
        T_{\alpha} = -\sum_{k,l}\mathcal{E}^{l(\beta)}\partial_{(l+j-1)(\beta)}(\mathcal{E}^{k(\alpha)})\partial_{(k+i-1)(\alpha)} + \sum_{k,m} \mathcal{E}^{m(\alpha)}\partial_{j(\beta)}(\mathcal{E}^{k(\alpha)})\partial_{(m+k+i-2)(\alpha)},
    \end{equation*}
    and $T_{\beta} = -T_{\alpha}$ after letting $\alpha \leftrightarrow \beta$, $i \leftrightarrow j$. 
    Note that $N_L(X,Y) = 0$ if and only if $T_{\alpha}$ and $T_{\beta}$  vanish independently.  Thus, it suffices to consider $T_{\alpha}$. The coefficient of $\partial_{s(\alpha)}$ in $T_{\alpha}$ is given by
    \begin{equation*}
        [\partial_{s(\alpha)}]T_{\alpha} = -\sum_{l=1}^{m_{\beta}-j+1}\mathcal{E}^{l(\beta)}\partial_{(l+j-1)(\beta)}(\mathcal{E}^{(s-i+1)(\alpha)})+\sum_{m=1}^{s-i+1}\mathcal{E}^{m(\alpha)}\partial_{j(\beta)}(\mathcal{E}^{(s-i+2-m)(\alpha)}),
    \end{equation*}
    and letting $p\coloneqq s-i+1$ gives
    \begin{equation}
        T_{\alpha} = 0 \implies (\mathcal{E}^{1(\alpha)}-\mathcal{E}^{1(\beta)})\partial_{j(\beta)}\mathcal{E}^{p(\alpha)} = \sum_{l=2}^{m_{\beta}-j+1}\mathcal{E}^{l(\beta)}\partial_{(l+j-1)(\beta)}\mathcal{E}^{p(\alpha)} - \sum_{m=2}^{p}\mathcal{E}^{m(\alpha)}\partial_{j(\beta)}\mathcal{E}^{(p+1-m)(\alpha)}. 
        \label{eq:splittemp1}
    \end{equation}
    We will now perform double induction over $p$ (ascending) and $j$ (descending). 
    \begin{itemize}
        \item Let $p=1$. Then \eqref{eq:splittemp1} becomes
        \begin{equation}
            (\mathcal{E}^{1(\alpha)}-\mathcal{E}^{1(\beta)})\partial_{j(\beta)} \mathcal{E}^{1(\alpha)}= \sum_{l=2}^{m_{\beta}-j+1}\mathcal{E}^{l(\beta)} \partial_{(l+j-1)(\beta)}\mathcal{E}^{1(\alpha)}.
            \label{eq:splittemp2}
        \end{equation}
        Let $j=m_{\beta}$.  Then \eqref{eq:splittemp2} gives
        \begin{equation*}
             (\mathcal{E}^{1(\alpha)}-\mathcal{E}^{1(\beta)})\partial_{m_{\beta}(\beta)} \mathcal{E}^{1(\alpha)}=0,
        \end{equation*}
        since $(l+j-1)(\beta) = (l-1+m_{\beta})(\beta) \geq (m_{\beta}+1)(\beta)$ (as $l \geq 2$), which implies
        \begin{equation*}
            \partial_{m_{\beta}(\beta)}\mathcal{E}^{1(\alpha)} = 0, 
        \end{equation*}
        by the non-coincidence assumption on the first components.
      
        Now let $\partial_{q(\beta)}\mathcal{E}^{1(\alpha)} = 0$, $\forall \, q>j$. Then since $l+j-1>j$, $\partial_{(l+j-1)(\beta)}\mathcal{E}^{1(\alpha)} = 0$, $\forall \, l \geq 2$ and 
        \begin{equation*}
            (\mathcal{E}^{1}(\alpha)-\mathcal{E}^{1(\beta)})\partial_{j(\beta)}\mathcal{E}^{1(\alpha)} = 0 \implies \partial_{j(\beta)}\mathcal{E}^{1(\alpha)} = 0, \quad \forall \, j, \, \alpha \neq \beta. 
        \end{equation*}
        \item Let us now assume that for $q<p$,
        \begin{equation*}
            \partial_{j(\beta)}\mathcal{E}^{q(\alpha)} = 0, \quad \forall \, j \in \{1, \cdots, m_{\beta}\}, \, \alpha \neq \beta.
        \end{equation*}
        Then \eqref{eq:splittemp1} gives
        \begin{equation}
            (\mathcal{E}^{1}(\alpha)-\mathcal{E}^{1(\beta)})\partial_{j(\beta)}\mathcal{E}^{p(\alpha)} = \sum_{l =2}^{m_{\beta}-j+1}\mathcal{E}^{l(\beta)}\partial_{(l+j-1)(\beta)}\mathcal{E}^{p(\alpha)}
            \label{eq:splittemp3}
        \end{equation}
       by the induction hypothesis as $p+1-m \leq p-1$. Let us again perform induction over (descending) $j$. 
      
       Let $j=m_{\beta}$. Then the right-hand-side of \eqref{eq:splittemp3} vanishes, since $l+j-1 = l+m_{\beta}-1 > m_{\beta}$, giving the result by the non-coincidence assumption on the first components. 
       
       Let us now assume that the statement holds for all $q>j$. Again, the right-hand-side of \eqref{eq:splittemp3} vanishes, since $l+j-1>j$ since $l \geq 2$, giving
       \begin{equation*}
           (\mathcal{E}^{1(\alpha)}-\mathcal{E}^{1(\beta)})\partial_{j(\beta)}\mathcal{E}^{p(\alpha)} =0 \implies \partial_{j(\beta)}\mathcal{E}^{p(\alpha)} = 0, 
       \end{equation*}
       as required. 
    \end{itemize}
    Therefore, each coefficient $\mathcal{E}^{p(\alpha)}$ depends only on the coordinates of the block $\alpha$, $\{u^{1(\alpha)}, \cdots, u^{m_{\alpha}(\alpha)}\}$, concluding the proof of the lemma. 
    
\end{proof}

\begin{lemma}
    \label{lemma:splittorsion}
For $\mathcal{E}$ as in Lemma \ref{lemma:evidsplit}, 
\begin{equation*}
    N_{\mathcal{E} \circ} = 0 \Leftrightarrow N_{\mathcal{E}_{(\alpha)}\circ} = 0, \quad \forall \, \alpha \in \{1, \cdots, r\}. 
\end{equation*}
\end{lemma}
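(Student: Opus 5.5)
Since $\mathcal{E}$ has the split form of Lemma \ref{lemma:evidsplit}, the operator $L=\mathcal{E}\circ$ is block-diagonal in David--Hertling coordinates. Each block $L_{(\alpha)}=\mathcal{E}_{(\alpha)}\circ$ is the lower-triangular Toeplitz matrix built from $\mathcal{E}^{1(\alpha)},\dots,\mathcal{E}^{m_\alpha(\alpha)}$, and these functions depend only on $u^{1(\alpha)},\dots,u^{m_\alpha(\alpha)}$. So, in the product chart, $L=\bigoplus_\alpha L_{(\alpha)}$, where $L_{(\alpha)}$ acts on $\mathrm{span}\{\partial_{i(\alpha)}\}$, annihilates the other blocks, and has coefficients depending only on its own block's variables. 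The plan is to compute $N_L$ on coordinate vector fields and compare it with $\sum_\alpha N_{L_{(\alpha)}}$. Since $N$ is tensorial, it suffices to evaluate on pairs $X=\partial_{i(\alpha)}$, $Y=\partial_{j(\beta)}$.

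\textbf{Case $\alpha=\beta$.} Here $LX=L_{(\alpha)}X$ and $LY=L_{(\alpha)}Y$ are vector fields tangent to block $\alpha$ whose coefficients depend only on block-$\alpha$ variables. All brackets appearing in \eqref{eq:Nijenhuis} therefore stay inside block $\alpha$. On block-$\alpha$ vectors, $L$ acts as $L_{(\alpha)}$, and coordinate fields commute, so $[X,Y]=0$. This gives
\[
N_L(\partial_{i(\alpha)},\partial_{j(\alpha)})=N_{L_{(\alpha)}}(\partial_{i(\alpha)},\partial_{j(\alpha)}).
\]
For $\gamma\neq\alpha$, we also have $N_{L_{(\gamma)}}(\partial_{i(\alpha)},\partial_{j(\alpha)})=0$, since $L_{(\gamma)}$ kills both arguments.

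\textbf{Case $\alpha\neq\beta$.} The term $[LX,LY]$ vanishes: $LX$ has coefficients in block-$\alpha$ variables and points in block-$\alpha$ directions, while $LY$ has the same property for $\beta$, so neither differentiates the other's coefficients. The term $[LX,Y]=-\partial_{j(\beta)}(\mathcal{E}^{k(\alpha)})\,\partial_{(k+i-1)(\alpha)}$ vanishes by Lemma \ref{lemma:evidsplit}, and $[X,LY]$ vanishes symmetrically. Finally $[X,Y]=0$. Hence $N_L(X,Y)=0$. This is essentially the computation $T_\alpha=T_\beta=0$ from the proof of Lemma \ref{lemma:evidsplit}, now with the split already known. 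Likewise each $N_{L_{(\gamma)}}(X,Y)$ vanishes: at most one argument survives $L_{(\gamma)}$, and the same brackets are zero.

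Combining the two cases gives the identity of tensors
\[
N_{\mathcal{E}\circ}=\sum_\alpha N_{\mathcal{E}_{(\alpha)}\circ},
\]
with the summands taking values in, and being supported on, complementary blocks. The sum therefore vanishes if and only if each summand vanishes, which proves both directions.

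The only delicate point is the cross-block case, where we need every mixed bracket to vanish. This is exactly where Lemma \ref{lemma:evidsplit} is used: it guarantees that no block's coefficients depend on another block's variables. I expect that step to be the only nontrivial one; the rest is bookkeeping.
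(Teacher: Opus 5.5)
Your proof is correct and follows the same route as the paper: evaluate $N_{\mathcal{E}\circ}$ on pairs of fields from blocks $\alpha,\beta$, show that all four terms vanish for $\alpha\neq\beta$ because of the splitting in Lemma~\ref{lemma:evidsplit}, and show that for $\alpha=\beta$ one recovers $N_{\mathcal{E}_{(\alpha)}\circ}$. Your use of coordinate fields together with tensoriality, and your explicit check that $N_{\mathcal{E}_{(\gamma)}\circ}$ vanishes away from block $\gamma$, is if anything a slightly cleaner phrasing of the paper's closing appeal to ``bilinearity''.
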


\begin{proof}
    Suppose 
    \begin{equation*}
        X = \sum_{i =1}^{m_{\alpha}}f^i(u^{1(\alpha)}, \cdots, u^{m_{\alpha}(\alpha)})\partial_{i(\alpha)}, \quad  Y = \sum_{j =1}^{m_{\beta}}g^j(u^{1(\beta)}, \cdots, u^{m_{\beta}(\beta)})\partial_{j(\beta)},
    \end{equation*}
    then 
    \begin{equation*}
        [X,Y] = \sum_{i,j}^{}\left(f^i \partial_{i(\alpha)}(g^j)\partial_{j(\beta)} - g^j \partial_{j(\beta)}(f^i)\partial_{i(\alpha)}\right) = 
           \sum_{i,j}^{}\left(f^i\partial_{i(\alpha)}(g^j)\partial_{j(\alpha)}-g^j\partial_{j(\alpha)}(f^i)\partial_{i(\alpha)}\right),
    \end{equation*}
    if $\alpha = \beta$, and zero otherwise. 
    Moreover, 
\begin{equation*}
    L(\partial_{i(\alpha)}) = \mathcal{E} \circ \partial_{i(\alpha)} = \sum_{\beta = 1}^{r}\mathcal{E}_{(\beta)}  \circ \partial_{i(\alpha)} = 
        \sum_{k=1}^{m_{\alpha}}\mathcal{E}^{k(\alpha)}\partial_{(k+i-1)(\alpha)}.
\end{equation*} 
Hence, if $\alpha \neq \beta$, all four terms in $N_L(X,Y)$ vanish, and if $\alpha = \beta$, then $N_L(X,Y) = N_{\mathcal{E}_{(\alpha)}\circ}(X,Y)$. By bilinearity, the result follows. 
\end{proof}

In the following lemma we will be dealing with a single fixed block, $\alpha \in \{1, \cdots, r\}$. For brevity we will write $i$ for $ i(\alpha)$, let $n \coloneqq m_{\alpha}$, and $\mathcal{E} = \sum_{i=1}^{n}\mathcal{E}^i \partial_i$. 
\begin{lemma}
\label{lemma:basis}
Let $\mathcal{E}^{\circ i}$ denote the $i$-fold product of $\mathcal{E}$ with itself under $\circ$.  If $\mathcal{E}^{2} \neq 0$, then
    \begin{equation*}
        \{\mathcal{E}^{\circ i}\}_{i=0, \cdots n-1}
    \end{equation*}
    spans the associated block as a vector space. 
\end{lemma}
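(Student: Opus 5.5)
Within a single block, the product in David--Hertling coordinates is $\partial_j\circ\partial_k=\partial_{j+k-1}$, with out-of-range indices giving zero. The plan is to identify the block with the truncated polynomial algebra $\mathbb{R}[x]/(x^{n})$ via $\partial_i\mapsto x^{i-1}$. Under this identification $\mathcal{E}$ corresponds to
\[
p(x)=\mathcal{E}^1+\mathcal{E}^2x+\dots+\mathcal{E}^nx^{n-1},
\]
and $\mathcal{E}^{\circ i}$ corresponds to $p(x)^i$, with $\mathcal{E}^{\circ 0}=e=\partial_1\mapsto 1$. The claim is then pointwise linear algebra: the $n$ vectors $1,p,\dots,p^{n-1}$ span $\mathbb{R}[x]/(x^n)$ whenever the coefficient of $x$ in $p$ is nonzero.

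The first step is to replace $p$ by $q=p-\mathcal{E}^1=\mathcal{E}^2x+\mathcal{E}^3x^2+\cdots$. By the binomial expansion, each $q^i$ is a linear combination of $1,p,\dots,p^i$ with constant coefficients (involving powers of $\mathcal{E}^1$). Conversely, each $p^i$ is a combination of $1,q,\dots,q^i$. Hence
\[
\operatorname{span}\{p^i\}_{i=0}^{n-1}=\operatorname{span}\{q^i\}_{i=0}^{n-1}.
\]
In terms of the operator $L=\mathcal{E}\circ$, this says that the nilpotent part $N=L-\mathcal{E}^1\,\mathrm{Id}$ generates the same span.

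The second step is to observe that
\[
q^i=(\mathcal{E}^2)^i x^i+(\text{terms of degree}>i).
\]
In the basis $\partial_1,\dots,\partial_n$, the coordinate vectors of $q^0,\dots,q^{n-1}$ therefore form a triangular matrix with diagonal entries $1,\mathcal{E}^2,(\mathcal{E}^2)^2,\dots,(\mathcal{E}^2)^{n-1}$. Its determinant is $(\mathcal{E}^2)^{n(n-1)/2}$, which is nonzero because $\mathcal{E}^2\neq0$. So the $q^i$ form a basis, and by the first step so do the $\mathcal{E}^{\circ i}$.

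There is no real obstacle in this argument. The only care needed is in the bookkeeping: checking that the identification respects the truncation convention (indices beyond $n$ vanish) and that the change of generators from $p$ to $q$ is invertible. The triangularity argument is where the hypothesis $\mathcal{E}^2\neq0$ enters. The result holds at each point, so the $\mathcal{E}^{\circ i}$ also form a local frame of the block.
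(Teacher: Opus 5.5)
Your proof is correct and takes the same approach as the paper. You subtract $\mathcal{E}^1 e$ to get the nilpotent part (your $q$, the paper's $F$), observe that its powers are triangular in $\partial_1,\dots,\partial_n$ with diagonal entries $(\mathcal{E}^2)^k$, and pass back to powers of $\mathcal{E}$. The identification with $\mathbb{R}[x]/(x^n)$ repackages the paper's induction on $F^{\circ k}$, and your two-sided binomial argument gives equality of spans where the paper uses only the one inclusion it needs.
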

\begin{proof}
    The lemma will follow from showing that for $F \coloneqq \mathcal{E}-\mathcal{E}^1 e = \sum_{a=2}^{n}\mathcal{E}^a \partial_a$
    \begin{equation*}
        \{F^{\circ k}\}_{k=0, \cdots n-1}, 
    \end{equation*}
    form a basis of the block as a vector space, together with the fact that each $F^{\circ k}$ is a polynomial in $\mathcal{E}$. 
Let us first prove the intermediate result. For each $k \in \{0, \cdots, n-1\}$, there exists smooth functions $c_{k,s}$ such that
\begin{equation*}
    F^{\circ k} = (\mathcal{E}^2)^k \partial_{k+1}+ \sum_{s=k+2}^{n}c_{k,s}\partial_s. 
\end{equation*}
    This follows by induction on $k$. The statement is immediate for $k=0,1$ since $e=\partial_1$ and by definition of $F$. Thus,  assume that the statement holds for some $k \in \{1, \cdots, n-2\}$. Multiplying both sides by $F$ gives
\begin{equation*}
    F^{\circ(k+1)} = \left((\mathcal{E}^2)^k \partial_{k+1}+ \sum_{s=k+2}^{n}c_{k,s}\partial_s\right)\circ \left(\sum_{a=2}^{n}\mathcal{E}^a \partial_a\right). 
\end{equation*}
Now, since
\begin{equation*}
    \partial_i \circ \partial_j = \partial_{i+j-1}
\end{equation*}
we have that 
\begin{equation*}
    (\mathcal{E}^2)^k \partial_{k+1} \circ \mathcal{E}^2 \partial_2 = (\mathcal{E}^2)^{k+1}\partial_{k+2}, 
\end{equation*}
and that every other term lies in the span of $\{\partial_{k+3}, \cdots \partial_n\}$, giving
\begin{equation*}
    F^{\circ(k+1)} = (\mathcal{E}^2)^{k+1} \partial_{k+2} + \sum_{s=k+3}^{n}c_{k+1, s}\partial_s, 
\end{equation*}
as required.  This shows that for each $k$, 
\begin{equation*}
    F^{\circ k} \in \operatorname{Span}\{\partial_{k+1}, \dots, \partial_n\}
\end{equation*}
with leading term $(\mathcal{E}^2)^k \partial_{k+1}$. It follows that the expansion of $\{F^{\circ i}\}_{i=0, \cdots, n-1}$ in the basis $\{\partial_i\}$ is upper triangular, with diagonal entries $1, \mathcal{E}^2, \cdots, (\mathcal{E}^2)^{n-1}$. Thus, under the assumption $\mathcal{E}^2 \neq 0$, all diagonal entries are non-zero, and the change-of-basis matrix is invertible. Finally, since $F=\mathcal{E}-\mathcal{E}^1 e$ and $e$ is the unit, each $F^{\circ k}$ is a polynomial in $\mathcal{E}$  which implies
\begin{equation*}
    \operatorname{Span}\{F^{\circ i}\}_{i=0, \cdots, n-1} \subseteq \operatorname{Span}\{\mathcal{E}^{\circ i}\}_{i=0, \cdots, n-1}. 
\end{equation*}
Since $\{F^{\circ i}\}_{i=0}^{n-1}$ forms a basis of the block,  $\operatorname{Span}\{\mathcal{E}^{\circ i}\}_{i=0}^{n-1}$ contains the block, and therefore spans the block.
\end{proof}

We now proceed to prove Theorem \ref{thm:NTevID}. 

\begin{proof}
    By Lemma \ref{lemma:splittorsion}, it is sufficient to consider a single Jordan block $\alpha \in \{1, \cdots, r\}$. As above, we will in this proof let $i \equiv i(\alpha)$, $n \coloneqq m_{\alpha}$, and $\mathcal{E} = \sum_{i=1}^{n}\mathcal{E}^i \partial_i$. 

Define
\begin{equation*}
    K(X,Y) \coloneqq \mathcal{L}_{\mathcal{E}}(\circ)(X,Y) - [e,\mathcal{E}] \circ X \circ Y. 
\end{equation*}
Then, by Remark \ref{rmk:evID}, $\mathcal{E}$ is an eventual identity if and only if
\begin{equation*}
    K(X,Y) = 0, \quad \forall X, Y. 
\end{equation*}
    Note that, 
    \begin{align*}
        \mathcal{L}_{\mathcal{E}}(\circ)(\partial_i, \partial_j) \, & \, = [\mathcal{E}, \partial_i \circ \partial_j] - [\mathcal{E}, \partial_i]\circ \partial_j - \partial_i \circ [\mathcal{E}, \partial_j]\\
        \, & \, = -\sum_{t}^{}\partial_{i+j-1}(\mathcal{E}^t)\partial_t + \sum_{a}^{}\partial_i(\mathcal{E}^a)\partial_{a+j-1}+\sum_{a}^{}\partial_j(\mathcal{E}^a)\partial_{a+i-1}\\
        \, & \, = \sum_{t=1}^{n} \left(-\partial_{i+j-1}\mathcal{E}^t + \partial_i \mathcal{E}^{t-j+1}+\partial_j \mathcal{E}^{t-i+1} \right) \partial_t. 
    \end{align*}
    Moreover, 
    \begin{equation*}
        [e, \mathcal{E}] \circ \partial_i \circ \partial_j = \sum_{a=1}^n\partial_1(\mathcal{E}^a) \partial_{a+i+j-2} = \sum_{t=1}^{n}\partial_1(\mathcal{E}^{t-i-j+2})\partial_t,
    \end{equation*}
    where the second equality follows after re-indexing $t=a+i+j-2$. 
    Thus,
    \begin{equation}
        K(\partial_i, \partial_j) = \sum_{t=1}^{n}\left(-\partial_{i+j-1}\mathcal{E}^{t} + \partial_i \mathcal{E}^{t-j+1} + \partial_j \mathcal{E}^{t-i+1} - \partial_1 \mathcal{E}^{t-i-j+2}\right)\partial_t.
        \label{eq:Nevidtemp1}
    \end{equation}
Let us show that 
\begin{equation*}
    N_L(X,Y) = K(LX, Y) - K(X, LY). 
\end{equation*}
Firstly, 
\begin{equation*}
    K(L \partial_i, \partial_j) - K(\partial_i, L \partial_j) = \sum_{a=1}^{n}\mathcal{E}^a(K(\partial_{a+i-1}, \partial_j) - K(\partial_i, \partial_{a+j-1})),
    \end{equation*}
    which means that, using \eqref{eq:Nevidtemp1}, $ [\partial_t](K(L\partial_i, \partial_j) - K(\partial_i, L \partial_j))$ is given by
\begin{equation}
 \sum_{a=1}^{n}\mathcal{E}^a\left(\partial_{a+i-1} \mathcal{E}^{t-j+1} - \partial_{a+j-1}\mathcal{E}^{t-i+1} + \partial_j \mathcal{E}^{t-a-i+2} - \partial_i \mathcal{E}^{t-a-j+2}\right).
    \label{eq:Nevidtemp2}
\end{equation}
On the other hand, since $[\partial_i, \partial_j] = 0$, 
\begin{equation*}
    N_L(\partial_i, \partial_j) = [L\partial_i, L\partial_j] - L[L\partial_i, \partial_j] - L[\partial_i, L \partial_j], 
\end{equation*}
where 
$[L\partial_i, L\partial_j]$ reads
    \begin{equation*}
        \sum_{a,b}^{}\left(\mathcal{E}^a \partial_{a+i-1}(\mathcal{E}^b)\partial_{b+j-1} - \mathcal{E}^b\partial_{b+j-1}(\mathcal{E}^a)\partial_{a+i-1}\right),
    \end{equation*}
    which gives (by letting $b+j-1=t$, and $a+i-1=t$ in the first and second sums respectively, and collecting)
    \begin{equation*}
        [\partial_t][L\partial_i, L\partial_j] = \sum_{s=1}^{n}\mathcal{E}
^s\left(\partial_{s+i-1}\mathcal{E}^{t-j+1}-\partial_{s+j-1}\mathcal{E}^{t-i+1}\right).    \end{equation*}
    Similarly,
    \begin{equation*}
        [\partial_t](-L[L\partial_{i}, \partial_j]) = \sum_{s=1}^{n}\mathcal{E}^s \partial_j\mathcal{E}^{t-s-i+2}, \quad [\partial_t](-L[\partial_{i}, L\partial_j]) = -\sum_{s=1}^{n}\mathcal{E}^s \partial_i\mathcal{E}^{t-s-j+2}.
    \end{equation*}
Combining gives precisely \eqref{eq:Nevidtemp2} (after renaming $s \mapsto a$). Hence, 
    \begin{equation*}
    N_L(X,Y) = K(LX, Y) - K(X, LY),
\end{equation*}
    and, if $N_L(X,Y) = 0$ we have
    \begin{equation}
        K(LX, Y) = K(X, LY),  \quad  \forall \, X, Y.
        \label{eq:Nevidtemp3}
    \end{equation}

We will now use this fact together with Lemma \ref{lemma:basis} to complete the proof of the theorem.

Note that $L^p e = \mathcal{E}^{\circ p}$. This follows with a brief induction over $p$. For $p=0$ we have $L^0 e = e$ trivially. Assuming the identity holds for $p$, then
\begin{equation*}
    L^{p+1}e  = L(L^p e) = \mathcal{E} \circ \mathcal{E}^{\circ p} = \mathcal{E}^{\circ (p+1)}. 
\end{equation*}
Now, let $p,q \geq 0$. By repeatedly applying \eqref{eq:Nevidtemp3} we obtain a chain of equalities
\begin{equation*}
    K(L^pe, L^qe) = K(L^{p-1}e, L^{q+1}e) = \cdots = K(e, L^{p+q}e) = 0.
\end{equation*}
The final equality follows from the fact that 
 \begin{align*}
       \mathcal{L}_{\mathcal{E}}(\circ)(e,X)  \, & \, = [\mathcal{E}, e \circ X] - [\mathcal{E}, e]\circ X - e \circ [\mathcal{E},X]  = [\mathcal{E},X] + [e, \mathcal{E}] \circ X - [\mathcal{E},X] = [e, \mathcal{E}] \circ X \\ \, & \, = [e, \mathcal{E}] \circ e \circ X \implies K(e,X) = 0.
    \end{align*}
Therefore, 
\begin{equation*}
    K(L^pe, L^q e) = 0 \quad \forall \, p,q \geq 0 \implies K(\mathcal{E}^{\circ p}, \mathcal{E}^{\circ q}) = 0, \quad \forall \, p,q \geq 0. 
\end{equation*}
Since $\{\mathcal{E}^{\circ i}\}_{i=0, \cdots, n-1}$ spans the block by Lemma \ref{lemma:basis}, the theorem holds by the bilinearity of $K$ and its vanishing on the spanning set in both arguments. 
\end{proof}

From the results of \cite{ALimrn}, we have the following proposition.
\begin{proposition}\label{NiEvId}
    Let $L \equiv \mathcal{E} \circ$, where $\circ$ is the product associated to a regular $F$-manifold $(M, \circ, e, E)$,  $\mathcal{E}$ is a vector field, and  $\mathcal{E}\circ$ has Jordan canonical form consisting of $r$ blocks in David--Hertling coordinates. Assume that 
    \begin{equation*}
        \mathcal{E}^{1(\alpha)} \neq \mathcal{E}^{1(\beta)}, \quad \text{and} \quad \mathcal{E}^{2(\alpha)} \neq 0, 
    \end{equation*}
    for $\alpha, \beta \in \{1, \cdots, r\}$ such that  $\alpha \neq \beta$, and for the latter condition $m_{\alpha} \geq 2$. Then $N_L = 0$ if and only if $\mathcal{E}$ is an eventual identity. 
\end{proposition}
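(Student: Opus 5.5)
The forward implication ($N_L=0\Rightarrow$ $\mathcal{E}$ is an eventual identity) is exactly Theorem \ref{thm:NTevID}, whose hypotheses coincide with those of the proposition. So the plan is to prove only the converse: if $\mathcal{E}$ is an eventual identity then $N_{\mathcal{E}\circ}=0$. The tool is the identity
\[
N_L(X,Y)=K(LX,Y)-K(X,LY),\qquad K(X,Y)=\mathcal{L}_{\mathcal{E}}(\circ)(X,Y)-[e,\mathcal{E}]\circ X\circ Y .
\]
Given this identity, the converse is immediate. By Remark \ref{rmk:evID}, $\mathcal{E}$ is an eventual identity if and only if $K\equiv 0$, and then the right-hand side vanishes for all $X,Y$.

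The difficulty is that in the proof of Theorem \ref{thm:NTevID} this identity was checked only inside a single Jordan block. That was legitimate there because Lemma \ref{lemma:evidsplit} and Lemma \ref{lemma:splittorsion} had already reduced everything to one block, but those lemmas assume $N_L=0$. For the converse I will therefore verify the identity on the full manifold, in David--Hertling coordinates, with no splitting assumption. Both sides are tensorial and bilinear, so it suffices to check it on pairs $(\partial_{i(\alpha)},\partial_{j(\beta)})$.

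\emph{Case $\alpha=\beta$.} I repeat the computation of \eqref{eq:Nevidtemp1} and \eqref{eq:Nevidtemp2} without discarding cross-block terms. Components of $K(\partial_{i(\alpha)},\partial_{j(\alpha)})$ along a different block $\gamma\neq\alpha$ come only from the term $-\partial_{(i+j-1)(\alpha)}\mathcal{E}^{t(\gamma)}\partial_{t(\gamma)}$. The corresponding $\gamma$-components of $N_L$ come from $[L\partial_{i(\alpha)},L\partial_{j(\alpha)}]$, through the derivatives $\partial_{(s+i-1)(\alpha)}\mathcal{E}^{t(\gamma)}$. These two contributions must be matched explicitly.

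\emph{Case $\alpha\neq\beta$.} Here $\partial_{i(\alpha)}\circ\partial_{j(\beta)}=0$, so
\[
K(\partial_{i(\alpha)},\partial_{j(\beta)})=-[\mathcal{E},\partial_{i(\alpha)}]\circ\partial_{j(\beta)}-\partial_{i(\alpha)}\circ[\mathcal{E},\partial_{j(\beta)}],
\]
which picks out the $\beta$-block part of $\partial_{i(\alpha)}\mathcal{E}$ and the $\alpha$-block part of $\partial_{j(\beta)}\mathcal{E}$. I will then compare $K(L\partial_{i(\alpha)},\partial_{j(\beta)})-K(\partial_{i(\alpha)},L\partial_{j(\beta)})$ term by term with $T_\alpha+T_\beta$ from the proof of Lemma \ref{lemma:evidsplit}, using $L\partial_{i(\alpha)}=\sum_k\mathcal{E}^{k(\alpha)}\partial_{(k+i-1)(\alpha)}$.

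This bookkeeping is the main obstacle, and the step most likely to hide sign or index errors. If it becomes unwieldy, I would instead prove the identity coordinate-free. One expands $N_{\mathcal{E}\circ}(X,Y)$ and rewrites each bracket $[\mathcal{E}\circ X,\cdot\,]$ using the Hertling--Manin condition \eqref{HM} with the pair $(\mathcal{E},X)$, namely $\mathcal{L}_{\mathcal{E}\circ X}\circ=\mathcal{E}\circ\mathcal{L}_X\circ+X\circ\mathcal{L}_{\mathcal{E}}\circ$. The aim is for all terms except those of $K(LX,Y)-K(X,LY)$ to cancel by commutativity and associativity; this is the route of \cite{ALimrn}. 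Either way, once the identity holds globally, $K=0$ gives $N_L=0$, and combining with Theorem \ref{thm:NTevID} proves the equivalence.
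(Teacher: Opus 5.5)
Your proposal is correct, but it argues differently from the paper. The paper gives no separate argument for this proposition. The direction $N_{\mathcal{E}\circ}=0\Rightarrow$ eventual identity is Theorem \ref{thm:NTevID}, and the converse is simply imported from \cite{ALimrn}. You instead make the converse self-contained by promoting the block-wise identity $N_L(X,Y)=K(LX,Y)-K(X,LY)$ to a global one. You are right that this promotion is needed, since Lemmas \ref{lemma:evidsplit} and \ref{lemma:splittorsion} presuppose $N_L=0$.

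Your coordinate-free fallback is the route to take, and it closes in a few lines. The Hertling--Manin condition for the pair $(\mathcal{E},X)$ gives
\[
N_{\mathcal{E}\circ}(X,Y)=X\circ\mathcal{L}_{\mathcal{E}}(\circ)(\mathcal{E},Y)-Y\circ\mathcal{L}_{\mathcal{E}}(\circ)(\mathcal{E},X).
\]
Expanding $K(\mathcal{E}\circ X,Y)-K(X,\mathcal{E}\circ Y)$ using only commutativity and associativity gives the same right-hand side, because the $[e,\mathcal{E}]$ terms cancel. In fact the displayed formula alone already yields the converse: substituting $\mathcal{L}_{\mathcal{E}}(\circ)(\mathcal{E},Y)=[e,\mathcal{E}]\circ\mathcal{E}\circ Y$ makes the two terms cancel.

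Your route buys several things over the citation.
\begin{itemize}
\item It shows the converse holds on any $F$-manifold, with no regularity and no hypotheses on $\mathcal{E}^{1(\alpha)}$ or $\mathcal{E}^{2(\alpha)}$.
\item It exhibits both directions as consequences of a single identity.
\item It shortens the forward direction. Under the stated hypotheses, $\mathcal{E}\circ$ has exactly one Jordan block per eigenvalue, so $e$ is cyclic and $\{\mathcal{E}^{\circ p}\}_{p=0}^{n-1}$ spans every tangent space. The argument of Theorem \ref{thm:NTevID} then runs globally, without the splitting lemmas.
\end{itemize}
The paper's approach is shorter only because it cites the computation rather than carrying it out.
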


\subsection{A class of $F$-manifolds related to the Fr\"{o}licher--Nijenhuis  bicomplex}
Let $L$ be a  tensor field of type $(1,1)$ with vanishing Nijenhuis torsion. This means that for any pair of vector fields $X$ and $Y$, we have
\begin{equation}\label{nijenhis}
[LX,LY]-L\,[X,LY]-L\,[LX,Y]+L^2\,[X,Y]=0.
\end{equation}
Following \cite{LM}, we now recall a construction of integrable hierarchies starting from the  Fr\"{o}licher--Nijenhuis  bicomplex $(d,d_L,\Omega(M))$ (see \cite{FN}). 
  The differential $d$ is the usual de Rham differential, while the differential $d_L$ is defined as
  \begin{align}\label{eq:dLoriginal}
 (d_L \omega)(X_0, \dots, X_k):=&\sum_{i=0}^k (-1)^i (LX_i)(\omega(X_0, \dots, \hat{X}_i, \dots, X_k))\\
\nonumber &+\sum_{0\leq i<j\leq k}(-1)^{i+j}\,\omega([X_i, X_j]_L, X_0, \dots, \hat{X}_i, \dots, \hat{X}_j, \dots X_k),\end{align}
where  $\omega\in\Omega^k(M)$ and  
$$[X_i,X_j]_L:=[LX_i, X_j]+[X_i, LX_j]-L[X_i,X_j].$$

In \cite{LM}, it was proved that the sequence of functions $a_1,a_2,a_3,...$  defined recursively by 
\[da_{k+1}=d_L a_k -a_k da_0,\qquad k\geq0,\]
starting from a solution of the equation
\begin{equation}
d\cdot d_La_0=0,
\label{eq:main}
\end{equation}
and the corresponding sequence of the tensor fields of type $(1,1)$  defined  by 
\begin{eqnarray*}
V_{k+1} \coloneqq V_k L-a_k I,\qquad k\geq0,
\end{eqnarray*}
starting from $V_0=I$ define an integrable hierarchy of hydrodynamic type. This construction can be applied in particular in the case where $L=\mu\circ$ is the operator  of  multiplication by an  eventual identity. Applying the results of the previous section one can define  an $F$-manifold with compatible  connection. This was done in \cite{LP23} in a special case leading to the definition of  Lauricella bi-flat $F$-manifolds  and  in the general case (choosing as eventual identity the Euler vector field) in \cite{LPVG1}.

\section{Generalised Gibbons--Tsarev system for regular non-diagonalisable reductions of dKP}
We look for  reductions of the dKP equation in potential form
\begin{equation}\label{pdKP}
U_{yy}-U_{xt}+U_xU_{xx}=0
\end{equation}
which can equivalently be written as (the first equation locally gives $V=U_x,W=U_y$):
\begin{eqnarray}
\label{dkp1}
V_y&=&W_x,\\
\label{dkp2}
W_y&=&V_t-VV_x.
\end{eqnarray}
We look for $n$-component reductions $V(r^1,...,r^n),W(r^1,...,r^n)$ governed by $n$-component compatible systems of  hydrodynamic type
\begin{eqnarray}
\label{y-flow}
r^i_y&=&c^i_{jk}(r)\mu^j(r)r^k_x,\\
\label{t-flow}
r^i_t&=&c^i_{jk}(r)\lambda^j(r)r^k_x,
\end{eqnarray}
where $c^i_{jk}(r)$ are the structure functions of a regular $F$-manifold structure with product $\circ$ and unit $e$. In the semisimple case these are the diagonal reductions of dKP
 and the present construction drastically simplifies and reduces to the computations given in \cite{FK}. 

From equation \eqref{dkp1} it follows that
\begin{equation}\label{red-dkp1}
\partial_iW=\partial_kV\,c^k_{ij}\mu^j
\end{equation}
and from equation \eqref{dkp2} it follows that
\begin{equation}\label{red-dkp2}
\partial_k W\,c^k_{ij}\mu^j=\partial_kV\,c^k_{ij}\lambda^j-V\partial_iV.
\end{equation}
Using the equation \eqref{red-dkp1} the previous equation reduces to
\begin{equation}\label{red-dkp2-bis}
c^s_{ij}\partial_sV(\mu\circ\mu-\lambda+Ve)^j=0.
\end{equation}
Thanks to regularity, this implies
\begin{equation}\label{lambda-mu}
\lambda=\mu\circ\mu+Ve.
\end{equation}
Equation \eqref{dkp1} means that $V$ is a density of conservation law for the system \eqref{y-flow} and a system \eqref{t-flow} compatible with \eqref{y-flow} is usually called a symmetry. According to the general theory developed in \cite{LPVG1,LPVG2,LPVG3} symmetries and densities of conservation laws can be obtained starting from a torsionless linear connection $\nabla$ uniquely determined by the condition
\begin{equation}\label{nabla}
d_{\nabla}(\mu\circ)=0,\qquad \nabla e=0.
\end{equation}
It turns out that the connection $\nabla$ satisfies the condition
\begin{equation}
\label{symnablaprod}
\nabla_jc^i_{hk}=\nabla_hc^i_{jk}.
\end{equation}
The fact that $V$ is a density of conservation law is equivalent to the system
\begin{equation}\label{cl}
(\mu\circ)^s_i\nabla_s(dV)_j=(\mu\circ)^s_j\nabla_s(dV)_i,
\end{equation}
while the fact that the flow \eqref{t-flow} is a symmetry of the flow \eqref{y-flow} is equivalent to the condition 
\begin{equation}\label{sym}
d_{\nabla}(\lambda\circ)=0.
\end{equation}
The condition \eqref{sym} in local coordinates reads
\begin{equation}\label{sym2}
\partial_j\lambda^i=-\Gamma^i_{js}\lambda^s+c^i_{js}\nabla_e\lambda^s
\end{equation}
or
\begin{equation}\label{sym3}
\partial_j\lambda^i=-\Gamma^i_{js}\lambda^s+c^i_{js}[e,\lambda]^s.
\end{equation}
From \eqref{lambda-mu} it follows that
\begin{equation}\label{lambda-mu-diff}
\partial_j\lambda^i=(\partial_jc^i_{hk})\mu^h\mu^k+2c^i_{hk}(\partial_j\mu^h)\mu^k+(\partial_je^i)V+e^i\partial_jV.
\end{equation}
Combining \eqref{sym3} and \eqref{lambda-mu-diff}, and taking into account \eqref{lambda-mu} we obtain
\begin{equation*}
(\partial_jc^i_{hk})\mu^h\mu^k+2c^i_{hk}(\partial_j\mu^h)\mu^k+(\partial_je^i)V+e^i\partial_jV
=-\Gamma^i_{js}(\mu\circ\mu+Ve)^s+c^i_{js}[e,\mu\circ\mu+Ve]^s.
\end{equation*}
or, taking into account $\nabla e=0$,
\begin{equation}\label{f1}
(\partial_jc^i_{hk}+\Gamma^i_{js}c^s_{hk})\mu^h\mu^k+2c^i_{hk}(\partial_j\mu^h)\mu^k+e^i\partial_jV
=c^i_{js}[e,\mu\circ\mu]^s+e(V)\delta^i_j.
\end{equation}
Thanks to the  hypothesis of regularity, we can work  in coordinates where the product and the unit are simultaneously constant.  In such coordinates, taking into account that
\begin{equation}\label{nablamu}
\partial_j\mu^h=-\Gamma^h_{js}\mu^s+c^h_{js}[e,\mu]^s,
\end{equation}
and that
\[c^i_{js}[e,\mu\circ\mu]^s=2c^i_{js}c^s_{lk}(\partial_e\mu^l)\mu^k,\]
we can write \eqref{f1} as
\begin{equation}\label{f1bis}
\Gamma^i_{js}c^s_{hk}\mu^h\mu^k-2c^i_{hk}\Gamma^h_{js}\mu^s\mu^k+e^i\partial_jV
=e(V)\delta^i_j.
\end{equation}
Using \eqref{symnablaprod}, this can also be written as
\begin{equation}\label{f2}
(\nabla_{\mu}c^i_{jh})\mu^h=e(V)\delta^i_j-e^i\partial_jV,
\end{equation}
which is valid in any system of coordinates. Finally, expanding  \eqref{cl}, one easily obtains
\begin{equation}\label{GT1}
c^s_{it}\mu^t\partial_s\partial_j V-c^s_{jt}\mu^t\partial_s\partial_i V=
(c^t_{sj}\partial_i\mu^s-c^t_{si}\partial_j\mu^s)\partial_tV.
\end{equation}
Taking into account the identity \eqref{identity}, we can summarise this result in the following
\begin{proposition}\label{Prop_dKP_red}
The functions $V(r^1,...,r^n),\mu^1(r^1,...,r^n),\dots,\mu^n(r^1,...,r^n)$ with
\begin{eqnarray}
\label{y-flow_bis}
r^i_y&=&c^i_{jk}(r)\mu^j(r)r^k_x,\\
\label{t-flow_bis}
r^i_t&=&c^i_{jk}(r)\lambda^j(r)r^k_x.
\end{eqnarray}
define an $n$-component reduction of the dKP equation in potential form \eqref{pdKP} if and only if $V$  has a non-trivial dependence on each variable, $\lambda=\mu\circ\mu+Ve$ and the system \eqref{GTgen} is satisfied.
\end{proposition}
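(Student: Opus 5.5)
The proof will run the derivation preceding the statement in both directions, checking that every step is an equivalence.

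\emph{Necessity.} Suppose \eqref{y-flow_bis}--\eqref{t-flow_bis} define a reduction with densities $V,W$. Equation \eqref{dkp1} says $V$ is a density of conservation law for the $y$-flow, with flux $W$. Substituting the flows into \eqref{dkp1} and \eqref{dkp2} gives \eqref{red-dkp1} and \eqref{red-dkp2}, and hence \eqref{red-dkp2-bis}.

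The first step I expect to be delicate is passing from \eqref{red-dkp2-bis} to $\lambda=\mu\circ\mu+Ve$. The matrix $c^s_{ij}\partial_sV$ is the operator of multiplication by the covector $dV$ (dualised via $\circ$). In David--Hertling coordinates it is block lower-triangular Toeplitz, with diagonal entries $\partial_{m_\alpha(\alpha)}V$ in each block. It is therefore invertible exactly when $V$ depends non-trivially on the top variable of every block. This is how I will read the hypothesis ``non-trivial dependence on each variable''; in the semisimple case it is $\partial_iV\neq0$. Under this hypothesis \eqref{red-dkp2-bis} forces \eqref{lambda-mu}.

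Next, compatibility of the two flows is equivalent to \eqref{sym}, using the connection \eqref{nabla} and Theorem \ref{LPVG_mainTh}. Differentiating $\lambda$ and substituting \eqref{nablamu} gives \eqref{f1bis}. Then \eqref{symnablaprod} turns this into \eqref{f2}, and the identity \eqref{identity} with $\mathcal{E}$ replaced by $\mu$ turns \eqref{f2} into \eqref{gtg1}; that proof used only the definition of $\nabla$, not the eventual-identity property. Finally, the conservation-law criterion \eqref{cl} from \cite{LPVG3}, expanded using \eqref{nablamu}, gives \eqref{GT1}, which is \eqref{gtg2}.

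\emph{Sufficiency.} Given \eqref{GTgen} and $\lambda=\mu\circ\mu+Ve$, I read the steps backwards. Equation \eqref{gtg2} gives \eqref{cl}, so $V$ is a conserved density and a flux $W$ exists locally with \eqref{red-dkp1}; this yields \eqref{dkp1}. Equation \eqref{gtg1} gives \eqref{f2}, hence \eqref{sym}, so the $t$-flow commutes with the $y$-flow. Finally \eqref{lambda-mu} gives \eqref{red-dkp2-bis}, and together with \eqref{red-dkp1} this gives \eqref{red-dkp2}, hence \eqref{dkp2}.

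\emph{Main obstacle.} The main point to check is that \eqref{sym} is not only necessary but sufficient for commutativity once $\nabla$ is fixed by $\mu$. Here I will invoke the theory of \cite{LPVG1,LPVG2} quoted after Theorem \ref{LPVG_mainTh}, under its genericity assumptions. I also need the algebraic reduction \eqref{f1}$\to$\eqref{f1bis}$\to$\eqref{f2} to be reversible. It is, because each step either substitutes \eqref{nablamu}, which is the definition of $\nabla$, or uses the symmetry \eqref{symnablaprod}. I will verify the last step in David--Hertling coordinates, where $\partial c=0$ and $\nabla_\mu c^i_{jh}=\Gamma^i_{\mu s}c^s_{jh}-2c^i_{sh}\Gamma^s_{\mu j}$ up to symmetrisation in the lower indices.
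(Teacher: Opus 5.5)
Your proposal is correct and follows the paper's own derivation essentially step by step. That means passing from \eqref{red-dkp1} to \eqref{red-dkp2-bis}, cancelling the matrix $c^s_{ij}\partial_sV$ to get $\lambda=\mu\circ\mu+Ve$, invoking the LPVG criteria \eqref{sym} and \eqref{cl}, and chaining \eqref{f1}$\to$\eqref{f1bis}$\to$\eqref{f2}$\to$\eqref{gtg1} via \eqref{identity}, with every step reversible for sufficiency. Your invertibility reading of the non-triviality hypothesis just makes precise the paper's ``thanks to regularity''.

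Two cosmetic fixes are worth making. Within each block that matrix is the Hankel matrix $\partial_{(i+j-1)(\alpha)}V$ with $\partial_{m_\alpha(\alpha)}V$ on the anti-diagonal; it is triangular Toeplitz only after reversing one index. Also, the step \eqref{f1bis}$\Leftrightarrow$\eqref{f2} is cleanest written as $\mu^h\mu^k\nabla_k c^i_{jh}=\mu^h\mu^k\nabla_j c^i_{kh}$ using \eqref{symnablaprod}, then expanded in David--Hertling coordinates, rather than through the formula for $\nabla_\mu c$ you wrote.
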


\section{Non-existence of non-diagonalisable regular reductions of dKP}
We consider the generalised Gibbons--Tsarev system in David--Hertling canonical coordinates, where both the product and the unit vector field are constant:
\begin{align}
	\mn^{t}\bigl(c^{s}_{ti}\,\pa_{s}\pa_{j}V-c^{s}_{tj}\,\pa_{s}\pa_{i}V\bigr)
	-\bigl(c^{m}_{js}\,\pa_{i}\mn^{s}-c^{m}_{is}\,\pa_{j}\mn^{s}\bigr)\pa_{m}V&=0,
	\label{eq:EQ1gen}
	\\
	\mn^{h}\bigl(c^{i}_{hk}\,\pa_{j}\mn^{k}+c^{i}_{jk}\,\pa_{h}\mn^{k}-c^{k}_{jh}\,\pa_{k}\mn^{i}\bigr)
	-\mn^{h}c^{i}_{js}c^{s}_{hk}e^{m}\pa_{m}\mn^{k}
	+e^{i}\pa_{j}V-\del^{i}_{j}e^{m}\pa_{m}V&=0.
	\label{eq:EQ2gen}
\end{align}
We recall that we are working in the regular setting, where $\mu$ is assumed to be such that $\mu^{1(\alpha)}\neq\mu^{1(\beta)}$ for $\beta\neq\alpha$ and $\mu^{2(\alpha)}\neq0$ for $m_\alpha\geq2$.

\subsection{The one-block case}
In the extreme case of a single Jordan block, of size $n\geq2$, we have
\begin{equation}
	c^{i}_{jk}=\del^{i+1}_{j+k},
	\qquad
	e^{i}=\del^{i}_{1}.
	\label{eq:oneblock-constants}
\end{equation}
The following proposition shows that a solution $V$ of \eqref{eq:EQ1gen}--\eqref{eq:EQ2gen} must be constant.

\begin{proposition}\label{non_ex_one_block}
	Let $V$ and $\{\mu^i\}_{i\in\{1,\dots,n\}}$ solve \eqref{eq:EQ1gen}--\eqref{eq:EQ2gen}. Then
	\begin{itemize}
		\item[(i)] $\pa_jV=0,\quad j\in\{1,\dots,n\},$
		\item[(ii)] $\pa_j\mu^1=0,\quad j\in\{2,\dots,n\}.$
	\end{itemize}
\end{proposition}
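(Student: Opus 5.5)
The plan is to work entirely in the one-block constants \eqref{eq:oneblock-constants} and unpack \eqref{eq:EQ2gen} component by component. Since $c^i_{jk}=\del^{i+1}_{j+k}$ and $e=\pa_1$, the $(i,j)$ component of \eqref{eq:EQ2gen} reads
\begin{equation*}
\sum_h \mn^h\pa_j\mn^{i+1-h}+\sum_h\mn^h\pa_h\mn^{i+1-j}-\sum_h\mn^h\pa_{j+h-1}\mn^i-\sum_h\mn^h\pa_1\mn^{i+2-j-h}+\del^i_1\pa_jV-\del^i_j\pa_1V=0,
\end{equation*}
with out-of-range indices contributing zero. The component $i=1$, $j\ge2$ is the key one. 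In that component the second and fourth sums vanish, because $i+1-j\le0$ and $3-j-h\le0$. What remains is
\begin{equation*}
\pa_jV=\sum_{h\ge2}\mn^h\,\pa_{j+h-1}\mn^1,\qquad j\in\{2,\dots,n\}.
\end{equation*}
In particular $\pa_nV=0$, and each $\pa_jV$ is controlled by the derivatives $\pa_k\mn^1$ with $k>j$.

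Step one proves (ii) by descending induction on $j$ from $n$ to $2$. The induction must exploit $\mn^2\neq0$, so I will look for components of \eqref{eq:EQ2gen} and \eqref{eq:EQ1gen} in which $\pa_j\mn^1$ appears multiplied by $\mn^2$, with every other term involving only $\pa_k\mn^1$ for $k>j$ (already zero) or $\pa_kV$ for $k>j$ (zero by the displayed relation). The natural candidates are the \eqref{eq:EQ2gen} components with $i=2$, where the sum $\mn^h\pa_{j+h-1}\mn^2$ carries $\mn^2\pa_{j+1}\mn^2$, and, more importantly, the components of \eqref{eq:EQ1gen} with $(i,j)=(1,j)$ and $(2,j)$. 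In these, the product $c^m_{js}\pa_i\mn^s\,\pa_mV$ brings in $\pa_j\mn^1$ against $\pa_1V$ and $\pa_2V$, while the second-derivative terms involve $\mn^1\pa_1\pa_jV$ and $\mn^2\pa_2\pa_jV$. After the substitution $\pa_kV=\sum_{h\ge2}\mn^h\pa_{k+h-1}\mn^1$, the leading term in the highest surviving index should be $\mn^2(\dots)\pa_j\mn^1$. Its coefficient should be forced nonzero by $\mn^2\ne0$, which would give $\pa_j\mn^1=0$. Once (ii) holds, the displayed relation immediately gives $\pa_jV=0$ for $j\ge2$.

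Step two proves $\pa_1V=0$. With $V=V(u^1)$ and $\mn^1=\mn^1(u^1)$, the first-order terms of \eqref{eq:EQ1gen} collapse to $(c^1_{js}\pa_i\mn^s-c^1_{is}\pa_j\mn^s)\pa_1V=(\del_{j1}\pa_i\mn^1-\del_{i1}\pa_j\mn^1)\pa_1V$, and these now vanish identically, so \eqref{eq:EQ1gen} gives nothing further. I would therefore return to \eqref{eq:EQ2gen} along the diagonal $i=j\ge2$, where $-\pa_1V$ appears explicitly. I would compare the $(2,2)$ component with the $(3,2)$ and $(2,1)$ components; if $n=2$, I would compare the $(2,2)$ component with the $(2,1)$ component. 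The idea is that the $\mn$-terms in the diagonal equation should coincide with a combination that another component shows to vanish, leaving $\pa_1V=0$. Concretely, I expect the $(2,1)$ component to give $2\mn^1\pa_1\mn^2-\mn^h\pa_h\mn^2-\mn^2\pa_1\mn^1$-type relations, with the fourth sum cancelling part of these. Subtracting them from the $(2,2)$ component should isolate $\pa_1V$.

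I expect the main obstacle to be the induction in step one. Several terms in \eqref{eq:EQ1gen} involve second derivatives $\pa_s\pa_jV$, and after substitution these become derivatives of products $\mn^h\pa_k\mn^1$, which bring in the unknown derivatives of higher $\mn^h$. I will first try to arrange the induction so that these contributions carry indices strictly above $j$, so that they vanish by the inductive hypothesis. If that ordering fails, I will instead run a simultaneous induction on the pair $(\pa_j\mn^1,\pa_jV)$, ordered by $j$ descending. If even that is not enough, the fallback is to use the \eqref{eq:EQ2gen} components with $i=n$, which express $\pa_k\mn^n$-type quantities, to eliminate the stray derivatives of the higher $\mn^h$.
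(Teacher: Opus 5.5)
Your $(1,j)$ relation $\partial_jV=\sum_{h\ge2}\mu^h\partial_{j+h-1}\mu^1$ ($j\ge2$) is correct, and the paper uses it in its Step 2. However, the proposal has a genuine gap: nothing in it actually produces $\partial_jV=0$ or $\partial_j\mu^1=0$.

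The missing idea is a telescoping combination of components of \eqref{eq:EQ2gen}. Fix $j\ge2$ and sum the $(s,\,j+s-1)$ components over $s=1,\dots,n-j+1$. The following terms drop out:
\begin{itemize}
\item the second sum (it involves $\mu^{2-j}$);
\item the fourth sum (it involves $\mu^{3-j-h}$);
\item the $\partial_1V$ term;
\item every $e^s\partial_{j+s-1}V$ term except $\partial_jV$ at $s=1$.
\end{itemize}
The two remaining double sums both equal $\sum_{a=j}^{n}\sum_{h+k=a-j+2}\mu^h\partial_a\mu^k$, so they cancel and leave $\partial_jV=0$ for every $j\ge2$. This step does not use $\mu^2\ne0$. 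Only then does your relation become $\sum_{h\ge2}\mu^h\partial_{j+h-1}\mu^1=0$, and descending induction with $\mu^2\neq0$ gives $\partial_k\mu^1=0$ for $k\ge3$.

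Your proposed substitutes do not do this job. The $(2,j)$ components with $j\ge3$ give $\mu^2\partial_j\mu^1=\sum_{h\ge2}\mu^h\partial_{j+h-1}\mu^2$, which involves derivatives of $\mu^2$ not covered by your inductive hypothesis. The \eqref{eq:EQ1gen} route brings in second derivatives that you already flag as uncontrolled.

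The ordering \emph{first (ii), then $\partial_1V$} also cannot work at $j=2$. The derivative $\partial_2\mu^1$ never appears in your $(1,j)$ relations, and \eqref{eq:EQ2gen} alone does not force it to vanish. For $n=2$, \eqref{eq:EQ2gen} reduces to $\partial_2V=0$ and $2\mu^2\partial_2\mu^1=\partial_1V$. This is solved by $V=V(u^1)$, $\mu^2=u^2$, $\mu^1=\tfrac12V'(u^1)\ln u^2$, which has $\partial_2\mu^1\neq0$. The only further information is the $(1,2)$ component of \eqref{eq:EQ1gen}, namely $(\partial_2\mu^1)(\partial_1V)=0$. There the coefficient of $\partial_2\mu^1$ is $\partial_1V$, not something forced nonzero by $\mu^2$.

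So $\partial_2\mu^1$ and $\partial_1V$ must be killed together. Summing the diagonal components $(i,i)$, $i=2,\dots,n$, of \eqref{eq:EQ2gen} gives $n\mu^2\partial_2\mu^1=(n-1)\partial_1V$. Multiplying by $\partial_1V$ and using $(\partial_2\mu^1)(\partial_1V)=0$ yields $(\partial_1V)^2=0$, hence both quantities vanish.

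Finally, your Step two relies on comparing with the $(2,1)$ component. Every $(i,1)$ component of \eqref{eq:EQ2gen} is identically zero, because $c^i_{1k}=\delta^i_k$ and $e=\partial_1$. So that comparison carries no information; for $n=2$ you are left with the single relation $2\mu^2\partial_2\mu^1=\partial_1V$.
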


\begin{proof}
	The proof naturally splits into multiple parts.
	
	\textsc{Step 1: $\pa_jV=0$, $j\ge2$.} Let us fix an index $j\in\{2,\dots,n\}$. By considering the sum over $s\in\{1,\dots,n-j+1\}$ of the equations \eqref{eq:EQ2gen} for $i=s$, replacing $j$ by $j+s-1$, we have
	\begin{equation}
		0=\pa_jV+\Sigma_1+\Sigma_2,
		\label{eq:step1-pre-sums}
	\end{equation}
	where $\pa_jV$ is the single term coming from the only contribution, for $s=1$, from the last two terms of \eqref{eq:EQ2gen}, and where $\Sigma_1$ and $\Sigma_2$ collect the terms containing derivatives of the functions $\mu^k$. Explicitly,
	\begin{align*}
		\Sigma_1&=\overset{n-j+1}{\underset{s=1}{\sum}}\mu^h\big(c^s_{hk}\partial_{j+s-1}\mu^k\big), \qquad \Sigma_2=-\overset{n-j+1}{\underset{s=1}{\sum}}\mu^h\big(c^k_{j+s-1,h}\partial_{k}\mu^s\big).
	\end{align*}
	After separating the special cases $s=1$ and $s=2$ in the sums $\Sigma_1$ and $\Sigma_2$, one obtains
	\begin{align*}
		\Sigma_1+\Sigma_2=&-\sum_{k=j+2}^n\mu^{k-j+1}\partial_k\mu^1-\sum_{k=j+2}^n\mu^{k-j}\partial_k\mu^2\notag\\
		&+\sum_{s=3}^{n-j+1}\mu^{h}c^s_{hk}\partial_{j+s-1}\mu^k-\sum_{s=3}^{n-j+1}\mu^{h}c^k_{j+s-1,h}\partial_{k}\mu^s,
	\end{align*}
	where the terms in the third sum that correspond to $k\geq3$ cancel with the fourth sum, by a relabelling of indices. Then
	\begin{align*}
		\Sigma_1+\Sigma_2=&-\sum_{k=j+2}^n\mu^{k-j+1}\partial_k\mu^1-\sum_{k=j+2}^n\mu^{k-j}\partial_k\mu^2+\sum_{s=3}^{n-j+1}\sum_{k=1}^2\mu^{h}c^s_{hk}\partial_{j+s-1}\mu^k,
	\end{align*}
	where the first (second) sum cancels with the part of the third sum that corresponds to $k=1$ ($k=2$), yielding $\Sigma_1+\Sigma_2=0$ and thus $\pa_jV=0$. Since $j\in\{2,\dots,n\}$ was arbitrary, we have proved
	\begin{equation}
		\pa_jV=0,\qquad j\in\{2,\dots,n\}.
		\label{eq:higherVvanish-oneblock}
	\end{equation}
	
	\textsc{Step 2: $\pa_j\mu^1=0$, $j\ge3$.} We argue by descending induction on $j$, starting from $j=n$. By inspecting the equation \eqref{eq:EQ2gen} for $i=1$ and $j=n-1$, every term vanishes except the one containing $\mu^2\pa_n\mu^1$, and two other terms that simplify with each other. Hence
	\[
	0=\mn^2\,\pa_n\mn^1.
	\]
	The assumption that $\mu^2\neq0$ implies $\pa_n\mn^1=0$. Let us now fix $j\in\{3,\dots,n-1\}$ and inductively assume that $\pa_\ell\mn^1=0$ for all $\ell\ge j+1$. We consider \eqref{eq:EQ2gen} for $i=1$, replacing $j$ by $j-1$. Up to a simplification, the surviving terms are those involving $\pa_j\mn^1$, as the terms with $\pa_\ell\mn^1$ for $\ell\ge j+1$ vanish by the induction hypothesis. Therefore
	\[
	0=\mn^2\,\pa_j\mn^1.
	\]
	As above, this implies $\pa_j\mn^1=0$.
	
	\textsc{Step 3: $\pa_1V=0$ and $\pa_2\mu^1=0$.} Let us sum the diagonal equations obtained by considering \eqref{eq:EQ2gen} for $j=i$, where the sum runs over $i=2,\dots,n$. Because of \eqref{eq:higherVvanish-oneblock} and \emph{Step 2}, the only surviving pieces are those involving $\pa_2\mu^1$ and $\pa_1V$, and terms that simplify with each other. A straightforward calculation then yields
	\begin{equation}
		n\,\mu^2\,\pa_2\mn^1=(n-1)\,\pa_1V.
		\label{eq:identity-I}
	\end{equation}
	We now use the equation \eqref{eq:EQ1gen} for $(i,j)=(1,2)$. Again, using \eqref{eq:higherVvanish-oneblock}, one sees that all second derivatives of $V$ disappear except those involving $\pa_1V$, and the equation reduces to
	\begin{equation}
		(\pa_2\mn^1)(\pa_1V)=0.
		\label{eq:identity-II}
	\end{equation}
	Suppose by contradiction that $\pa_1V\neq0$ at some point. Then \eqref{eq:identity-II} forces $\pa_2\mn^1=0$ at that point, and \eqref{eq:identity-I} gives
	\[
	0=n\,\mu^2\,\pa_2\mn^1=(n-1)\,\pa_1V,
	\]
	which is impossible. Hence $\pa_1V=0$ everywhere. Similarly, $\partial_2\mu^1=0$.
\end{proof}

\subsection{The general regular case}
We now return to the general block decomposition. The previous summation argument repeats independently inside each block because all mixed structure constants vanish. According to the notation where $r$ is the number of Jordan blocks, with sizes $m_1,\dots,m_r$, the system  \eqref{eq:EQ1gen}--\eqref{eq:EQ2gen} reads
\begin{subequations}\label{EQ12blocks}
	\begin{align}\label{EQ1blocks}
		&\mu^{t(\tau)}\bigg(c^{s(\sigma)}_{t(\tau)i(\alpha)}\partial_{s(\sigma)}\partial_{j(\beta)}V-c^{s(\sigma)}_{t(\tau)j(\beta)}\partial_{s(\sigma)}\partial_{i(\alpha)}V\bigg)\\
		&-\bigg(c^{t(\tau)}_{j(\beta)s(\sigma)}\partial_{i(\alpha)}\mu^{s(\sigma)}-c^{t(\tau)}_{i(\alpha)s(\sigma)}\partial_{j(\beta)}\mu^{s(\sigma)}\bigg)\partial_{t(\tau)}V=0,\notag\\
		\label{EQ2blocks}
		&\mu^{h(\tau)}\bigg(c^{i(\alpha)}_{h(\tau)k(\gamma)}\partial_{j(\beta)}\mu^{k(\gamma)}+c^{i(\alpha)}_{j(\beta)k(\gamma)}\partial_{h(\tau)}\mu^{k(\gamma)}-c^{k(\gamma)}_{j(\beta)h(\tau)}\partial_{k(\gamma)}\mu^{i(\alpha)}\\
		&-c^{i(\alpha)}_{j(\beta)s(\sigma)}c^{s(\sigma)}_{h(\tau)k(\gamma)}e^{m(\nu)}\partial_{m(\nu)}\mu^{k(\gamma)}\bigg)+e^{i(\alpha)}\partial_{j(\beta)}V-\delta^{i(\alpha)}_{j(\beta)}e(V)=0,\notag
	\end{align}
\end{subequations}
for all $\alpha,\beta\in\{1,\dots,r\}$ and $i\in\{1,\dots,m_\alpha\}$, $j\in\{1,\dots,m_\beta\}$.

\begin{proposition}\label{non_ex_gen}
	Let $V$ and $\{\{\mu^{i(\alpha)}\}_{i\in\{1,\dots,m_\alpha\}}\}_{\alpha\in\{1,\dots,r\}}$ solve the system \eqref{EQ12blocks}. Then, for $\alpha,\beta\in\{1,\dots,r\}$, one has
	\begin{itemize}
		\item[(i)] for $m_\alpha\geq2,$
        \begin{equation}\label{v2vanish}
			\partial_{j(\alpha)}V=0,\quad j\in\{2,\dots,m_\alpha\},
		\end{equation}
		\item[(ii)] for $m_\beta\geq2,$
		\begin{equation*}
			\partial_{j(\beta)}\mu^{1(\alpha)}=0,\quad j\in\{2,\dots,m_\beta\},
		\end{equation*}
        \item[(iii)] the classical Gibbons--Tsarev system is recovered
		\begin{align}
			\label{GTblocks1}
			\partial_{1(\beta)}V&=(\mu^{1(\beta)}-\mu^{1(\alpha)})\partial_{1(\beta)}\mu^{1(\alpha)},\qquad\,\, \beta\neq\alpha,\\
			\label{GTblocks2}
            \partial_{1(\alpha)}\partial_{1(\beta)}V&=2\frac{(\partial_{1(\alpha)}V)(\partial_{1(\beta)}V)}{(\mu^{1(\beta)}-\mu^{1(\alpha)})^2},\qquad\qquad\quad \beta\neq\alpha,
		\end{align}
		for the leading variables $\{u^{1(\alpha)}\}_{\alpha\in\{1,\dots,r\}},$
		\item[(iv)] for $m_\alpha\geq2,$
		\begin{equation*}
			\partial_{1(\alpha)}V=0.
		\end{equation*}
	\end{itemize}
\end{proposition}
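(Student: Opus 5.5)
The proof reduces the block system \eqref{EQ12blocks} to the one-block computation of Proposition \ref{non_ex_one_block}, and then reads off the cross-block equations. The basic structural fact is that $c^{i(\alpha)}_{j(\beta)k(\gamma)}=\delta^\alpha_\beta\delta^\alpha_\gamma\delta^i_{j+k-1}$ and $e^{i(\alpha)}=\delta^i_1$. Hence every term of \eqref{EQ2blocks} with $\alpha=\beta$ involves only $\mu^{h(\alpha)}$ and derivatives $\partial_{j(\alpha)}$, $\partial_{k(\alpha)}$ inside the block. The only exception is $e(V)=\sum_\nu\partial_{1(\nu)}V$, which appears in the diagonal equations $i=j$, together with the term $e^m\partial_m\mu^k$, which also sums over all blocks.

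For (i), I fix $\alpha$ with $m_\alpha\ge2$ and $j\ge2$, and repeat Step 1 of Proposition \ref{non_ex_one_block}. That is, I sum \eqref{EQ2blocks} with $\beta=\alpha$ over $i=s$ and $j\mapsto j+s-1$, for $s=1,\dots,m_\alpha-j+1$. Since $j\ge2$, no diagonal equation enters, so $e(V)$ does not appear. The $e^m\partial_m\mu$ term enters as $\mu^{h(\alpha)}c\,c\,(\partial_e\mu^{k(\alpha)})$, which is the same algebraic expression as in one block with $\partial_1$ replaced by $\partial_e$; the one-block cancellation was purely combinatorial in the indices, so it survives. This gives $\partial_{j(\alpha)}V=0$.

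For (ii) with $\alpha=\beta$, I repeat Step 2: the equations $i=1(\alpha)$, $j=(\ell-1)(\alpha)$, taken by descending induction, give $\mu^{2(\alpha)}\partial_{\ell(\alpha)}\mu^{1(\alpha)}=0$ for $\ell\ge3$. For $\alpha\neq\beta$, I use \eqref{EQ2blocks} with $i=1(\alpha)$ and $j=j(\beta)$. Here $c^{i(\alpha)}_{j(\beta)\cdot}=0$ and $e^{1(\alpha)}=1$, so the equation reduces to
\[
\mu^{1(\alpha)}\partial_{j(\beta)}\mu^{1(\alpha)}-\sum_{h}\mu^{h(\beta)}\partial_{(j+h-1)(\beta)}\mu^{1(\alpha)}+\partial_{j(\beta)}V=0.
\]
For $j\ge2$, (i) kills $\partial_{j(\beta)}V$, and I get $(\mu^{1(\alpha)}-\mu^{1(\beta)})\partial_{j(\beta)}\mu^{1(\alpha)}=\sum_{h\ge2}\mu^{h(\beta)}\partial_{(j+h-1)(\beta)}\mu^{1(\alpha)}$. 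Descending induction on $j$ from $m_\beta$, as in Lemma \ref{lemma:evidsplit}, together with $\mu^{1(\alpha)}\neq\mu^{1(\beta)}$, then gives the claim. The same equation with $j=1$ and the vanishing just proved yields $(\mu^{1(\alpha)}-\mu^{1(\beta)})\partial_{1(\beta)}\mu^{1(\alpha)}+\partial_{1(\beta)}V=0$, which is \eqref{GTblocks1}. The remaining in-block case $\partial_{2(\alpha)}\mu^{1(\alpha)}$ is handled in (iv).

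For \eqref{GTblocks2}, I take \eqref{EQ1blocks} with $i=1(\alpha)$, $j=1(\beta)$, $\alpha\ne\beta$. Using (i) and (ii), only the leading terms survive:
\[
(\mu^{1(\alpha)}-\mu^{1(\beta)})\partial_{1(\alpha)}\partial_{1(\beta)}V=\partial_{1(\alpha)}\mu^{1(\beta)}\,\partial_{1(\beta)}V-\partial_{1(\beta)}\mu^{1(\alpha)}\,\partial_{1(\alpha)}V.
\]
I then substitute \eqref{GTblocks1} for both derivatives of $\mu$; the two contributions add with the same sign and give the factor $2$.

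For (iv), I mimic Step 3. Summing the diagonal equations $i=j=k(\alpha)$ for $k=2,\dots,m_\alpha$ should give an analogue of \eqref{eq:identity-I}. Since $e(V)$ now equals $\sum_\nu\partial_{1(\nu)}V$, the identity I expect is $m_\alpha\mu^{2(\alpha)}\partial_{2(\alpha)}\mu^{1(\alpha)}=(m_\alpha-1)e(V)+(\text{cross terms})$. I would eliminate the cross terms by adding the diagonal equation $i=j=1(\alpha)$, which by the display above reads $\partial_{1(\alpha)}V-e(V)+\dots=0$, and by using \eqref{GTblocks1} to rewrite the $\partial_{1(\beta)}V$ with $\beta\ne\alpha$. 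The aim is to isolate $m_\alpha\mu^{2(\alpha)}\partial_{2(\alpha)}\mu^{1(\alpha)}=(m_\alpha-1)\partial_{1(\alpha)}V$. Separately, \eqref{EQ1blocks} with $(i,j)=(1(\alpha),2(\alpha))$ should reduce, after (i) and (ii), to $\partial_{2(\alpha)}\mu^{1(\alpha)}\,\partial_{1(\alpha)}V=0$, and the contradiction argument of Step 3 then applies. This bookkeeping of $e(V)$ and of $\partial_e\mu$ across blocks is the main obstacle. If the clean identity fails, I would first try the equation $i=2(\alpha)$, $j=1(\alpha)$, which directly links $\partial_{1(\alpha)}V$ and $e(V)$, to eliminate the other blocks.
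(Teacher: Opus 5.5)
Your outline follows the paper's proof step for step: (i) by the summed equations \eqref{EQ2blocks} over $s$; the in-block part of (ii) by descending induction on $(1(\alpha),(\ell-1)(\alpha))$; the cross-block part of (ii) and \eqref{GTblocks1} from $(1(\alpha),j(\beta))$; \eqref{GTblocks2} from \eqref{EQ1blocks} at $(1(\alpha),1(\beta))$; and (iv) from the summed diagonal equations together with \eqref{EQ1blocks} at $(1(\alpha),2(\alpha))$. The step you left open does close as you hoped. However, your opening bookkeeping omits a cross-block term, and this term is exactly what makes the cancellation in (iv) work.

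The second term of \eqref{EQ2blocks} with $\beta=\alpha$, $i=j$ is $\mu^{h(\tau)}\partial_{h(\tau)}\mu^{1(\alpha)}$, summed over all blocks $\tau$, not only $\tau=\alpha$. In (i) it plays no role, and neither does $\partial_e\mu$, because their coefficient $c^{s(\alpha)}_{(j+s-1)(\alpha)k}$ vanishes for $j\ge2$. In (iv) it matters. If you track only $e(V)$ and $\partial_e\mu^{1(\alpha)}$, then after \eqref{GTblocks1} each block $\beta\neq\alpha$ leaves a residual $-\mu^{1(\beta)}\partial_{1(\beta)}\mu^{1(\alpha)}$, and the identity does not close.

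Now include this term and use the cross-block part of (ii), which gives $\mu^{h(\beta)}\partial_{h(\beta)}\mu^{1(\alpha)}=\mu^{1(\beta)}\partial_{1(\beta)}\mu^{1(\alpha)}$. Then the contribution of block $\beta$ to each diagonal equation $i=j=i(\alpha)$, $i\ge2$, is
\[
(\mu^{1(\beta)}-\mu^{1(\alpha)})\,\partial_{1(\beta)}\mu^{1(\alpha)}-\partial_{1(\beta)}V=0,
\]
which vanishes by \eqref{GTblocks1}. So every diagonal equation reduces individually to its one-block form, and the sum gives $m_\alpha\mu^{2(\alpha)}\partial_{2(\alpha)}\mu^{1(\alpha)}=(m_\alpha-1)\partial_{1(\alpha)}V$ directly. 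Two further remarks:
\begin{itemize}
\item Adding the equation $i=j=1(\alpha)$ is redundant: after (ii) it is just the sum of \eqref{GTblocks1} over $\beta\neq\alpha$.
\item Your fallback equation $(2(\alpha),1(\alpha))$ cannot help, since it contains no $V$ at all ($e^{2(\alpha)}=0$ and $\delta^{2(\alpha)}_{1(\alpha)}=0$).
\end{itemize}
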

\begin{proof}
	The proof naturally splits into multiple parts.
	
	\textsc{Step 1: $\partial_{j(\alpha)}V=0$ for $j\in\{2,\dots,m_\alpha\}$.} Let $\alpha\in\{1,\dots,r\}$ such that $m_\alpha\geq2$ and let $j\geq2$. By taking the sum, over $s\in\{1,\dots,m_\alpha-j+1\}$, of the quantities appearing in \eqref{EQ2blocks} for $\beta=\alpha$ and $i=s$, replacing $j$ by $j+s-1$, we have
	\begin{align}
		0=&\overset{m_\alpha-j+1}{\underset{s=1}{\sum}}\bigg\{\mu^{h(\tau)}\bigg(c^{s(\alpha)}_{h(\tau)k(\gamma)}\partial_{(j+s-1)(\alpha)}\mu^{k(\gamma)}+\underbrace{c^{s(\alpha)}_{(j+s-1)(\alpha)k(\gamma)}}_{\delta^\alpha_\gamma c^{1(\alpha)}_{j(\alpha)k(\alpha)}=0}\partial_{h(\tau)}\mu^{k(\gamma)}-c^{k(\gamma)}_{(j+s-1)(\alpha)h(\tau)}\partial_{k(\gamma)}\mu^{s(\alpha)}\notag\\
		&\qquad\qquad\quad-\underbrace{c^{s(\alpha)}_{(j+s-1)(\alpha)t(\sigma)}}_{\delta^\alpha_\sigma c^{1(\alpha)}_{j(\alpha)t(\alpha)}=0}c^{t(\sigma)}_{h(\tau)k(\gamma)}e^{m(\nu)}\partial_{m(\nu)}\mu^{k(\gamma)}\bigg)+\underbrace{e^{s(\alpha)}\partial_{(j+s-1)(\alpha)}V}_{\delta^s_1\partial_{j(\alpha)}V}-\underbrace{\delta^{s(\alpha)}_{(j+s-1)(\alpha)}}_{\delta^j_1=0}e(V)\bigg\}\notag\\
		=&\mu^{h(\alpha)}\bigg(c^{1(\alpha)}_{h(\alpha)k(\alpha)}\partial_{j(\alpha)}\mu^{k(\alpha)}-c^{k(\alpha)}_{j(\alpha)h(\alpha)}\partial_{k(\alpha)}\mu^{1(\alpha)}\bigg)+\partial_{j(\alpha)}V\notag\\
		&+\mu^{h(\alpha)}\bigg(c^{2(\alpha)}_{h(\alpha)k(\alpha)}\partial_{(j+1)(\alpha)}\mu^{k(\alpha)}-c^{k(\alpha)}_{(j+1)(\alpha)h(\alpha)}\partial_{k(\alpha)}\mu^{2(\alpha)}\bigg)\notag\\
		&+\overset{m_\alpha-j+1}{\underset{s=3}{\sum}}\mu^{h(\alpha)}\bigg(\underbrace{c^{s(\alpha)}_{h(\alpha)k(\alpha)}\partial_{(j+s-1)(\alpha)}\mu^{k(\alpha)}}_{\text{split the sum over $k$, with respect to $k=1,2$ and $k\geq3$}}-c^{k(\alpha)}_{(j+s-1)(\alpha)h(\alpha)}\partial_{k(\alpha)}\mu^{s(\alpha)}\bigg)\notag\\
		=&-\overset{m_\alpha}{\underset{k=j+2}{\sum}}\mu^{h(\alpha)}c^{k(\alpha)}_{j(\alpha)h(\alpha)}\partial_{k(\alpha)}\mu^{1(\alpha)}+\partial_{j(\alpha)}V-\overset{m_\alpha}{\underset{k=j+2}{\sum}}\mu^{h(\alpha)}c^{k(\alpha)}_{(j+1)(\alpha)h(\alpha)}\partial_{k(\alpha)}\mu^{2(\alpha)}\notag\\
		&+\overset{m_\alpha-j+1}{\underset{s=3}{\sum}}\mu^{s(\alpha)}\partial_{(j+s-1)(\alpha)}\mu^{1(\alpha)}+\overset{m_\alpha-j+1}{\underset{s=3}{\sum}}\mu^{(s-1)(\alpha)}\partial_{(j+s-1)(\alpha)}\mu^{2(\alpha)}\notag\\			
		&+\overset{m_\alpha-j+1}{\underset{s=3}{\sum}}\overset{m_\alpha}{\underset{k=3}{\sum}}\mu^{h(\alpha)}c^{s(\alpha)}_{h(\alpha)k(\alpha)}\partial_{(j+s-1)(\alpha)}\mu^{k(\alpha)}-\overset{m_\alpha-j+1}{\underset{s=3}{\sum}}\mu^{h(\alpha)}c^{k(\alpha)}_{(j+s-1)(\alpha)h(\alpha)}\partial_{k(\alpha)}\mu^{s(\alpha)},\notag
	\end{align}
	where the first and third sums cancel, as do the second and the fourth sums, and the fifth and sixth sums. This yields $\partial_{j(\alpha)}V=0$, proving (i).
	
	\textsc{Step 2: $\partial_{j(\alpha)}\mu^{1(\alpha)}=0$ for $j\in\{3,\dots,m_\alpha\}$.} In order to prove that
	\begin{equation}
		\label{LemmaGT1_eq1}
		\partial_{j(\alpha)}\mu^{1(\alpha)}=0,\qquad j\geq3,
	\end{equation}
	let us consider \eqref{EQ2blocks} with $\beta=\alpha$, $i=1$ and $j=m_\alpha-1$. This gives
	\begin{align}
		0=&\mu^{h(\alpha)}\bigg(c^{1(\alpha)}_{h(\alpha)k(\alpha)}\partial_{(m_\alpha-1)(\alpha)}\mu^{k(\alpha)}-c^{k(\alpha)}_{(m_\alpha-1)(\alpha)h(\alpha)}\partial_{k(\alpha)}\mu^{1(\alpha)}\bigg)+e^{1(\alpha)}\partial_{(m_\alpha-1)(\alpha)}V\notag\\
		\overset{\eqref{v2vanish}}{=}&-\mu^{2(\alpha)}\partial_{m_\alpha(\alpha)}\mu^{1(\alpha)},\notag
	\end{align}
	yielding $\partial_{m_\alpha(\alpha)}\mu^{1(\alpha)}=0$. By induction, let us fix $l\in\{3,\dots,m_\alpha-1\}$ and assume that
	\begin{equation}\label{LemmaGT1_eq1_ind}
		\partial_{m(\alpha)}\mu^{1(\alpha)}=0,\qquad m\in\{l+1,\dots,m_\alpha\},
	\end{equation}
	and show $\partial_{l(\alpha)}\mu^{1(\alpha)}=0$. Considering \eqref{EQ2blocks} with $\beta=\alpha$, $i=1$ and $j=l-1$ gives
	\begin{align}
		0=&\mu^{h(\alpha)}\bigg(c^{1(\alpha)}_{h(\alpha)k(\alpha)}\partial_{(l-1)(\alpha)}\mu^{k(\alpha)}-c^{k(\alpha)}_{(l-1)(\alpha)h(\alpha)}\partial_{k(\alpha)}\mu^{1(\alpha)}\bigg)+e^{1(\alpha)}\partial_{(l-1)(\alpha)}V\notag\\
		\overset{\eqref{v2vanish}}{=}&-\overset{m_\alpha-l+2}{\underset{h=2}{\sum}}\mu^{h(\alpha)}\partial_{(l+h-2)(\alpha)}\mu^{1(\alpha)}\overset{\eqref{LemmaGT1_eq1_ind}}{=}-\mu^{2(\alpha)}\partial_{l(\alpha)}\mu^{1(\alpha)},\notag
	\end{align}
	yielding $\partial_{l(\alpha)}\mu^{1(\alpha)}=0$.		
	
	\textsc{Step 3: $\partial_{j(\beta)}\mu^{1(\alpha)}=0$ for $j\in\{2,\dots,m_\beta\}$, $\beta\neq\alpha$.} In order to prove that
	\begin{equation}
		\label{LemmaGT1_eq2}
		\partial_{j(\beta)}\mu^{1(\alpha)}=0,\qquad j\geq2,\quad\beta\neq\alpha,
	\end{equation}
	let us consider \eqref{EQ2blocks} with $\beta\neq\alpha$, $i=1$ and $j=m_\beta$. This gives
	\begin{align}
		0=&(\mu^{1(\alpha)}-\mu^{1(\beta)})\partial_{m_\beta(\beta)}\mu^{1(\alpha)}+\partial_{m_\beta(\beta)}V\overset{\eqref{v2vanish}}{=}(\mu^{1(\alpha)}-\mu^{1(\beta)})\partial_{m_\beta(\beta)}\mu^{1(\alpha)},\notag
	\end{align}
	yielding $\partial_{m_\beta(\beta)}\mu^{1(\alpha)}=0$. By induction, let us fix $l\in\{2,\dots,m_\beta-1\}$ and assume that
	\begin{equation}\label{LemmaGT1_eq2_ind}
		\partial_{m(\beta)}\mu^{1(\alpha)}=0,\qquad m\in\{l+1,\dots,m_\beta\},
	\end{equation}
	and show $\partial_{l(\beta)}\mu^{1(\alpha)}=0$. Considering \eqref{EQ2blocks} with $\beta\neq\alpha$, $i=1$ and $j=l$ gives
	\begin{align}
		0=&(\mu^{1(\alpha)}-\mu^{1(\beta)})\partial_{l(\beta)}\mu^{1(\alpha)}-\overset{m_\beta}{\underset{h=2}{\sum}}\mu^{h(\beta)}c^{k(\beta)}_{l(\beta)h(\beta)}\partial_{k(\beta)}\mu^{1(\alpha)}+\partial_{l(\beta)}V\notag\\
		\overset{\eqref{v2vanish}}{=}&(\mu^{1(\alpha)}-\mu^{1(\beta)})\partial_{l(\beta)}\mu^{1(\alpha)}-\overset{m_\beta}{\underset{h=2}{\sum}}\mu^{h(\beta)}c^{k(\beta)}_{l(\beta)h(\beta)}\partial_{k(\beta)}\mu^{1(\alpha)}
	\end{align}
	where the sum vanishes by \eqref{LemmaGT1_eq2_ind}, yielding $\partial_{l(\beta)}\mu^{1(\alpha)}=0$.
	
	\textsc{Step 4: \eqref{GTblocks1} and \eqref{GTblocks2} hold.} Let $\beta\neq\alpha$. Considering $i=j=1$ in \eqref{EQ2blocks}  gives
	\begin{align}
		0=&\mu^{h(\tau)}\bigg(c^{1(\alpha)}_{h(\tau)k(\gamma)}\partial_{1(\beta)}\mu^{k(\gamma)}-c^{k(\gamma)}_{1(\beta)h(\tau)}\partial_{k(\gamma)}\mu^{1(\alpha)}\bigg)+\partial_{1(\beta)}V\notag\\
		=&\mu^{h(\alpha)}c^{1(\alpha)}_{h(\alpha)k(\alpha)}\partial_{1(\beta)}\mu^{k(\alpha)}-\mu^{h(\beta)}c^{k(\beta)}_{1(\beta)h(\beta)}\partial_{k(\beta)}\mu^{1(\alpha)}+\partial_{1(\beta)}V\notag
	\end{align}
	which, by virtue of \eqref{LemmaGT1_eq1}--\eqref{LemmaGT1_eq2}, gives \eqref{GTblocks1}. Moreover, considering $i=j=1$ in \eqref{EQ1blocks} gives
	\begin{align}
		0=&\mu^{t(\alpha)}c^{s(\alpha)}_{t(\alpha)1(\alpha)}\partial_{s(\alpha)}\partial_{1(\beta)}V-\mu^{t(\beta)}c^{s(\beta)}_{t(\beta)1(\beta)}\partial_{s(\beta)}\partial_{1(\alpha)}V\notag\\
		&-c^{t(\beta)}_{1(\beta)s(\beta)}\partial_{1(\alpha)}\mu^{s(\beta)}\partial_{t(\beta)}V+c^{t(\alpha)}_{1(\alpha)s(\alpha)}\partial_{1(\beta)}\mu^{s(\alpha)}\partial_{t(\alpha)}V\notag\\
		\overset{\eqref{v2vanish}}{=}&(\mu^{1(\alpha)}-\mu^{1(\beta)})\partial_{1(\alpha)}\partial_{1(\beta)}V-\partial_{1(\alpha)}\mu^{1(\beta)}\partial_{1(\beta)}V+\partial_{1(\beta)}\mu^{1(\alpha)}\partial_{1(\alpha)}V
	\end{align}
	which, by means of \eqref{GTblocks1}, gives \eqref{GTblocks2}. This proves (iii).
	
	\textsc{Step 5: $\partial_{1(\alpha)}V=0$, $\partial_{2(\alpha)}\mu^{1(\alpha)}=0$ for $m_\alpha\geq2$.} Let $m_\alpha\geq2$.  By taking the sum, over $i\in\{2,\dots,m_\alpha\}$, of the quantities appearing in \eqref{EQ2blocks} for $\beta=\alpha$ and $j=i$, we have
	\begin{align}
		0=&\overset{m_\alpha}{\underset{i=2}{\sum}}\bigg\{
		\mu^{h(\alpha)}c^{i(\alpha)}_{h(\alpha)k(\alpha)}\partial_{i(\alpha)}\mu^{k(\alpha)}+\mu^{h(\tau)}\partial_{h(\tau)}\mu^{1(\alpha)}-\mu^{h(\alpha)}c^{k(\alpha)}_{i(\alpha)h(\alpha)}\partial_{k(\alpha)}\mu^{i(\alpha)}\notag\\
		&\qquad-\mu^{1(\alpha)}e^{m(\nu)}\partial_{m(\nu)}\mu^{1(\alpha)}-e(V)
		\bigg\}\notag\\
		=&\underset{\beta\neq\alpha}{\sum}\overset{m_\alpha}{\underset{i=2}{\sum}}\bigg\{
		\underbrace{\mu^{h(\beta)}\partial_{h(\beta)}\mu^{1(\alpha)}-\mu^{1(\alpha)}\partial_{1(\beta)}\mu^{1(\alpha)}-\partial_{1(\beta)}V}_{\overset{\eqref{LemmaGT1_eq2}}{=}(\mu^{1(\beta)}-\mu^{1(\alpha)})\partial_{1(\beta)}\mu^{1(\alpha)}-\partial_{1(\beta)}V\overset{\eqref{GTblocks1}}{=}0}
		\bigg\}\notag\\
		&+\overset{m_\alpha}{\underset{i=2}{\sum}}\bigg\{
		\mu^{h(\alpha)}c^{i(\alpha)}_{h(\alpha)k(\alpha)}\partial_{i(\alpha)}\mu^{k(\alpha)}+\mu^{h(\alpha)}\partial_{h(\alpha)}\mu^{1(\alpha)}-\mu^{h(\alpha)}c^{k(\alpha)}_{i(\alpha)h(\alpha)}\partial_{k(\alpha)}\mu^{i(\alpha)}\notag\\
		&\qquad\quad\,-\mu^{1(\alpha)}\partial_{1(\alpha)}\mu^{1(\alpha)}-\partial_{1(\alpha)}V
		\bigg\}\notag\\
		=&\overset{m_\alpha}{\underset{i=2}{\sum}}\overset{m_\alpha}{\underset{h=2}{\sum}}\mu^{h(\alpha)}\bigg\{
		c^{i(\alpha)}_{h(\alpha)k(\alpha)}\partial_{i(\alpha)}\mu^{k(\alpha)}+\partial_{h(\alpha)}\mu^{1(\alpha)}-c^{k(\alpha)}_{i(\alpha)h(\alpha)}\partial_{k(\alpha)}\mu^{i(\alpha)}
		\bigg\}-(m_\alpha-1)\partial_{1(\alpha)}V\notag\\
		\overset{\eqref{LemmaGT1_eq1}}{=}&(m_\alpha-1)\mu^{2(\alpha)}\partial_{2(\alpha)}\mu^{1(\alpha)}+\underbrace{
			\overset{m_\alpha}{\underset{h=2}{\sum}}\mu^{h(\alpha)}\overset{m_\alpha}{\underset{i=2}{\sum}}\bigg\{
			c^{i(\alpha)}_{h(\alpha)1(\alpha)}\partial_{i(\alpha)}\mu^{1(\alpha)}-c^{1(\alpha)}_{i(\alpha)h(\alpha)}\partial_{1(\alpha)}\mu^{i(\alpha)}
			\bigg\}
		}_{\overset{\eqref{LemmaGT1_eq1}}{=}\mu^{2(\alpha)}\partial_{2(\alpha)}\mu^{1(\alpha)}}\notag\\
		&+\overset{m_\alpha}{\underset{h=2}{\sum}}\mu^{h(\alpha)}\underbrace{
			\overset{m_\alpha}{\underset{i=2}{\sum}}\overset{m_\alpha}{\underset{k=2}{\sum}}\bigg\{
			c^{i(\alpha)}_{h(\alpha)k(\alpha)}\partial_{i(\alpha)}\mu^{k(\alpha)}-c^{k(\alpha)}_{i(\alpha)h(\alpha)}\partial_{k(\alpha)}\mu^{i(\alpha)}
			\bigg\}
		}_{0}-(m_\alpha-1)\partial_{1(\alpha)}V\notag\\
		=&m_\alpha\mu^{2(\alpha)}\partial_{2(\alpha)}\mu^{1(\alpha)}-(m_\alpha-1)\partial_{1(\alpha)}V\notag
	\end{align}
	yielding
	\begin{equation}\label{v1vanish_proof1}
		m_\alpha\mu^{2(\alpha)}\partial_{2(\alpha)}\mu^{1(\alpha)}=(m_\alpha-1)\partial_{1(\alpha)}V.
	\end{equation}
	Considering \eqref{EQ1blocks} with $\beta=\alpha$, $i=1$ and $j=2$, and taking into account \eqref{v2vanish}, gives
	\begin{equation}\label{v1vanish_proof2}
		\partial_{2(\alpha)}\mu^{1(\alpha)}\,\partial_{1(\alpha)}V=0.
	\end{equation}
	From \eqref{v1vanish_proof1} and \eqref{v1vanish_proof2}, it follows that $\partial_{1(\alpha)}V=0$ and $\partial_{2(\alpha)}\mu^{1(\alpha)}=0$. This proves (iv) and, together with \emph{Step 2} and \emph{Step 3}, (ii).
\end{proof}

\section{Reductions of Pavlov's hydrodynamic chain}
In this section, we study reductions
   \[C^i=C^i(r^1,...,r^n),\qquad r^i_t=c^i_{jk}(\mu^j-Ve^j)r^k_x\]
of \emph{Pavlov's hydrodynamic chain} 
\begin{equation}\label{PHC}
C^{\alpha-1}_t=C^{\alpha}_x-C^0C^{\alpha-1}_x,\qquad \alpha\in\mathbb{Z}.
\end{equation}
Diagonal reductions (corresponding to a semisimple product in our setting) have been studied by Pavlov in \cite{pavlov}.  He proved that the characteristic velocities $\mu^i$ of the diagonal reductions are given by
\[\mu^i=f^i(r^i)-\sum_{k=1}^n\psi^k(r^k),\]
where $f^i(r^i),\psi_i(r^i),\,i=1,...,n$ are arbitrary functions. The starting observation is that, upon identifying the coordinates $(r^1,...,r^n)$ with the  canonical coordinates of a semisimple product, the functions $(f_1(r^1),\dots,f_n(r^n))$ define an eventual identity
 $\mu$ and the function $V=\sum_{k=1}^n\psi^k(r^k)$ is the general solution of the equation $d\cdot d_L V=0$ in the case $L=\mu\circ$. In other words, in the semisimple case reductions have the form
\begin{equation}\label{redPHC}
r_t=(\mu-V e)\circ r_x=(L-V I)r_x,
\end{equation}
where the vector field  $\mu$ and the function $V$ are solutions of the system
\begin{subequations}\label{GTmod}
\begin{align}
\mathcal{L}_{\mu}c^i_{hj}-c^i_{js}c^s_{ht}[e,\mu]^t&=0, \label{gtm1}\\
(c^s_{ih}\partial_s\partial_j V-c^s_{jh}\partial_s\partial_i V)\mu^h&=
(c^h_{sj}\partial_i\mu^s-c^h_{si}\partial_j\mu^s)\partial_hV\label{gtm2}.
\end{align}
\end{subequations} 
Thanks to this interpretation, we can extend this result to the general case, i.e.  without assuming  semisimplicity. We will need the following two Propositions as auxiliary tools.

\begin{proposition}\label{pro:ddomega}
    
Let $M$ be a smooth manifold, let
\[
\omega \in \Omega^k(M;TM)=\Gamma(\Lambda^kT^*M\otimes TM),
\]
and let $\Omega^\bullet(M)$ denote the de Rham algebra of differential forms on $M$. Define the insertion operator
\[
\iota_\omega:\Omega^p(M)\longrightarrow \Omega^{p+k-1}(M)
\]
as the unique derivation of degree $k-1$ characterised on decomposable tensors by
\[
\iota_{\alpha\otimes X}\eta=\alpha\wedge \iota_X\eta,
\qquad
\alpha\in \Omega^k(M),\ X\in \mathfrak X(M),\ \eta\in \Omega^\bullet(M).
\]
Equivalently, if $p\ge 1$, then
\[
(\iota_\omega\eta)(X_1,\dots,X_{p+k-1})
=
\sum_{\sigma\in \operatorname{Sh}(k,p-1)}
\operatorname{sgn}(\sigma)\,
\eta\bigl(
\omega(X_{\sigma(1)},\dots,X_{\sigma(k)}),
X_{\sigma(k+1)},\dots,X_{\sigma(k+p-1)}
\bigr),
\]
and $\iota_\omega f=0$ for every $f\in C^\infty(M)$.

Now define
\[
d_\omega
:=
[\iota_\omega,d]_{\mathrm{gr}}
=
\iota_\omega\circ d-(-1)^{k-1}d\circ \iota_\omega.
\]
Then the following hold.
\begin{enumerate}
\item
$d_\omega$ is a derivation of degree $k$, i.e.
\[
d_\omega:\Omega^p(M)\longrightarrow \Omega^{p+k}(M).
\]

\item
If $\omega=\alpha\otimes X$ with $\alpha\in \Omega^k(M)$ and $X\in \mathfrak X(M)$, then for every $\eta\in \Omega^\bullet(M)$,
\[
d_\omega\eta
=
\alpha\wedge \mathcal L_X\eta
+
(-1)^k\,d\alpha\wedge \iota_X\eta.
\]

\item
The operators $d$ and $d_\omega$ always graded-commute:
\[
[d,d_\omega]_{\mathrm{gr}}
=
d\circ d_\omega-(-1)^k d_\omega\circ d
=
0.
\]
Equivalently,
\[
d\circ d_\omega = (-1)^k\, d_\omega\circ d.
\]
Hence:
\begin{itemize}
\item if $k$ is even, then $d$ and $d_\omega$ commute;
\item if $k$ is odd, then $d$ and $d_\omega$ anticommute.
\end{itemize}

\item
Assume now that $k=1$. Under the canonical identification
\[
\Omega^1(M;TM)\cong \Gamma(T^*M\otimes TM)\cong \Gamma(\operatorname{End}(TM)),
\]
let
\[
L:=\omega^\flat \in \Gamma(\operatorname{End}(TM))
\qquad\text{be defined by}\qquad
LX=\omega(X).
\]
Define the usual insertion operator
\[
\iota_L:\Omega^p(M)\longrightarrow \Omega^p(M)
\]
by
\[
(\iota_L\eta)(X_1,\dots,X_p)
=
\sum_{j=1}^p \eta(X_1,\dots,LX_j,\dots,X_p),
\]
and set
\begin{equation}\label{eq:dLnew}
d_L:=\iota_L\circ d-d\circ \iota_L.
\end{equation}
Then
\[
\iota_\omega=\iota_L
\qquad\text{and therefore}\qquad
d_\omega=d_L.
\]

\item
For every $q\ge 0$, every $\eta\in \Omega^q(M)$, and every $X_0,\dots,X_q\in \mathfrak X(M)$,
\[
\begin{aligned}
(d_L\eta)(X_0,\dots,X_q)
&=
\sum_{i=0}^q (-1)^i (LX_i)\bigl(\eta(X_0,\dots,\widehat{X_i},\dots,X_q)\bigr) \\
&\quad +
\sum_{0\leq i<j\leq q}(-1)^{i+j}\,
\eta([X_i,X_j]_L, X_0,\dots,\widehat{X_i},\dots,\widehat{X_j},\dots,X_q),
\end{aligned}
\]
where
\[
[X_i,X_j]_L:=[LX_i, X_j]+[X_i, LX_j]-L[X_i,X_j],
\]
that is the operator $d_L$ defined in \eqref{eq:dLnew} coincides with the operator $d_L$ defined in \eqref{eq:dLoriginal}.
\end{enumerate}
\end{proposition}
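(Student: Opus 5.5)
This is the standard Frölicher--Nijenhuis calculus, and I will prove the five items in order, each resting on the previous ones.

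\emph{Item (1).} First I check that $\iota_\omega$ is a well-defined graded derivation of degree $k-1$. The decomposable formula $\iota_{\alpha\otimes X}\eta=\alpha\wedge\iota_X\eta$ is $C^\infty$-bilinear in $(\alpha,X)$, because $\alpha$ has degree $k$ and $\iota_X$ is a derivation of degree $-1$. It therefore extends to all $\omega$ and is a derivation of degree $k-1$. The shuffle formula follows by evaluating $\alpha\wedge\iota_X\eta$ on vector fields and using linearity. Since the graded commutator of derivations of degrees $k-1$ and $1$ is a derivation of degree $k$, this gives (1).

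\emph{Item (2).} Set $\omega=\alpha\otimes X$, so $\iota_\omega=\alpha\wedge\iota_X$. Then
\[
d_\omega\eta=\alpha\wedge\iota_Xd\eta-(-1)^{k-1}d(\alpha\wedge\iota_X\eta).
\]
Expanding the second term gives $(-1)^k d\alpha\wedge\iota_X\eta+\alpha\wedge d\iota_X\eta$. Cartan's formula $\mathcal L_X=\iota_Xd+d\iota_X$ then yields the stated expression.

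\emph{Item (3).} This follows from the graded Jacobi identity for graded commutators of derivations:
\[
[d,[\iota_\omega,d]]=[[d,\iota_\omega],d]+(-1)^{k-1}[\iota_\omega,[d,d]].
\]
Here $[d,d]=2d^2=0$. Also $[d,\iota_\omega]=\pm[\iota_\omega,d]=\pm d_\omega$, so the first term is $\pm[d_\omega,d]=\pm[d,d_\omega]$. Tracking the signs gives $[d,d_\omega]=-[d,d_\omega]$, hence $[d,d_\omega]=0$.

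A more robust alternative is to check the identity on generators. Both sides of $[d,d_\omega]$ are derivations of degree $k+1$, so it suffices to check them on functions $f$ and on exact forms $df$. On $df$, $d_\omega(df)=\iota_\omega d df\pm d\iota_\omega df=\pm d\iota_\omega df$, and applying $d$ kills it. On $f$, $d_\omega f=\iota_\omega df$, and then $d d_\omega f=d\iota_\omega df=\pm d_\omega df$. This is exactly the relation $dd_\omega f=(-1)^kd_\omega df$ once the sign is tracked. I expect this sign bookkeeping to be the main obstacle.

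\emph{Item (4).} Both $\iota_\omega$ and $\iota_L$ are degree-$0$ derivations that vanish on functions, so they are determined by their values on $1$-forms. For $k=1$ and $p=1$ the shuffle formula gives $(\iota_\omega\eta)(X)=\eta(\omega(X))=\eta(LX)=(\iota_L\eta)(X)$. The two operators therefore agree, and since $(-1)^{k-1}=1$ here, $d_\omega=\iota_Ld-d\iota_L$.

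\emph{Item (5).} Evaluate $\iota_Ld\eta-d\iota_L\eta$ on $X_0,\dots,X_q$ using the invariant (Palais) formula for $d$.
\begin{itemize}
\item The term $\iota_Ld\eta$ substitutes $LX_j$ in each slot of $d\eta$.
\item The term $d(\iota_L\eta)$ substitutes $L$ in each slot of $\eta$, inside both the derivative terms and the bracket terms.
\end{itemize}
The comparison proceeds as follows:
\begin{itemize}
\item In the derivative terms, the contributions $X_i(\eta(\dots LX_j\dots))$ with $j\neq i$ cancel between the two pieces. What survives is $(LX_i)\eta(\dots\widehat{X_i}\dots)$.
\item In the bracket terms, $\eta([LX_i,X_j]+[X_i,LX_j],\dots)$ arises from $\iota_Ld$, while $d\iota_L$ contributes $-\eta(L[X_i,X_j],\dots)$ together with terms carrying $L$ on the other slots. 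Those last terms cancel against the matching ones from $\iota_Ld$.
\end{itemize}
This produces $[X_i,X_j]_L$. The cancellation is routine but must be done carefully.
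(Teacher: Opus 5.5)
Your proposal is correct. Items (1), (2) and (5) follow the paper's proof: the graded commutator of derivations, Cartan's formula, and the comparison of the two Palais expansions. The routes differ in (3) and (4).

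\textbf{Item (3).} The paper uses neither the graded Jacobi identity nor a reduction to generators. It simply notes that, because $d^2=0$, one has $d\,d_\omega=d\,\iota_\omega d$ and $d_\omega d=-(-1)^{k-1}d\,\iota_\omega d=(-1)^k d\,\iota_\omega d$ on \emph{all} forms, which gives $d\,d_\omega=(-1)^k d_\omega d$ at once. So the sign bookkeeping you flag as the main obstacle disappears. Your Jacobi computation is nonetheless right: $[d,\iota_\omega]=(-1)^k d_\omega$ and $[d_\omega,d]=(-1)^{k+1}[d,d_\omega]$ combine to $[d,d_\omega]=-[d,d_\omega]$. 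Your check on $f$ and $df$ is just the paper's computation restricted to generators.

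\textbf{Item (4).} The paper applies the $k=1$ shuffle formula in every degree $p$. It then uses that $\eta$ is alternating to move $\omega(X_j)$ into the $j$-th slot, so the two factors $(-1)^{j-1}$ cancel. You instead reduce to $1$-forms, using that a degree-$0$ derivation vanishing on functions is determined by its values on $1$-forms. That route is more conceptual, but it needs $\iota_L$, defined by the slot-substitution formula, to be a derivation, and you assert this without proof. It does hold: on decomposable forms, both sides of the Leibniz rule are the derivative at $t=0$ of $\det\bigl(\alpha_i(X_j)+t\,\alpha_i(LX_j)\bigr)$, expanded column-wise on one side and row-wise on the other. The paper's direct computation avoids needing this fact.
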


\begin{proof}
Since $\iota_\omega$ is a derivation of degree $k-1$ and $d$ is a derivation of degree $1$, their graded commutator
\[
d_\omega=[\iota_\omega,d]_{\mathrm{gr}}
\]
is a derivation of degree $k$. This proves (1).

For (2), if $\omega=\alpha\otimes X$, then $\iota_\omega=\alpha\wedge \iota_X$, so for every $\eta\in \Omega^\bullet(M)$,
\[
\begin{aligned}
d_\omega\eta
&=
\iota_\omega(d\eta)-(-1)^{k-1}d(\iota_\omega\eta) \\
&=
\alpha\wedge \iota_X(d\eta)
-
(-1)^{k-1}d(\alpha\wedge \iota_X\eta) \\
&=
\alpha\wedge \iota_X(d\eta)
-
(-1)^{k-1}\Bigl(d\alpha\wedge \iota_X\eta+(-1)^k\alpha\wedge d(\iota_X\eta)\Bigr) \\
&=
\alpha\wedge\bigl(\iota_X d+d\iota_X\bigr)\eta
+
(-1)^k\,d\alpha\wedge \iota_X\eta \\
&=
\alpha\wedge \mathcal L_X\eta
+
(-1)^k\,d\alpha\wedge \iota_X\eta,
\end{aligned}
\]
where we used Cartan's formula $\mathcal L_X=\iota_Xd+d\iota_X$.

For (3), using $d^2=0$, we compute
\[
d\,d_\omega
=
d\iota_\omega d-(-1)^{k-1}d^2\iota_\omega
=
d\iota_\omega d,
\]
while
\[
d_\omega d
=
\iota_\omega d^2-(-1)^{k-1}d\iota_\omega d
=
(-1)^k\, d\iota_\omega d.
\]
Therefore
\[
d\,d_\omega = (-1)^k\, d_\omega d,
\]
which is exactly
\[
[d,d_\omega]_{\mathrm{gr}}=0.
\]
The parity statement follows immediately.

For (4), let $\eta\in \Omega^p(M)$ and $X_1,\dots,X_p\in \mathfrak X(M)$. Since here $k=1$, the shuffle formula gives
\[
(\iota_\omega\eta)(X_1,\dots,X_p)
=
\sum_{j=1}^p (-1)^{j-1}
\eta\bigl(\omega(X_j),X_1,\dots,\widehat{X_j},\dots,X_p\bigr).
\]
Because $\eta$ is alternating, moving $\omega(X_j)$ from the first slot to the $j$-th slot produces another factor $(-1)^{j-1}$, hence
\[
(\iota_\omega\eta)(X_1,\dots,X_p)
=
\sum_{j=1}^p
\eta\bigl(X_1,\dots,\omega(X_j),\dots,X_p\bigr).
\]
By definition of $L=\omega^\flat$, one has $LX_j=\omega(X_j)$, so
\[
(\iota_\omega\eta)(X_1,\dots,X_p)
=
\sum_{j=1}^p
\eta\bigl(X_1,\dots,LX_j,\dots,X_p\bigr)
=
(\iota_L\eta)(X_1,\dots,X_p).
\]
Thus $\iota_\omega=\iota_L$. Since here $k=1$,
\[
d_\omega
=
\iota_\omega d-d\iota_\omega
=
\iota_L d-d\iota_L
=
d_L.
\]

It remains to prove (5). Fix $q\ge 0$, $\eta\in \Omega^q(M)$, and $X_0,\dots,X_q\in \mathfrak X(M)$. By definition,
\[
d_L\eta=\iota_L(d\eta)-d(\iota_L\eta).
\]

For $r\neq i$, write
\[
\eta_i^{(r)}
:=
\eta(X_0,\dots,\widehat{X_i},\dots,LX_r,\dots,X_q),
\]
that is, $\eta_i^{(r)}$ is obtained from
$\eta(X_0,\dots,\widehat{X_i},\dots,X_q)$ by replacing the argument $X_r$ by $LX_r$.
Also, for $0\leq i<j\leq q$, define
\[
\eta_{ij}(Y)
:=
\eta(Y,X_0,\dots,\widehat{X_i},\dots,\widehat{X_j},\dots,X_q),
\]
and for $r\neq i,j$ define
\[
\eta_{ij}^{(r)}(Y)
:=
\eta(Y,X_0,\dots,\widehat{X_i},\dots,\widehat{X_j},\dots,LX_r,\dots,X_q).
\]

We first expand $\iota_L d\eta$. Since
\[
(\iota_L d\eta)(X_0,\dots,X_q)
=
\sum_{r=0}^q (d\eta)(X_0,\dots,LX_r,\dots,X_q),
\]
the usual de Rham formula gives
\[
\begin{aligned}
(\iota_L d\eta)(X_0,\dots,X_q)
&=
\sum_{r=0}^q (-1)^r\,LX_r\bigl(\eta(X_0,\dots,\widehat{X_r},\dots,X_q)\bigr) \\
&\quad +
\sum_{i=0}^q (-1)^i \sum_{r\neq i} X_i(\eta_i^{(r)}) \\
&\quad +
\sum_{0\leq i<j\leq q}(-1)^{i+j}\,
\eta_{ij}([LX_i,X_j]+[X_i,LX_j]) \\
&\quad +
\sum_{0\leq i<j\leq q}(-1)^{i+j}
\sum_{r\neq i,j}\eta_{ij}^{(r)}([X_i,X_j]).
\end{aligned}
\]

Next we expand $d(\iota_L\eta)$. Since
\[
(\iota_L\eta)(X_0,\dots,\widehat{X_i},\dots,X_q)
=
\sum_{r\neq i}\eta_i^{(r)},
\]
another application of the de Rham formula yields
\[
\begin{aligned}
(d\iota_L\eta)(X_0,\dots,X_q)
&=
\sum_{i=0}^q (-1)^i X_i\left(\sum_{r\neq i}\eta_i^{(r)}\right) \\
&\quad +
\sum_{0\leq i<j\leq q}(-1)^{i+j}
(\iota_L\eta)([X_i,X_j],X_0,\dots,\widehat{X_i},\dots,\widehat{X_j},\dots,X_q) \\
&=
\sum_{i=0}^q (-1)^i \sum_{r\neq i} X_i(\eta_i^{(r)}) \\
&\quad +
\sum_{0\leq i<j\leq q}(-1)^{i+j}
\sum_{r\neq i,j}\eta_{ij}^{(r)}([X_i,X_j]) \\
&\quad +
\sum_{0\leq i<j\leq q}(-1)^{i+j}\eta_{ij}(L[X_i,X_j]).
\end{aligned}
\]

Subtracting the two expansions, all mixed derivative terms
\[
\sum_{i=0}^q (-1)^i \sum_{r\neq i} X_i(\eta_i^{(r)})
\]
cancel, and all mixed bracket terms
\[
\sum_{0\leq i<j\leq q}(-1)^{i+j}
\sum_{r\neq i,j}\eta_{ij}^{(r)}([X_i,X_j])
\]
also cancel. Therefore
\[
\begin{aligned}
(d_L\eta)(X_0,\dots,X_q)
&=
\sum_{r=0}^q (-1)^r\,LX_r\bigl(\eta(X_0,\dots,\widehat{X_r},\dots,X_q)\bigr) \\
&\quad +
\sum_{0\leq i<j\leq q}(-1)^{i+j}\,
\eta_{ij}\bigl([LX_i,X_j]+[X_i,LX_j]-L[X_i,X_j]\bigr).
\end{aligned}
\]
Since
\[
\eta_{ij}(Y)
=
\eta(Y,X_0,\dots,\widehat{X_i},\dots,\widehat{X_j},\dots,X_q),
\]
the previous expression is exactly \eqref{eq:dLoriginal}.

\end{proof}

\begin{proposition}
Let $L$ be an invertible tensor field of type $(1,1)$ with vanishing Nijenhuis torsion.  Let
\[
\tau_L \colon \Omega^k(M)\to \Omega^k(M)
\]
be the natural action of $L$ on $k$-forms, defined by
\[
(\tau_L\omega)(X_1,\dots,X_k):=\omega(LX_1,\dots,LX_k).
\]
Then the following hold. 
\begin{enumerate}
\item $L^{-1}$ has also vanishing Nijenhuis torsion.
\item For every $\alpha\in \Omega(M)$,
\[
d\,d_L\alpha
=
-\,\tau_L\bigl(d\,d_{L^{-1}}(\tau_{L^{-1}}\alpha)\bigr).
\]
Equivalently,
\[
d\,d_L
=
-\,\tau_L\circ d\,d_{L^{-1}}\circ \tau_{L^{-1}}.
\]
Hence, for every $\alpha\in \Omega(M)$,
\[
d\,d_L\alpha=0
\quad\Longleftrightarrow\quad
d\,d_{L^{-1}}(\tau_{L^{-1}}\alpha)=0.
\]

\item If $V\in C^\infty(M)=\Omega^0(M)$, then
\[
d\,d_LV=0
\quad\Longleftrightarrow\quad
d\,d_{L^{-1}}V=0.
\]
\end{enumerate}
\end{proposition}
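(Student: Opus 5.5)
The plan is to derive (1) from a conjugation formula for the torsion, and to obtain (2) and (3) formally from a single intertwining lemma:
\[
d_L=\tau_L\circ d\circ\tau_{L^{-1}}\qquad\text{for every invertible Nijenhuis operator }L.
\]
The lemma is applied both to $L$ and, thanks to (1), to $L^{-1}$.

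For (1), I substitute $X=LX'$ and $Y=LY'$ into \eqref{eq:Nijenhuis} written for $L^{-1}$ and expand. I expect the standard identity
\[
N_{L^{-1}}(X,Y)=-L^{-2}\,N_L(L^{-1}X,L^{-1}Y),
\]
where the sign is irrelevant. Since $L$ is invertible, $N_L=0$ then forces $N_{L^{-1}}=0$.

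For the lemma, note first that $\tau_L$ is an automorphism of the wedge algebra, because $L$ is invertible. Hence $D:=\tau_L d\tau_{L^{-1}}$ is a derivation of degree $1$, and so is $d_L$ by Proposition \ref{pro:ddomega}(1),(4). Two derivations of degree one on $\Omega(M)$ coincide as soon as they agree on functions and on exact $1$-forms, since locally forms are sums of terms $f\,dg_1\wedge\dots\wedge dg_p$.

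On functions the two operators agree directly. We have $Df=\tau_L(df)=df\circ L$, and $d_Lf=df\circ L$ by \eqref{eq:dLoriginal}.

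On exact $1$-forms we need $\tau_L\,d(df\circ L^{-1})=d_L(df)$. By Proposition \ref{pro:ddomega}(3) with $k=1$, the right-hand side equals $-d(df\circ L)$. So the required identity is
\[
\bigl(d(df\circ L^{-1})\bigr)(LX,LY)=-\,\bigl(d(df\circ L)\bigr)(X,Y).
\]
I would verify it by the invariant formula for $d$ on $1$-forms. The left side gives
\[
LX\bigl(df(Y)\bigr)-LY\bigl(df(X)\bigr)-df\bigl(L^{-1}[LX,LY]\bigr).
\]
The right side gives
\[
-X\bigl(df(LY)\bigr)+Y\bigl(df(LX)\bigr)+df\bigl(L[X,Y]\bigr).
\]
The second-derivative terms should match after writing $X(df(LY))=\mathrm{Hess}f(X,LY)+df(\nabla_X(LY))$ in a torsion-free frame, or more simply in coordinates. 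The remaining first-order terms should reduce exactly to $df\bigl(L^{-1}N_L(X,Y)\bigr)=0$, after replacing $[LX,LY]$ via $N_L=0$.

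This single computation is the main obstacle, and I would do it in local coordinates with $X=\partial_i$ and $Y=\partial_j$. As a sanity check, for $L=\mathrm{diag}(a(x),b(y))$ both sides equal $(a-b)f_{xy}\,dx\wedge dy$, which is consistent with the identity.

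Granting the lemma for $L$ and for $L^{-1}$, we get $d_{L^{-1}}=\tau_{L^{-1}}d\tau_L$. Then
\[
-\tau_L\,d\,d_{L^{-1}}\tau_{L^{-1}}\alpha=-\tau_L d\tau_{L^{-1}}\,d\,\tau_L\tau_{L^{-1}}\alpha=-d_L d\alpha=d\,d_L\alpha,
\]
where the last step again uses the anticommutation of $d$ and $d_L$ from Proposition \ref{pro:ddomega}(3). This proves (2). The stated equivalence in (2) follows because $\tau_L$ is invertible. Finally, (3) is the special case $\alpha=V\in\Omega^0(M)$, where $\tau_{L^{-1}}V=V$.
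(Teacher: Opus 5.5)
Your argument is correct, but you reach the key intertwining identity by a different route than the paper.

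The paper proves $d_L\circ\tau_L=\tau_L\circ d$ directly in every degree from \eqref{eq:dLoriginal}. Inserting $\tau_L\omega$, the bracket terms become $\omega(L[X_i,X_j]_L,LX_0,\dots)$. Then $N_L=0$ is used once, in the form $L[X_i,X_j]_L=[LX_i,LX_j]$, which turns the whole expression into $(d\omega)(LX_0,\dots,LX_k)$. Invertibility enters only afterwards, to write $d_L=\tau_L d\tau_{L^{-1}}$.

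You instead note that $\tau_L d\tau_{L^{-1}}$ and $d_L$ are degree-one derivations and compare them on $f$ and $df$. Your check on $df$ is exactly the paper's identity in degree one for $\omega=df\circ L^{-1}$, so the core computation is the same. Your version replaces the all-degree bookkeeping by the derivation-extension principle, together with locality and the anticommutation $d_Ld=-dd_L$. It also needs invertibility from the outset to define $\tau_L d\tau_{L^{-1}}$. The paper's version is shorter, and it shows the intertwining $d_L\tau_L=\tau_L d$ holds for any Nijenhuis operator, invertible or not.

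The step you flag as the main obstacle closes at once, without Hessians or frames. Use $LX(df(Y))-Y(df(LX))=df([LX,Y])$ and $X(df(LY))-LY(df(X))=df([X,LY])$. Then the difference of your two sides is exactly $-df\bigl(L^{-1}N_L(X,Y)\bigr)$.

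Finally, in (1) the correct identity is $N_{L^{-1}}(LX,LY)=L^{-2}N_L(X,Y)$, with a plus sign as in the paper's computation. As you say, the sign does not affect the conclusion.
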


\begin{proof} 

(1).  Set $A := L^{-1}$. Recall that the Nijenhuis torsion of a $(1,1)$-tensor $T$ is defined by
\[
N_T(X,Y)
:=
[T X, T Y]
- T\bigl([T X,Y] + [X,T Y]\bigr)
+ T^2[X,Y]
\]
for vector fields $X,Y \in \mathfrak{X}(M)$.

We want to prove that $N_A=0$, assuming that $N_L=0$.  Let $X,Y \in \mathfrak{X}(M)$. Since $L$ is invertible, every pair of vector fields can be written in the form
\[
U = L X,
\qquad
V = L Y
\]
for unique  $X,Y$.  Thus it is enough to show that
\[
N_A(LX,LY)=0 \quad \forall X,Y\in \mathfrak{X}(M).
\]

\noindent Using $A(LX)=X$ and $A(LY)=Y$, we compute
\[
\begin{aligned}
N_A(LX,LY)
&=
[A(LX),A(LY)]
- A\bigl([A(LX),LY] + [LX,A(LY)]\bigr)
+ A^2[LX,LY] \\
&=
[X,Y]
- A\bigl([X,LY] + [LX,Y]\bigr)
+ A^2[LX,LY].
\end{aligned}
\]
Applying $L^2$ to both sides and using $A=L^{-1}$, we obtain
\[
\begin{aligned}
L^2 N_A(LX,LY)
&=
L^2[X,Y]
- L\bigl([X,LY] + [LX,Y]\bigr)
+ [LX,LY] \\
&=
[LX,LY]
- L\bigl([LX,Y] + [X,LY]\bigr)
+ L^2[X,Y] \\
&=
N_L(X,Y).
\end{aligned}
\]
By hypothesis,  $N_L=0$, hence
\[
L^2 N_A(LX,LY)=0.
\]
but since  $L^2$ is invertible,  we get the claim. 

(2). First we show that,  since $N_L=0$,  one has
\[
d_L\circ \tau_L=\tau_L\circ d.
\]
Indeed, for $\omega\in \Omega^k(M)$ and $X_0,\dots,X_k\in \mathfrak X(M)$,  we have
\begin{align*}
(d_L(\tau_L\omega))(X_0,\dots,X_k)
&=
\sum_{i=0}^k (-1)^i (LX_i)\bigl(\tau_L\omega)(X_0,\dots,\widehat{X_i},\dots,X_k)\bigr)
\\
&\quad
+\sum_{0\le i<j\le k}(-1)^{i+j}
\,(\tau_L\omega)\bigl([X_i,X_j]_L,X_0,\dots,\widehat{X_i},\dots,\widehat{X_j},\dots,X_k\bigr)\\
&=
\sum_{i=0}^k (-1)^i (LX_i)\bigl(\omega(LX_0,\dots,\widehat{LX_i},\dots,LX_k)\bigr)
\\
&\quad
+\sum_{0\le i<j\le k}(-1)^{i+j}
\,\omega\bigl(L[X_i,X_j]_L,LX_0,\dots,\widehat{LX_i},\dots,\widehat{LX_j},\dots,LX_k\bigr).
\end{align*}
Since
\[
N_L(X_i,X_j)=[LX_i,LX_j]-L[X_i,X_j]_L,
\]
the assumption $N_L=0$ gives
\[
L[X_i,X_j]_L=[LX_i,LX_j].
\]
Therefore substituting we get:
\begin{align*}
(d_L(\tau_L\omega))(X_0,\dots,X_k)
&=
\sum_{i=0}^k (-1)^i (LX_i)\bigl(\omega(LX_0,\dots,\widehat{LX_i},\dots,LX_k)\bigr)
\\
&\quad
+\sum_{0\le i<j\le k}(-1)^{i+j}
\,\omega\bigl([LX_i,LX_j],LX_0,\dots,\widehat{LX_i},\dots,\widehat{LX_j},\dots,LX_k\bigr)
\\
&=
(d\omega)(LX_0,\dots,LX_k)
\\
&=
(\tau_L(d\omega))(X_0,\dots,X_k).
\end{align*}
Thus
\[
d_L\circ \tau_L=\tau_L\circ d.
\]
Since $L$ is invertible, $\tau_L$ is invertible with inverse $\tau_{L^{-1}}$, and hence
\[
d_L=\tau_L\circ d\circ \tau_{L^{-1}}.
\]
Applying the same identity to $L^{-1}$, whose Nijenhuis torsion also vanishes, gives
\[
d_{L^{-1}}=\tau_{L^{-1}}\circ d\circ \tau_L
\qquad\text{and}\qquad
d\circ \tau_L=\tau_L\circ d_{L^{-1}}.
\]

\noindent Now let $\alpha\in \Omega(M)$. Using the previous identities and the anticommutation relation (see Proposition \ref{pro:ddomega})
\[
d\,d_{L^{-1}}+d_{L^{-1}}d=0,
\]
we compute
\begin{align*}
d\,d_L\alpha
&=
d\bigl(\tau_L(d(\tau_{L^{-1}}\alpha))\bigr)
\\
&=
\tau_L\bigl(d_{L^{-1}}d(\tau_{L^{-1}}\alpha)\bigr)
\\
&=
-\,\tau_L\bigl(d\,d_{L^{-1}}(\tau_{L^{-1}}\alpha)\bigr).
\end{align*}
This proves the identity in part (2), and since $\tau_L$ is invertible it follows that
\[
d\,d_L\alpha=0
\quad\Longleftrightarrow\quad
d\,d_{L^{-1}}(\tau_{L^{-1}}\alpha)=0.
\]

(3).  Let $V\in C^\infty(M)$. Since $V$ is a $0$-form,
\[
\tau_{L^{-1}}V=V.
\]
Therefore part (2) gives
\[
d\,d_LV=-\,\tau_L(d\,d_{L^{-1}}V).
\]
Now $d\,d_{L^{-1}}V$ is a $2$-form, and $\tau_L$ is invertible on $\Omega^2(M)$. Hence
\[
d\,d_LV=0
\quad\Longleftrightarrow\quad
d\,d_{L^{-1}}V=0.
\]
This proves part (3).
\end{proof}

\begin{proposition}\label{Th_red_PHC}
All reductions
     \[C^i=C^i(r^1,...,r^n),\qquad r^i_t=W^i_{j}(r)r^j_x,\]
of Pavlov's hydrodynamic chain
\begin{equation*}
C^{\alpha-1}_t=C^{\alpha}_x-C^0C^{\alpha-1}_x,\qquad \alpha\in\mathbb{Z},
\end{equation*}
are obtained from a $(1,1)$ tensor field with vanishing Nijenhuis torsion $L$ 
 and from a solution $V$ of the equation $d\cdot  d_L V=0$ by the following procedure:
 \begin{itemize}
 \item the $(1,1)$ tensor field $W$ is defined as $W=L-VI$,
 \item $C^0=V$ and the remaining functions $C^k$ are determined by the following recursive relations
\begin{eqnarray}
\label{rec1}
dC^{\alpha+1}&=&d_LC^{\alpha},\qquad  \alpha=0,1,2,...\\
\label{rec2}
dC^{-\alpha-1}&=&d_{L^{-1}}C^{\alpha},\qquad  \alpha=0,1,2,....
\end{eqnarray}
 \end{itemize}
\end{proposition}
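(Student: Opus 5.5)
The plan is to substitute the ansatz $C^\alpha=C^\alpha(r)$, $r_t=W r_x$ into \eqref{PHC} and read the chain as a condition on one-forms. Since $C^{\alpha-1}_t=dC^{\alpha-1}(W r_x)$ and $r_x$ is arbitrary, \eqref{PHC} is equivalent to
\[
dC^{\alpha}=W^{*}dC^{\alpha-1}+C^0\,dC^{\alpha-1},\qquad \alpha\in\mathbb Z,
\]
where $W^*$ is the transpose action on one-forms. Set $V:=C^0$ and $L:=W+VI$. The chain then reads $dC^{\alpha}=L^*dC^{\alpha-1}$. For functions, \eqref{eq:dLnew} gives $d_Lf=\iota_L df=L^*df$, so this is exactly \eqref{rec1}: $dC^{\alpha+1}=d_LC^{\alpha}$ for $\alpha\ge0$.

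Going downward, we need $L$ invertible; this holds for nondegenerate reductions, where $dC^{-1}$ is not annihilated. The relation $dC^{\alpha}=L^*dC^{\alpha-1}$ for $\alpha\le 0$ then becomes $dC^{-\alpha-1}=(L^{-1})^*dC^{-\alpha}$. This is the content of \eqref{rec2}, once one checks that \eqref{rec2} is meant as $dC^{-\beta-1}=d_{L^{-1}}C^{-\beta}$, which is what the chain produces. The remaining work is to show that the conditions found this way, namely closedness of all the one-forms $(L^*)^kdV$ for $k\in\mathbb Z$, are equivalent to $N_L=0$ together with $d\,d_LV=0$.

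For the direction from the hypotheses to the reduction, assume $N_L=0$ and $d\,d_LV=0$. By the identity $d_L\circ\tau_L=\tau_L\circ d$, proved in the preceding Proposition, together with the anticommutation $d\,d_L=-d_Ld$ from Proposition \ref{pro:ddomega}, we argue inductively. If $dC^{\alpha}=d_LC^{\alpha-1}$, then
\[
d(d_LC^{\alpha})=-d_L\,dC^{\alpha}=-d_Ld_LC^{\alpha-1}.
\]
We then use $d_L^2=0$, which holds since $[d_L,d_L]=d_{[L,L]}=d_{N_L}$ up to a factor; this could alternatively be verified directly from $d_L=\tau_Ld\tau_{L^{-1}}$ in the invertible case. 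Hence $d_LC^{\alpha}$ is closed and $C^{\alpha+1}$ exists locally. The base step is exactly $d\,d_LV=0$. For negative indices we use part (1) of the preceding Proposition, that $N_{L^{-1}}=0$, and part (3), that $d\,d_{L^{-1}}V=0$, and repeat the same argument. This yields all $C^\alpha$ satisfying the chain with $W=L-VI$.

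For the converse, every reduction arises this way. We have already seen that $L$ and $V$ are determined by $W$ and $C^0$, and that $d\,d_LV=d\,dC^1=0$. The main obstacle is proving $N_L=0$ from the fact that the infinitely many one-forms $(L^*)^kdV$ are all closed. I would first try the identity
\[
d(L^*\theta)\ \text{ in terms of }\ d_L\theta
\]
and derive an expression for $d\bigl((L^*)^2dV\bigr)-\text{(terms in } d(L^*dV))$ involving $N_L$ contracted with $dV$, namely $dV\circ N_L(X,Y)=0$, and similarly $dC^{k}\circ N_L=0$ for all $k$. One then needs the differentials $dC^{k}$, for $k\in\mathbb Z$, to span $T^*M$ generically. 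This holds for genuinely $n$-component reductions; otherwise the reduction has fewer effective components. From this spanning property, $N_L=0$ follows. I expect this genericity argument, together with the precise contraction identity for $N_L$, to be the delicate step.
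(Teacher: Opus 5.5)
Your proposal is correct and follows the paper's own argument. In the forward direction you construct the $C^\alpha$ from $d\,d_L=-d_Ld$ and $d_L^2=0$, and from their $L^{-1}$ counterparts for negative indices, which is what the paper means by ``properties of the complexes''. In the converse you set $L=W+C^0I$ and use a Darboux compatibility computation. The paper's formula $\partial_j\partial_kC^\alpha-\partial_k\partial_jC^\alpha=N^i_{jk}\partial_iC^{\alpha-2}$ is exactly the contraction identity $d\bigl((L^*)^2\theta\bigr)=\pm\,\theta\circ N_L$, for $\theta$ and $L^*\theta$ closed, that you anticipate but do not write out.

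Your reading of \eqref{rec2} as $dC^{-\alpha-1}=d_{L^{-1}}C^{-\alpha}$ is correct. Your requirement that the $dC^k$ span $T^*M$ is also correct, and the paper skips it when it asserts that compatibility ``coincides'' with $N_L=0$. That spanning assumption also gives invertibility of $L$ more cleanly than your remark about $dC^{-1}$: every $dC^k=L^*dC^{k-1}$ lies in the image of $L^*$, so $L^*$ is surjective.
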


\begin{proof}
First of all we observe that the functions $C^\alpha$ are (locally) well-defined. This follows
 immediately from the  equation $d\cdot d_L V=0$ (i.e. \eqref{gtm2}), by the previous propositions and by the properties  of the complexes defined by $(d,d_L)$ and by $(d,d_{L^{-1}})$. 
   
Putting $C^{\alpha}=C^{\alpha}(r^1,...,r^n)$ in the Pavlov hydrodynamic chain  \eqref{PHC} and assuming
\begin{equation}\label{redPHC2}
r_t=(L-V I)r_x,
\end{equation}
 we get the system (the two terms with $C_0$ cancel out):
\begin{equation}\label{rec}
(L^h_k\partial_h C^{\alpha-1}-\partial_kC^{\alpha})r^k_x=0,
\end{equation}
that  is satisfied due to \eqref{rec1} and \eqref{rec2}. 

Conversely, we observe  that it is not restrictive to write $W$ in the form $W=L-C^0I$ obtaining  the system
\[\partial_kC^{\alpha}=L^h_k\partial_h C^{\alpha-1}.\]
It is immediate to check that Darboux's compatibility of this system coincides with the vanishing of the Nijenhuis torsion of $L$. Indeed
\[
\partial_j\partial_kC^{\alpha}=(\partial_jL^h_k)\partial_hC^{\alpha-1}+L^h_k\partial_h \partial_jC^{\alpha-1}=(\partial_jL^h_k)L^i_h\partial_iC^{\alpha-2}+L^h_k\partial_h(L^i_j)\partial_iC^{\alpha-2}+L^h_kL^i_j\partial_h\partial_iC^{\alpha-2},
\]
and thus
\[\partial_j\partial_kC^{\alpha}-\partial_k\partial_jC^{\alpha}=[(\partial_jL^h_k-\partial_kL^h_j)L^i_h+L^h_k\partial_hL^i_j-L^h_j\partial_hL^i_k]\partial_iC^{\alpha-2}=N^i_{jk}\partial_iC^{\alpha-2}.\]
Finally, applying the differential $d$ to both sides we get
  $d\cdot d_L C^{\alpha-1}=0$.
\end{proof}

From Proposition \ref{NiEvId} we have the following Corollary.
\begin{corollary}
In the special case where $L=\mu\circ$ comes from a regular $F$-manifold structure
 $(\circ,e)$ the reductions are defined by solutions 
\[\mu^1(r^1,\dots,r^n),\dots,\mu^n(r^1,\dots,r^n), \qquad
V(r^1,\dots,r^n),
\]
of  the system \eqref{GTmod}.
\end{corollary}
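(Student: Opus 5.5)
The Corollary follows by specialising Proposition \ref{Th_red_PHC} to $L=\mu\circ$ and translating its two hypotheses, $N_L=0$ and $d\cdot d_LV=0$, into the two equations of \eqref{GTmod}.

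\textbf{Step 1: vanishing torsion.} By Proposition \ref{NiEvId}, applied under the standing regularity assumptions $\mu^{1(\alpha)}\neq\mu^{1(\beta)}$ and $\mu^{2(\alpha)}\neq0$, the condition $N_{\mu\circ}=0$ is equivalent to $\mu$ being an eventual identity. By Remark \ref{rmk:evID}, being an eventual identity is in turn exactly the coordinate condition \eqref{gtm1}. So the tensor field $L=\mu\circ$ required by Proposition \ref{Th_red_PHC} corresponds precisely to solutions $\mu$ of \eqref{gtm1}.

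\textbf{Step 2: the equation for $V$.} I would compute $d\,d_LV$ in coordinates. By Proposition \ref{pro:ddomega}, $d_LV=\iota_L dV$, so $(d_LV)_j=L^s_j\partial_sV=c^s_{jh}\mu^h\partial_sV$. Applying $d$ gives
\[
(d\,d_LV)_{ij}=\partial_i\bigl(c^s_{jh}\mu^h\partial_sV\bigr)-\partial_j\bigl(c^s_{ih}\mu^h\partial_sV\bigr).
\]
In David--Hertling coordinates the structure constants are constant, so this expands to
\[
\mu^h\bigl(c^s_{jh}\partial_i\partial_sV-c^s_{ih}\partial_j\partial_sV\bigr)+\bigl(c^s_{jh}\partial_i\mu^h-c^s_{ih}\partial_j\mu^h\bigr)\partial_sV.
\]
Setting this to zero and relabelling dummy indices gives exactly \eqref{gtm2}, up to an overall sign.

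\textbf{Step 3: consistency.} Equation \eqref{gtm2} is a tensorial identity, since it is $d\,d_LV=0$ written out. It therefore holds in any coordinates, once the computation has been done in those where $c$ is constant. Conversely, by Proposition \ref{Th_red_PHC}, any solution of \eqref{GTmod} gives a reduction $r_t=(\mu-Ve)\circ r_x$, with the $C^\alpha$ built recursively from $C^0=V$.

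\textbf{Main obstacle.} The only delicate point is the bookkeeping of signs and index placement in Step 2, together with making sure that the non-degeneracy hypotheses of Proposition \ref{NiEvId} are assumed throughout. These hypotheses are genuinely needed for the equivalence in Step 1.
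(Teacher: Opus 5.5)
Your proposal is correct and follows the paper's own one-line route: Proposition \ref{NiEvId} together with Remark \ref{rmk:evID} turns $N_{\mu\circ}=0$ into \eqref{gtm1}, and $d\,d_LV=0$ written out in David--Hertling coordinates is \eqref{gtm2} up to sign, as the paper notes in the proof of Proposition \ref{Th_red_PHC}. The only thing to tighten is Step 3: the equation $d\,d_LV=0$ is coordinate-free, but the formula \eqref{gtm2} is its expression only in coordinates where the $c^i_{jk}$ are constant (in general coordinates extra terms $(\partial_ic^s_{jh}-\partial_jc^s_{ih})\mu^h\partial_sV$ appear), and this restricted setting is all the corollary needs.
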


\subsection{Some examples}
In all the examples $\mu=E=\sum_i r^i\frac{\partial}{\partial r^i}$.
\subsubsection{$2\times 2$ Jordan block}
\begin{equation*}
L=\begin{bmatrix}
r_{1} & 0\cr
r_{2} & r_{1} \cr
\end{bmatrix}.
\end{equation*}
\[V=F_1(r_1)r_2+F_2(r_1).\]

\subsubsection{$3\times 3$ Jordan block}
\begin{equation*}
L=\begin{bmatrix}
r_{1} & 0 & 0\cr
r_{2} & r_{1} &  0\cr
r_{3} & r_{2} & r_{1}
\end{bmatrix}.
\end{equation*}
\[V=F_1( r_1)r_3+\frac{1}{2}F'_1(r_1)r_2^{2}+F_2(r_1) r_{2}+F_3(r_1) 
.\]

\subsubsection{$2\times 2+1\times 1$ Jordan blocks}
\begin{equation}\label{Lblock}
L=\begin{bmatrix}
r_{1} & 0 & 0\cr
r_{2} & r_{1} &  0\cr
0 & 0 & r_{3}
\end{bmatrix}.
\end{equation}
\[V=F_1(r_1)r_2+F_2(r_1)+F_3(r_3).\]

\subsubsection{$4\times 4$ Jordan block}
\begin{equation*}
L=\begin{bmatrix}
r_{1} & 0 & 0 & 0\cr
r_{2} & r_{1} & 0 & 0\cr
r_3 & r_2 & r_1 & 0\cr
r_{4} & r_3 & r_{2} & r_{1}
\end{bmatrix}.
\end{equation*}
\[V=\left( F'_1( r_1) r_2+F_2( r_1)\right)r_3+ F_1( r_1) r_4+\frac{1}{6}F''_1(r_1)r_2^{3}+\frac{1}{2}F_2'(r_1)r_2^{2}+F_3(r_1)r_2+F_4(r_1).\]

\subsubsection{$3\times 3+1\times 1$ Jordan blocks}
\begin{equation*}
L=\begin{bmatrix}
r_{1} & 0 & 0 & 0\cr
r_{2} & r_{1} & 0 & 0\cr
r_3 & r_2 & r_1 & 0\cr
0 & 0 & 0 & r_{4}
\end{bmatrix}.
\end{equation*}
\[V=F_1( r_1)r_{3}+\frac{1}{2}F'_1(r_1)r_2^{2}+F_2(r_1)r_2+F_3( r_1)+F_4(r_4).\]

\subsubsection{$2\times 2+2\times 2$ Jordan blocks}
\begin{equation*}
L=\begin{bmatrix}
r_{1} & 0 & 0 & 0\cr
r_{2} & r_{1} & 0 & 0\cr
0 & 0 & r_3 & 0\cr
0 & 0 & r_{4} & r_{3}
\end{bmatrix}.
\end{equation*}
\[V=F_1(r_1)r_2+F_2(r_1)+F_3(r_3)r_4+F_4(r_3).\]

\subsubsection{$2\times 2+1\times 1+1\times 1$ Jordan blocks}
\begin{equation*}
L=\begin{bmatrix}
r_{1} & 0 & 0 & 0\cr
r_{2} & r_{1} & 0 & 0\cr
0 & 0 & r_3 & 0\cr
0 & 0 & 0 & r_{4}
\end{bmatrix}.
\end{equation*}
\[V=F_1(r_1)r_2+F_2(r_1)+F_3(r_3)+F_4(r_4).\]

\subsubsection{$5\times 5$ Jordan block}
\begin{equation*}
L=\begin{bmatrix}
r_{1} & 0 & 0 & 0 & 0\cr
r_{2} & r_{1} & 0 & 0 & 0\cr
r_3 & r_2 & r_1 & 0 & 0\cr
r_{4} & r_3 & r_{2} & r_{1} &0 \cr
r_{5} & r_4 & r_{3} & r_{2} &r_1
\end{bmatrix}.
\end{equation*}
\begin{eqnarray*}
V&=&(F'_1( r_1)r_2+F_2( r_1))r_4+F_1( r_1)r_5+\frac{1}{2}F_1'(r_1)r_3^2+\left(\frac{1}{2}F_1''(r_1)r_2^2+F_2'(r_1)r_2+F_3(r_1)\right)r_3\\&+&\frac{1}{24}F_1'''(r_1)r_2^4+\frac{1}{6}F_2''(r_1)r_2^3+\frac{1}{2}F_3'(r_1)r_2^2+F_4(r_1)r_2+F_5( r_1).
\end{eqnarray*}

\subsubsection{$4\times 4+1\times 1$ Jordan blocks}
\begin{equation*}
L=\begin{bmatrix}
r_{1} & 0 & 0 & 0 & 0\cr
r_{2} & r_{1} & 0 & 0 & 0\cr
r_3 & r_2 & r_1 & 0 & 0\cr
r_4 & r_3 & r_2 & r_1 & 0\cr
0 & 0 & 0 & 0 & r_{5}
\end{bmatrix}.
\end{equation*}
\[V=\left( F'_1( r_1) r_2+F_2( r_1)\right)r_3+ F_1( r_1)r_4+\frac{1}{6}F''_1(r_1)r_2^{3}+\frac{1}{2}F_2'(r_1)r_2^{2}+F_3(r_1)r_2+F_4(r_1)+F_5(r_5).\] 

\subsubsection{$3\times 3+2\times 2$ Jordan blocks}
\begin{equation*}
L=\begin{bmatrix}
r_{1} & 0 & 0 & 0 & 0\cr
r_{2} & r_{1} & 0 & 0 &0\cr
r_{3} & r_{2} & r_1 & 0 &0\cr
0 & 0 & 0 & r_4 & 0\cr
0 & 0 & 0 & r_{5} & r_{4}
\end{bmatrix}.
\end{equation*}
\[V=F_1( r_1)r_{{3}}+\frac{1}{2}F'_1(r_1)r_2^{2}+F_2(r_1) r_2+F_3(r_1)+F_4(r_4)r_5+F_5(r_4)
.\]

\subsubsection{$3\times 3+1\times 1+1\times 1$ Jordan blocks}
\begin{equation*}
L=\begin{bmatrix}
r_{1} & 0 & 0 & 0 & 0\cr
r_{2} & r_{1} & 0 & 0 &0\cr
r_{3} & r_{2} & r_1 & 0 &0\cr
0 & 0 & 0 & r_4 & 0\cr
0 & 0 & 0 & 0 & r_{5}
\end{bmatrix}.
\end{equation*}
\[V=F_1( r_1)r_{3}+\frac{1}{2}F'_1(r_1)r_2^{2}+F_2(r_1) r_2+F_3(r_1)+F_4(r_4)+F_5(r_5)
.\]

\subsubsection{$2\times 2+2\times 2+1\times 1$ Jordan blocks}
\begin{equation*}
L=\begin{bmatrix}
r_{1} & 0 & 0 & 0& 0\cr
r_{2} & r_{1} & 0 & 0 & 0\cr
0 & 0 & r_3 & 0 & 0 \cr
0 & 0 & r_4 & r_3 & 0 \cr
0 & 0 & 0 & 0 & r_{5}
\end{bmatrix}.
\end{equation*}
\[V=F_1(r_1)r_2+F_2(r_1)+F_3(r_3)r_4+F_4(r_3)+F_5(r_5).\]

\subsubsection{$2\times 2+1\times 1+1\times 1+1\times1$ Jordan blocks}
\begin{equation*}
L=\begin{bmatrix}
r_{1} & 0 & 0 & 0 & 0\cr
r_{2} & r_{1} & 0 & 0 & 0\cr
0 & 0 & r_3 & 0 & 0\cr
0 & 0 & 0 & r_{4} & 0\cr
0 & 0 & 0 & 0 & r_5
\end{bmatrix}.
\end{equation*}
\[V=F_1(r_1)r_2+F_2(r_1)+F_3(r_3)+F_4(r_4)+F_5(r_5).\]

In the regular case, choosing as eventual identity the Euler vector, the general solution
 of the equation \eqref{gtm2} was obtained in \cite{LPVG1} combining  two results:
 \begin{itemize}
 \item     The general solution of the equation $d\cdot d_L V=0$, for $L$ having a single Jordan block of size $n$, is given by
    \begin{equation}
        V = F_n(r^1) + \sum_{s>0} \dfrac{1}{s!} \sum_{k_1=2}^n \cdots \sum_{k_s=2}^{n} r^{k_1}\cdots r^{k_s}\left(\frac{d}{d r^1}\right)^{(s-1)}F_{n+s-\sum_{j=1}^s k_j}(r^1),
        \label{eq:gensol}
    \end{equation}
    where $F_1, \dots, F_n$ are arbitrary functions of $r^1$, and $F_{n+s-\sum_{j=1}^{s}k_j} = 0$ whenever ${n+s-\sum_{j=1}^{s}k_j \leq 0}$.
 \item A splitting Lemma reducing the solution in the case of multiple  Jordan blocks to the single Jordan block case  (this lemma has been also proved in \cite{BKM} with different methods).
 \end{itemize}
 
\section{Conclusions} 
In this paper we study finite-component regular reductions of dKP and Pavlov's hydrodynamic chain defined by systems of hydrodynamic type associated with a regular $F$-manifold structure $(\circ,e,E)$. They are described  by the system
\begin{subequations}
\begin{align}
(\mathcal{L}_{\mu}c^i_{hj}-c^i_{js}c^s_{ht}[e,\mu]^t)\mu^h&=e(V)\delta^i_j-e^i\partial_jV, \\
(c^s_{ih}\partial_s\partial_j V-c^s_{jh}\partial_s\partial_i V)\mu^h&=
(c^h_{sj}\partial_i\mu^s-c^h_{si}\partial_j\mu^s)\partial_hV, 
\end{align}
\end{subequations}
and by the system
\begin{subequations}
\begin{align}
\mathcal{L}_{\mu}c^i_{hj}-c^i_{js}c^s_{ht}[e,\mu]^t&=0,\\
(c^s_{ih}\partial_s\partial_j V-c^s_{jh}\partial_s\partial_i V)\mu^h&=
(c^h_{sj}\partial_i\mu^s-c^h_{si}\partial_j\mu^s)\partial_hV,
\end{align}
\end{subequations} 
respectively. 
In the first case, the tensor field defining the system has the form $W=\mu\circ$ while in the second case $W=(\mu-Ve)\circ$. We showed that in the case of dKP non-trivial reductions are possible only when the product $\circ$ is semisimple while in the case of Pavlov's hydrodynamic chain non-trivial reductions exist for any number of Jordan blocks and for Jordan blocks of arbitrary size.


\begin{thebibliography}{99}
\bibitem{ALimrn} 
A. Arsie and P. Lorenzoni,  \emph{F-manifolds with eventual identities, bidifferential calculus and twisted Lenard-Magri chains},
 International Mathematics Research Notices, Vol. 2013, No. 17, 3931--3976.
 
\bibitem{BKM} A. Bolsinov, A.Y. Konyaev and V.S. Matveev, \emph{Nijenhuis geometry IV: conservation laws, symmetries and integration of certain non-diagonalisable systems of hydrodynamic type in quadratures}, Nonlinearity 37 (10) 105003 (2024). 

\bibitem{DH}
L. David and C. Hertling, \emph{Regular F-manifolds: initial conditions and Frobenius metrics}, Ann. Sc. Norm. Super. Pisa Cl. Sci. (5)
Vol. XVII (2017), 1121--1152.

\bibitem{DS}
L. David and I. A. B. Strachan, \emph{Dubrovin's duality for $F$-manifolds with eventual identities}, Adv. Math. Volume 226, Issue 5, 20 March 2011, Pages 4031--4060.

\bibitem{HM} C. Hertling and Yu. I. Manin, \emph{Weak Frobenius manifolds}, International Mathematics Research Notices, Vol. 1999, No. 6, 277--286 (1999).

\bibitem{FK}
E. V. Ferapontov, K. R. Khusnutdinova, \emph{On the Integrability of (2+1)-Dimensional Quasilinear Systems}, Commun. Math. Phys. 248, 187–206 (2004). 

\bibitem{FN} A. Fr\"{o}licher, A. Nijenhuis,  \emph{Theory of vector-valued differential forms}, Proc. Ned. Acad. Wetensch. Ser. A 59 (1956), 338--359.

\bibitem{GLR} J. Gibbons, P. Lorenzoni, A. Raimondo,
\emph{Hamiltonian structures of reductions of the Benney system},
  Commun. Math. Phys. 287, 291–322 (2009).

\bibitem{GT1} J. Gibbons, S.P. Tsarev,
\emph{Reductions of the Benney equations},
 Phys. Lett. A {\bf 211} (1996), no. 1, 19--24.

\bibitem{GT2}  J. Gibbons, S.P. Tsarev, 
\emph{Conformal maps and reductions of the Benney equations},
  Phys. Lett. A  258  (1999),  no. 4-6, 263--271.

\bibitem{LM} P. Lorenzoni and F. Magri, \emph{A cohomological construction of integrable hierarchies of hydrodynamic type}, International Mathematics Research Notices, Volume 2005, No. 34, 2005, 2087--2100. 
  
\bibitem{LPR}
P. Lorenzoni, M. Pedroni, A. Raimondo, \emph{F-manifolds and integrable systems
 of hydrodynamic type}, Archivum Mathematicum 
47 (2011), 163--180.  

\bibitem{LP23} P. Lorenzoni and S. Perletti, \emph{Integrable systems, Frölicher–Nijenhuis bicomplexes and Lauricella bi-flat structures}, Nonlinearity 36(12), pp. 6925–6990 (2023).

\bibitem{LPVG1} P. Lorenzoni,  S.  Perletti,  and K. van Gemst,  \emph{Integrable Hierarchies and F-Manifolds with Compatible Connection}, Commun.  Math.  Phys.  Volume 406,  article number 91,  (2025). 

\bibitem{LPVG2} P. Lorenzoni,  S.  Perletti,  and K. van Gemst,  \emph{The generalised hodograph method for non-diagonalisable integrable systems of hydrodynamic type}, Nonlinearity 39 (2026).

\bibitem{LPVG3} P. Lorenzoni, S.  Perletti and K. van Gemst \emph{Semi-Hamiltonian Properties of a Class of Non-diagonalisable Systems of Hydrodynamic Type}, Commun. Math. Phys. 407, 46 (2026).

\bibitem{M}
Yu. I. Manin, \emph{$F$-manifolds with flat structure and Dubrovin's duality}, Adv. Math. Volume 198, issue 1, (2005), 5--26.

\bibitem{pavlov}
M. V. Pavlov,  \emph{Integrable hydrodynamic chains}, J. Math. Phys. 44, 4134--4156 (2003).

\bibitem{PS}
S. Perletti and I.A.B. Strachan, \emph{Regular $F$-manifolds with eventual identities}, J. Phys. A: Math. Theor. Volume 57, 475201 (2024).

\bibitem{ts91} S. P. Tsarev,
\emph{The geometry of Hamiltonian systems of hydrodynamic type. The generalised hodograph transform}, USSR Izv. 37 (1991) 397--419.

\end{thebibliography}
\end{document}